\newtheorem{assumption}{Assumption}
\newtheorem{thm}{Theorem}
\newtheorem{lem}{Lemma}
\newcommand{\myparatight}[1]{\smallskip\noindent{\bf {#1}:}~}
\newcommand{\xc}[1]{{{#1}}}
\newcommand{\RomanNumeralCaps}[1]
    {\MakeUppercase{\romannumeral #1}}
\algnewcommand\algorithmicforpara{\textbf{for}}
\algnewcommand\algorithmicdoinparallel{\textbf{do in parallel}}
\DeclareMathOperator*{\argmin}{arg\,min}
\begin{document}
%
\title{FLTrust: Byzantine-robust Federated Learning via Trust Bootstrapping}

\author{%
  \IEEEauthorblockN{%
    Xiaoyu Cao$^{*1}$\thanks{$^*$Equal contribution.},
    Minghong Fang$^{*2}$,
    Jia Liu$^2$,
    Neil Zhenqiang Gong$^1$%
  }%
  \IEEEauthorblockA{$^1$ Duke University, \{xiaoyu.cao, neil.gong\}@duke.edu}%
  \IEEEauthorblockA{$^2$ The Ohio State University, \{fang.841, liu.1736\}@osu.edu}%
}


%


\IEEEoverridecommandlockouts
\makeatletter\def\@IEEEpubidpullup{6.5\baselineskip}\makeatother
\IEEEpubid{\parbox{\columnwidth}{
    Network and Distributed Systems Security (NDSS) Symposium 2021\\
    21-24 February 2021, San Diego, CA, USA\\
    ISBN 1-891562-66-5\\
    https://dx.doi.org/10.14722/ndss.2021.24434\\
    www.ndss-symposium.org
}
\hspace{\columnsep}\makebox[\columnwidth]{}}

\maketitle


\begin{abstract}
Byzantine-robust federated learning aims to enable a service provider to learn an accurate global model when a bounded number of clients are malicious. The key idea of existing Byzantine-robust federated learning methods is that the service provider performs statistical analysis among the  clients' local model updates and removes suspicious ones, before aggregating them to update the global model. However, malicious clients can still corrupt the global models in these methods via sending carefully crafted local model updates to the service provider. The fundamental reason is that there is no root of trust in existing federated learning methods, i.e., from the service provider's perspective, every client could be malicious. 

In this work, we bridge the gap via proposing \emph{FLTrust}, a new federated learning method in which the service provider itself bootstraps trust. In particular, the service provider itself collects a clean small training dataset (called \emph{root dataset}) 
for the learning task and the service provider maintains a model (called \emph{server model}) based on it to bootstrap trust. In each iteration, the service provider first assigns a trust score to each local model update from the clients,  where a local model update has a lower trust score if its direction deviates more from the direction of the server model update. Then, the service provider normalizes the magnitudes of the local model updates such that they lie in the same hyper-sphere as the server model update in the vector space. Our normalization limits the impact of malicious local model updates with large magnitudes. Finally, the service provider computes the average of the normalized local model updates weighted by their trust scores as a global model update, which is used to update the global model.  Our extensive evaluations on six datasets from different domains show that our FLTrust is secure against both existing attacks and strong adaptive attacks. For instance, using a root dataset with less than 100 examples, FLTrust under adaptive attacks with 40\%-60\% of malicious clients can still train global models that are as accurate as the global models trained by FedAvg under no attacks, where FedAvg is a popular method in non-adversarial settings. 
\end{abstract}


\section{Introduction}
Federated learning (FL) \cite{Konen16,McMahan17} is an emerging distributed learning paradigm on decentralized data. In FL, there are multiple clients (e.g., smartphones, IoT devices, and edge devices) and a service provider (e.g., Google, Apple, and IBM). Each client holds a local training dataset; and the service provider enables the clients to jointly learn a model (called \emph{global model}) without sharing their raw local training data with the service provider. Due to its potential promise of protecting private/proprietary client data, particularly in the age of emerging privacy regulations such as General Data Protection Regulation (GDPR), FL has been deployed by high-profile companies. For instance, Google has deployed FL for next-word prediction on Android Gboard~\cite{gboard}; WeBank uses FL for credit risk prediction~\cite{webank}; and more than 10 leading pharmaceutical companies leverage FL for drug discovery in the project MELLODDY~\cite{melloddy}. Roughly speaking, FL iteratively performs the following three steps: 
the server provided by the service provider sends the current {global model} to the clients or a selected subset of them; each selected client trains a model (called \emph{local model}) via fine-tuning the global model using its own local training data and sends the local model updates back to the server\footnote{It is algorithmically equivalent to send local models instead of their updates to the server.}; and the server aggregates the local model updates to be a \emph{global model update} according to an \emph{aggregation rule} and uses it to update the global model. For instance, FedAvg~\cite{McMahan17}, a popular FL method  in non-adversarial settings developed by Google, computes the average of the local model updates weighted by the sizes of local training datasets as the global model update.   

However, due to its distributed nature, FL is vulnerable to  adversarial manipulations on malicious clients, which could be fake clients injected by an attacker or genuine clients compromised by an attacker. For instance, malicious clients can corrupt the global model via  poisoning their local training data (known as \emph{data poisoning attacks}~\cite{biggio2012poisoning,Nelson08poisoningattackSpamfilter}) or their local model updates sent to the server (called \emph{local model poisoning attacks}~\cite{fang2019local,bhagoji2019analyzing,bagdasaryan2020backdoor,xie2019dba}). The corrupted global model makes incorrect predictions for a large number of testing examples indiscriminately (called \emph{untargeted attacks}) \cite{fang2019local}, or it predicts attacker-chosen target labels for  attacker-chosen target testing examples while the predicted labels for other non-target testing examples are unaffected (called \emph{targeted attacks}) \cite{bagdasaryan2020backdoor,bhagoji2019analyzing,xie2019dba}. For instance, the global model in FedAvg can be arbitrarily manipulated by a single malicious client~\cite{Blanchard17,Yin18}.

Byzantine-robust FL methods \cite{Blanchard17,ChenPOMACS17,Mhamdi18,yang2019byzantine,Yin18} 
aim to address malicious clients. The goal therein is to learn an accurate global model when a bounded number of clients are malicious. Their key idea is to 
leverage \emph{Byzantine-robust aggregation rules}, which essentially compare the clients' local model updates and remove statistical outliers before using them to update the global model. For instance, Median~\cite{Yin18} computes the coordinate-wise median of the clients' local model updates as the global model update. However, recent studies~\cite{bhagoji2019analyzing,fang2019local} showed that existing Byzantine-robust FL methods are still vulnerable to local model poisoning attacks on malicious clients. The fundamental reason is that they have no root of trust. Specifically, from the server's perspective, every client could be malicious, providing no root of trust for the server to decide which local model updates are suspicious.

\myparatight{Our work} In this work, we propose a new Byzantine-robust FL method called \emph{FLTrust}. Instead of completely relying on the local model updates from clients, the server itself bootstraps trust in FLTrust. Specifically, the service provider manually collects a small clean training dataset (called \emph{root dataset}) for the learning task.
The server maintains a model (called \emph{server model}) for the root dataset just like how a client maintains a local model. In each iteration, the server updates the global model by considering both its server model update and the clients' local model updates. 

\myparatight{Our new Byzantine-robust aggregation rule} Specifically, we design a new Byzantine-robust aggregation rule in FLTrust to incorporate the root of trust.  
A model update can be viewed as a vector, which is characterized by its \emph{direction} and \emph{magnitude}. An attacker can manipulate both the {directions} and {magnitudes} of  the local model updates on the malicious clients. Therefore, our aggregation rule takes  both the directions and magnitudes into considerations when computing the global model update. Specifically, the server first assigns a trust score (TS) to a local model update, where the trust score is larger if the direction of the local model update is more similar  to that of the server model update. Formally, we use the \emph{cosine similarity} between a local model update and the server model update to measure the similarity of their directions. However, the cosine similarity alone is insufficient because a local model update, whose cosine similarity score is negative, can still have a negative impact on the aggregated global model update. Therefore, we further clip the cosine similarity score using the popular ReLU operation. The ReLU-clipped cosine similarity is our trust score.
 Then, FLTrust  normalizes each local model update by scaling it to have the same magnitude as the  server model update. Such normalization essentially projects each local model update to the same hyper-sphere where the server model update lies in the vector space, which limits the impact of the poisoned local model updates with large magnitudes. Finally, FLTrust computes the average of the normalized local model updates weighted by their trust scores as the global model update, which is used to update the global model.

\myparatight{FLTrust can defend against existing attacks} We perform extensive empirical evaluation on six datasets from different domains, including five image classification datasets (MNIST-0.1, MNIST-0.5, Fashion-MNIST, CIFAR-10, and CH-MNIST) and a smartphone-based human activity recognition dataset (Human Activity Recognition). We compare FLTrust with multiple existing Byzantine-robust FL methods including Krum~\cite{Blanchard17}, Trimmed mean~\cite{Yin18}, and Median~\cite{Yin18}. Moreover, we evaluate multiple poisoning attacks including label flipping attack (a data poisoning attack), Krum attack and Trim attack (untargeted local model poisoning attacks)~\cite{fang2019local}, as well as Scaling attack\footnote{The Scaling attack is also known as a backdoor attack.} (targeted local model poisoning attack)~\cite{bagdasaryan2020backdoor}. Our results show that FLTrust is secure against these attacks even if the root dataset has less than 100 training examples, while existing Byzantine-robust FL methods are vulnerable to them or a subset of them. For instance, a CNN global model learnt using FLTrust has a testing error rate of 0.04 under all the evaluated attacks on MNIST-0.1. However, the Krum attack can increase the testing error rate of the CNN global model learnt by Krum from 0.10 to 0.90. Moreover, we treat FedAvg under no attacks as a baseline and compare our FLTrust under attacks with it. Our results show that FLTrust under attacks achieves similar testing error rates to FedAvg under no attacks. We also study different variants of FLTrust and the impact of different system parameters on FLTrust. For instance, our results show that FLTrust works well once the root dataset distribution does not diverge too much from the overall training data distribution of the learning task. 

\myparatight{FLTrust can defend against adaptive attacks} 
An attacker can adapt its attack to  FLTrust. Therefore, we also evaluate FLTrust against adaptive attacks. Specifically, Fang et al.~\cite{fang2019local} proposed a general framework of local model poisoning attacks, which can be applied to optimize the attacks for any given aggregation rule. An attacker can substitute the aggregation rule of FLTrust into the framework and obtain an adaptive attack that is particularly optimized against FLTrust. Our empirical results show that FLTrust is still robust against such adaptive attacks. For instance, even when 60\% of the clients are malicious and collude with each other, FLTrust can still learn a CNN global model with testing error rate 0.04 for  MNIST-0.1. This testing error rate is the same as that of the CNN global model learnt by FedAvg under no attacks. 

Our contributions can be summarized as follows: 

\begin{itemize}
    \item We propose the first federated learning method FLTrust that bootstraps trust to achieve Byzantine robustness against malicious clients.  
    \item We empirically evaluate FLTrust against existing attacks. Our results show that FLTrust can defend against them. 
    \item We design adaptive attacks against FLTrust and evaluate their performance. Our results show that FLTrust is also robust against the adaptive attacks. 
\end{itemize}


\section{Background and Related Work}

\subsection{Background on Federated Learning (FL)} 
\label{sec:background}

\begin{figure*}[!t]
	\centering
	{\includegraphics[width= 0.8\textwidth]{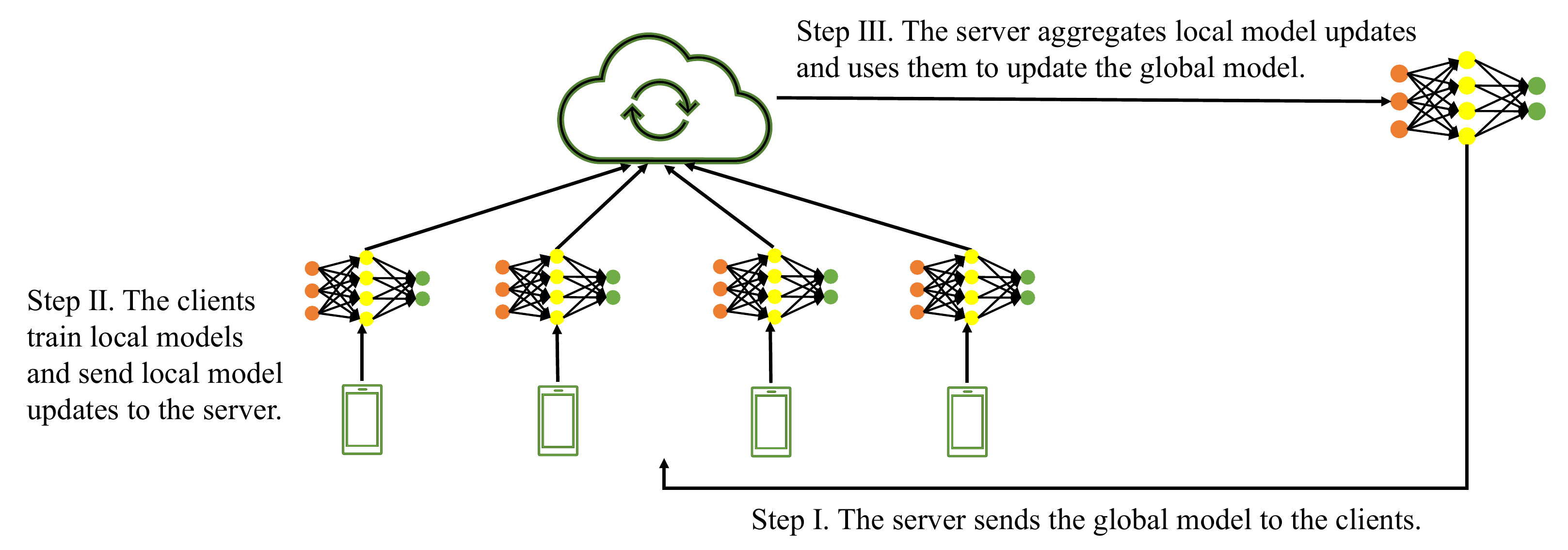}}
	\vspace{2mm}
	\caption{Illustration of the three steps in FL.}
	\label{fig:fl}
	\vspace{-4mm}
\end{figure*}

\xc{Suppose we have $n$ clients and each client  has a local training dataset $D_i, i=1,2,\cdots,n$. We use $D=\bigcup_{i=1}^{n} D_{i}$ to denote the joint training data. Each training example in $D$ is drawn from an unknown distribution $\mathcal{X}$. The clients aim to collaboratively learn a shared {global model} with the help of a service provider. 
 The optimal global model $\bm{w}^*$ is a solution to the following optimization problem: $\bm{w}^* = \argmin_{\bm{w}} F(\bm{w})$, where $F(\bm{w})=\mathbb{E}_{D\sim\mathcal{X}}[f(D,\bm{w})]$ is the expectation of the empirical loss $f(D,\bm{w})$ on the joint training dataset $D$. Since the expectation is hard to evaluate, the global model is often learnt via minimizing the empirical loss in practice, i.e., $\argmin_{\bm{w}} f(D,\bm{w})$ is the learnt global model.} Specifically, each client  maintains a local model for its local training dataset. Moreover,  a service provider's server maintains the global model via aggregating the local model updates from the clients. Specifically, FL iteratively performs the following three steps (illustrated in Figure \ref{fig:fl}):
\begin{itemize}
    \item \textbf{Step I: Synchronizing the global model with clients.} The server sends the current global model $\bm{w}$ to the clients or a subset of them. 
    \item \textbf{Step II: Training local models.} Each client trains a local model via fine-tuning the global model using  its local training dataset. Formally, the $i$th client solves the optimization problem  $\min_{\bm{w}_i} f(D_i,\bm{w}_i)$, where $\bm{w}_i$ is the client's local model. In particular, the client initializes its local model as the global model and uses stochastic gradient descent to update the local model for one or more iterations. Then, each client sends its local model update $\bm{g}_i=\bm{w}_i - \bm{w}$ (i.e., the difference between its local model and the current global model) to the server. 
    \item \textbf{Step III: Updating the global model via aggregating the local model updates.}  The server computes a \emph{global model update} $\bm{g}$ via aggregating the local model updates according to some \emph{aggregation rule}. Then, the server updates the global model using the global model update, i.e., $\bm{w} = \bm{w} + \alpha \cdot \bm{g}$, where $\alpha$ is the global learning rate.
\end{itemize}

The aggregation rule plays a key role in FL. Different FL methods essentially use different aggregation rules. Next, we discuss popular aggregation rules.  

\subsubsection{FedAvg}
FedAvg \cite{McMahan17} was proposed by Google. FedAvg computes the average of the clients' local model updates as the global model update, where each client is weighted by its number of training examples. Formally, $\bm{g}=\sum_{i=1}^n \frac{|D_i|}{N} \bm{g}_i$, where $|D_i|$ is the local training dataset size on the $i$th client and $N$ is the total number of training examples. FedAvg is the state-of-the-art FL method in non-adversarial settings. However, the global model in FedAvg can be arbitrarily manipulated by a single malicious client \cite{Blanchard17,Yin18}.  

\subsubsection{Byzantine-robust Aggregation Rules} 
\xc{Most Byzantine-robust FL methods use Byzantine-robust aggregation rules (see, e.g., \cite{Blanchard17,ChenPOMACS17,Mhamdi18,rajput2019detox,xie2019zeno,yang2019byrdie,Yin18,munoz2019byzantine}) that aim to tolerate Byzantine client failures. One exception is that Li et al.~\cite{li2019rsa} introduced a norm regularization term into the loss function.} Examples of Byzantine-robust aggregation rules include Krum \cite{Blanchard17}, Trimmed mean \cite{Yin18}, and Median \cite{Yin18}, which we discuss next. 

\myparatight{Krum \cite{Blanchard17}}
Krum selects one of the $n$ local model updates in each iteration as the global model update based on a square-distance score. Suppose at most $f$ clients are malicious. The score for the $i$th client is computed as follows:
\begin{align}
    s_i = \sum_{\bm{g}_j\in\Gamma_{i,n-f-2}} \Vert\bm{g}_j-\bm{g}_i\Vert_2^2,
\end{align}
where $\Gamma_{i,n-f-2}$ is the set of $n-f-2$ local model updates that have the smallest Euclidean distance to $\bm{g}_i$. The local model update of the client with the minimal score will be chosen as the global model update to update the global model. 

\myparatight{Trimmed Mean (Trim-mean) \cite{Yin18}} Trimmed mean is a coordinate-wise aggregation rule that considers each model parameter individually. For each model parameter, the server collects its values in all local model updates and sorts them. Given a trim parameter $k<\frac{n}{2}$, the server removes the largest $k$ and the smallest $k$ values, and then computes the mean of the remaining $n-2k$ values as the value of the corresponding parameter in the global model update. The trim parameter $k$ should be at least the number of malicious clients to make Trim-mean robust. In other words, Trim-mean can tolerate less than 50\% of malicious clients.   

\myparatight{Median \cite{Yin18}} Median is another coordinate-wise aggregation rule. Like Trim-mean, in Median, the server  also sorts the values of each individual parameter in all local model updates. Instead of using the mean value after trim, Median considers the median value of each parameter as the corresponding parameter value in the global model update. 

Existing FL methods suffer from a key limitation: they are vulnerable to sophisticated local model poisoning attacks on malicious clients, which we discuss in the next section. 

\subsection{Poisoning Attacks to Federated Learning}

Poisoning attacks generally refer to attacking the training phase of machine learning. 
One category of poisoning attacks called \emph{data poisoning attacks} aim to pollute the training data to corrupt the learnt model. Data poisoning attacks have been demonstrated to many machine learning systems such as spam detection~\cite{Nelson08poisoningattackSpamfilter,rubinstein2009antidote}, SVM~\cite{biggio2012poisoning}, recommender systems~\cite{fang2020influence,fang2018poisoning,poisoningattackRecSys16,YangRecSys17}, neural networks~\cite{Chen17,Gu17,liu2017trojaning,munoz2017towards,shafahi2018poison,Suciu18}, and graph-based methods~\cite{jia2020certified,Wang19,zhang2020backdoor}, as well as  distributed privacy-preserving data analytics~\cite{cao2019data,cheu2019manipulation}. 
FL is also vulnerable to data poisoning attacks~\cite{tolpegin2020data}, i.e., malicious clients can corrupt the global model via modifying, adding, and/or deleting examples in their local training datasets. For instance, a data poisoning attack known as \emph{label flipping attack} changes the labels of the training examples on malicious clients while keeping their features unchanged. 

Moreover, unlike centralized learning, FL is further vulnerable to \emph{local model poisoning attacks}~\cite{bagdasaryan2020backdoor,bhagoji2019analyzing,fang2019local,xie2019dba}, in which the malicious clients poison the local models or their updates sent to the server. Depending on the attacker's goal, local model poisoning attacks can be categorized into \emph{untargeted attacks} \cite{fang2019local} and \emph{targeted attacks} \cite{bagdasaryan2020backdoor,bhagoji2019analyzing,xie2019dba}.  Untargeted attacks aim to   corrupt the global model such that it makes incorrect predictions for a large number of testing examples indiscriminately, i.e., the testing error rate is high. Targeted attacks aim to  corrupt the global model such that it predicts attacker-chosen target labels for attacker-chosen target testing examples while the predicted labels for other non-target testing examples are unaffected. 

Note that any data poisoning attack can be transformed to a local model poisoning attack, i.e., we can compute the local model update on a malicious client's poisoned local training dataset and treat it as the poisoned local model update. Moreover, recent studies \cite{bhagoji2019analyzing,fang2019local} showed that local model poisoning attacks are more effective than data poisoning attacks against FL. Therefore, we focus on local model poisoning attacks in this work. Next, we discuss two state-of-the-art untargeted attacks (i.e., Krum attack and Trim attack) \cite{fang2019local} and one targeted attack (i.e., Scaling attack) \cite{bagdasaryan2020backdoor}. 

\myparatight{Krum attack and Trim attack \cite{fang2019local}} Fang et al.~\cite{fang2019local}  proposed a general framework for local model poisoning attacks, which can be applied to optimize the attacks for any given aggregation rule. 
Assuming the global model update without attack is $\bm{g}$,  
Fang et al.~\cite{fang2019local} formulate the attack as an optimization problem  that aims to change the global model update the most along the opposite direction of $\bm{g}$, by optimizing the poisoned local model updates sent from the malicious clients to the server. 
 Different aggregation rules lead to different instantiations of the optimization problem. 
Fang et al. applied the framework to optimize local model poisoning attacks for Krum (called Krum attack) as well as Trim-mean and Median (called Trim attack). 

\myparatight{Scaling attack \cite{bagdasaryan2020backdoor}} This attack aims to corrupt the global model to predict attacker-chosen target labels for attacker-chosen target testing examples, while the predicted labels for other testing examples are unaffected (i.e., the normal testing error rate remains the same). For instance, the attacker-chosen target testing examples can be normal testing examples embedded with a predefined backdoor trigger (e.g., a logo, a specific feature pattern). To achieve the goal, the Scaling attack adds trigger-embedded training examples with the attacker-chosen target label to the local training data of malicious clients. The local model updates on malicious clients are then computed based on the local training datasets augmented with the trigger-embedded examples. However, the poisoned local model updates may have limited impact on the global model update because it is aggregated over all clients' local model updates. For instance, in FedAvg, the effect of the attack will be diluted after the averaging~\cite{bagdasaryan2020backdoor}. Therefore, the attack further scales the poisoned local model updates on malicious clients by a factor that is much larger than 1. The scaled poisoned local model updates are then sent to the server. 

\begin{figure*}[!t]
	\centering
	{\includegraphics[width= 0.8\textwidth]{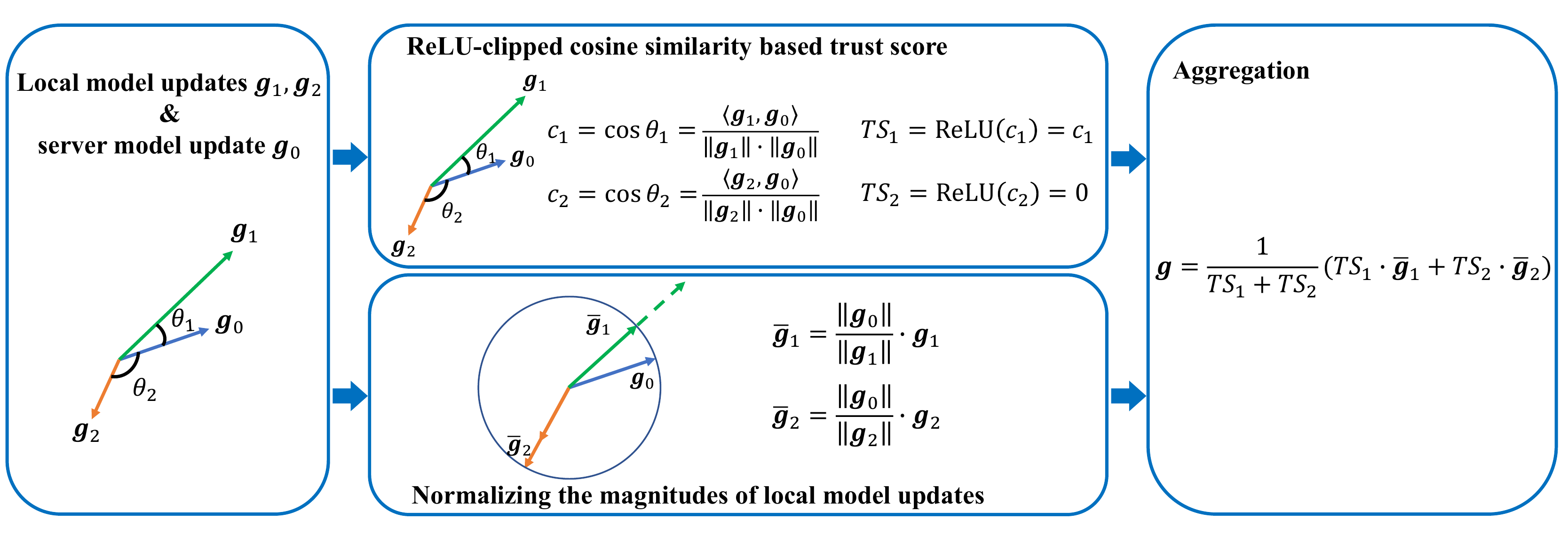}}
	\vspace{2mm}
	\caption{Illustration of our aggregation rule, which is applied in each iteration of FLTrust.} 
	\label{fltrust}
\end{figure*}


\section{Problem Setup}
\label{sec:prob}

\myparatight{Attack model} We follow the attack model in previous works~\cite{bagdasaryan2020backdoor,bhagoji2019analyzing,fang2019local}. Specifically, an attacker controls some malicious clients, which can be fake clients injected by the attacker or genuine ones compromised by the attacker. However, the attacker does not compromise the server. The malicious clients can send arbitrary local model updates to the server in each iteration of  the FL training process. \xc{Typically, an attacker has the following partial knowledge about an FL system: local training data and local model updates on the malicious clients, loss function, and learning rate. We notice that the Scaling attack \cite{bagdasaryan2020backdoor} only requires such partial knowledge. The Krum and Trim attacks \cite{fang2019local} are also applicable in this partial-knowledge setting. However, they are stronger in the full-knowledge setting \cite{fang2019local}, where the attacker knows everything about the FL training process, including the local training data and local model updates on all clients in each iteration, as well as the FL's aggregation rule. Therefore, we consider such full-knowledge setting to show that our method can defend against strong attacks.} Moreover, the attacker can perform adaptive attacks to FLTrust, which we discuss in Section \ref{sec:adaptive}.

\myparatight{Defense goals} We aim to design an FL method that achieves Byzantine robustness against malicious clients without sacrificing the fidelity and efficiency. In particular, we treat FedAvg under no attacks as a baseline to discuss fidelity and efficiency, i.e., our method should be robust against malicious clients while being as accurate and efficient as FedAvg under no attacks. Specifically, we aim to design a Byzantine-robust FL method that achieves the following defense goals:
\begin{itemize}
    \item {\bf Fidelity.} The method should not sacrifice the classification accuracy of the global model when there is no attack. In particular, under no attacks, the method should be able to learn a global model that is as accurate as  the global model learnt by FedAvg, a popular FL method in non-adversarial settings. 
    \item {\bf Robustness.} The method should preserve the classification accuracy of the global model in the presence of  malicious clients performing strong poisoning attacks. In particular, we aim to design a method that can learn a global model under attacks that is as accurate as the global model learnt by FedAvg under no attacks. Moreover, for targeted attacks, our goal further includes that the global model is unlikely to predict the attacker-chosen target labels for the attacker-chosen target testing examples. 
    \item {\bf Efficiency.} The method should not incur extra computation and communications overhead, especially to the clients. Clients in FL are often resource-constrained devices. Therefore, we aim to design a method that does not increase the workload of the clients, compared to FedAvg under no attacks. 
\end{itemize}

Existing Byzantine-robust FL methods such as Krum, Trim-mean, and Median do not satisfy the fidelity and robustness goals. Moreover, Krum does not satisfy the efficiency goal because it requires the server to compute pairwise distances of the clients' local model updates, which is computationally expensive when the number of clients is large. 

\myparatight{Defender's knowledge and capability} We consider the defense is performed on the server side. The server does not have access to the raw local training data on the clients, and the server does not know the number of malicious clients. However, the server has full access to the global model as well as the local model updates from all clients in each iteration.  Moreover, the server itself can collect a clean small training dataset (we call it root dataset) for the learning task. 
\xc{We require the root dataset to be clean from poisoning. The server can collect a clean root dataset by manual labeling. For instance, Google enlists its employees to type with Gboard to create the root dataset for its federated next-word prediction \cite{gboard}; when the learning task is digit recognition, the service provider can hire human workers to label some digits. Since we only require a small root dataset, e.g., 100 training examples, it is often affordable for the server to perform manual collection and labeling. The root dataset may or may not follow the same distribution as the overall training data distribution of the learning task. Our experimental results show that our method is effective once the root dataset distribution does not deviate too much from the overall training data distribution.} 


\section{Our FLTrust}

\subsection{Overview of FLTrust}

In our FLTrust, the server itself collects a small clean training dataset (called \emph{root dataset}) and maintains a model (called \emph{server model}) for it just like how a client maintains a local model. In each iteration, our FLTrust follows the general three steps of FL discussed in Section~\ref{sec:background}. However, our FLTrust is different from existing FL methods in Step II and Step III. Specifically, in Step II, each client trains its local model in existing FL methods, while the server also trains its server model via fine-tuning the current global model using the root dataset in FLTrust. In Step III, existing FL methods only consider the clients' local model updates to update the global model, which provides no root of trust. On the contrary, FLTrust considers both the server model update and the clients' local model updates to update the global model. 

Specifically, an attacker can manipulate the directions of the local model updates on the malicious clients  such that the global model is updated towards the opposite of the direction along which it should be updated; or the attacker  can scale up the magnitudes of the local model updates to dominate the aggregated global model update. Therefore,  we take both the directions and the magnitudes of the model updates into consideration. In particular, FLTrust first assigns a trust score (TS) to a local model update based on its direction similarity with the server model update. Formally, our trust score of a local model update is its ReLU-clipped cosine similarity with the server model update. Then, FLTrust normalizes each local model update by scaling it to have the same magnitude as the  server model update. Such normalization essentially projects each local model update to the same hyper-sphere where the server model update lies in the vector space, which limits the impact of the poisoned local model updates with large magnitudes. Finally, FLTrust computes the average of the normalized local model updates weighted by their trust scores as the global model update, which is used to update the global model. 

\subsection{Our New Aggregation Rule} 

Our new aggregation rule considers both the directions and magnitudes of the local model updates and the server model update to compute the global model update. Figure~\ref{fltrust} illustrates our aggregation rule. 

\myparatight{ReLU-clipped cosine similarity based trust score} An attacker can manipulate the directions of the local model updates on the malicious clients such that the global model update is driven to an arbitrary direction that the attacker desires. Without root of trust, it is challenging for the server to decide which direction is more ``promising'' to update the global model. In our FLTrust, the root trust origins from the direction of the server model update. In particular, if the direction of a local model update is more similar to that of the server model update, then the direction of the local model update may be more ``promising''. Formally, we use the \emph{cosine similarity}, a popular metric to measure the angle between two vectors, to measure the direction similarity between a local model update and the server model update. 

However, the cosine similarity alone faces a challenge. Specifically, if a local model update and the server model update are in  opposite directions, their cosine similarity is negative, which still has negative impact on the aggregated global model update (see our experimental results in Section~\ref{sec:exp_res}). Therefore, we exclude such local model updates from the aggregation by clipping the cosine similarity. In particular, we use the popular ReLU operation for clipping. Formally, our  trust score is  defined as follows:
\begin{align}
    TS_i = ReLU(c_i),
\end{align}
where $TS_i$ is the trust score for the $i$th local model update $\bm{g}_i$, and $c_i$ is the cosine similarity between  $\bm{g}_i$ and the server model update $\bm{g}_0$, i.e., $c_i=\frac{\langle\bm{g}_i,\bm{g}_0 \rangle}{||\bm{g}_i||\cdot ||\bm{g}_0||}$.   ReLU  is defined as follows: $ReLU(x)=x$ if $x>0$ and $ReLU(x)=0$ otherwise. 

\myparatight{Normalizing the magnitudes of local model updates} An attacker can also scale the magnitudes of the  local model updates on the malicious clients by a large factor such that they dominate the global model update. Therefore, we normalize the magnitude of each local model update. Without root of trust, it is challenging to decide what quantity we should normalize to. However,  the server has the root dataset to bootstrap trust in FLTrust. Therefore, we normalize each local model update such that it has the same magnitude as the server model update. Such normalization means that we rescale  local model updates to be the same hyper-sphere where the server model update lies in the vector space. Formally, we have the following: 
\begin{align}
    \bm{\bar{g}}_i =  \frac{||\bm{g}_0||}{||\bm{g}_i||}\cdot \bm{g}_i,   
\end{align}
where $\bm{g}_i$ is the local model update of the $i$th client in the current iteration, $\bm{\bar{g}}_i$ is the \emph{normalized local model update} of the $i$th client, $\bm{g}_0$ is the server model update, and $||\cdot||$ means $\ell_2$ norm of a vector. Our normalization ensures that no single local model update has too much impact on the aggregated global model update. Note that our normalization also enlarges a local model update with a small magnitude to have the same magnitude as the server model update. This is based on the intuition that local model updates with small magnitudes are more likely from benign clients, and thus enlarging their magnitudes helps reduce the impact of the poisoned local model updates from the malicious clients, leading to a better global model (see our experimental results in Section~\ref{sec:exp_res}). 

\begin{algorithm}[t]
	\caption{ModelUpdate($\bm{w}$, $D$, $b$, $\beta$, $R$)}\label{local_upate_algo}
	\begin{algorithmic}[1]
		\renewcommand{\algorithmicensure}{\textbf{Output:}}
		\Ensure Model update. 
		\State $\bm{w}^{0} \leftarrow \bm{w}$.
		\For {$r=1,2,\cdots,R$} 
		    \State Randomly sample a batch $D_b$ from $D$.
		    \State $\bm{w}^r \leftarrow \bm{w}^{r-1} - \beta \nabla Loss(D_b;\bm{w})$.
		\EndFor\\
		\Return $\bm{w}^R - \bm{w}$. 
	\end{algorithmic} 
\end{algorithm}

\begin{algorithm}[t]
	\caption{FLTrust}\label{global_upate_algo}
	\begin{algorithmic}[1]
		\renewcommand{\algorithmicrequire}{\textbf{Input:}}
		\renewcommand{\algorithmicensure}{\textbf{Output:}}
		\Require $n$ clients with local training datasets $D_i, i=1,2,\cdots,n$; a server with root dataset $D_0$; global learning rate $\alpha$; number of global iterations $R_g$; number of clients $\tau$  sampled in each iteration; local learning rate $\beta$; number of local iterations $R_l$; and batch size $b$.
		\Ensure Global model $\bm{w}$. 
		\State $\bm{w} \leftarrow $ random initialization.
		\For{$r=1,2,\cdots,R_g$}
		    \State // Step \RomanNumeralCaps{1}: The server sends the global model to clients.
		    \State The server randomly samples $\tau$ clients $C_1, C_2, \cdots, C_\tau$ from $\{1,2,\cdots,n\}$ and sends $\bm{w}$ to them.
		    \item[]
	        \State // Step \RomanNumeralCaps{2}: Training local models and server model.
	        \State // Client side.
		    \ForParallel{$i=C_1, C_2, \cdots, C_\tau$} 
		        \State $\bm{g}_i = ModelUpdate(\bm{w},D_{i},b,\beta,R_l)$.
		        \State Send $\bm{g}_i$ to the server.
		    \EndFor
		    \State // Server side.
		    \State $\bm{g}_0 = ModelUpdate(\bm{w},D_{0},b,\beta,R_l)$.
		    \item[]
		    \State // Step \RomanNumeralCaps{3}: Updating the global model via aggregating the local model updates.
		    \For{$i=C_1, C_2, \cdots, C_\tau$} 
		        \State $TS_i=ReLU\left(\frac{\langle\bm{g}_i,\bm{g}_0\rangle}{\Vert\bm{g}_i\Vert\Vert\bm{g}_0\Vert}\right)$.
		        \State $\bm{\bar{g}}_i =  \frac{||\bm{g}_0||}{||\bm{g}_i||}\cdot \bm{g}_i$.
		    \EndFor
            \State $\bm{g} = \frac{1}{\sum\limits_{j=1}^{\tau}TS_{C_j}} \sum\limits_{i=1}^\tau {TS}_{C_i} \cdot \bm{\bar{g}}_{C_i}$.
            \State $\bm{w} \leftarrow \bm{w} + \alpha\cdot\bm{g}$.
		\EndFor\\
		\Return $\bm{w}$.
	\end{algorithmic} 
\end{algorithm}

\myparatight{Aggregating the local model updates} We compute the average of the normalized local model updates weighted by their trust scores as the global model update:
\begin{align}
\label{agg_local_model}
    \bm{g} &= \frac{1}{\sum\limits_{j=1}^{n}TS_j} \sum_{i=1}^n {TS}_i \cdot \bm{\bar{g}}_i\nonumber\\
        &= \frac{1}{\sum\limits_{j=1}^{n}ReLU(c_j)} \sum_{i=1}^n {ReLU(c_i)} \cdot \frac{||\bm{g}_0||}{||\bm{g}_i||}\cdot\bm{g}_i,
\end{align}
where $\bm{g}$ is the global model update. Note that if the server selects a subset of clients in an iteration, the global model update is aggregated from the local model updates of the selected clients. In principle, the server model update can be treated as a local model update with a trust score of 1 and the global model update can be weighted average of  the clients' local model updates together with the server model update. However, such variant may negatively impact the global model because the root dataset is small and may not have the same distribution as the training data, but the server model update derived from it has a trust score of 1, reducing the contributions of the benign clients' local model updates (see our experimental results in Section~\ref{sec:exp_res}).  
Finally, we update the global model as follows:
\begin{align}   
\label{update_global_model}
\bm{w} \leftarrow \bm{w} + \alpha \cdot \bm{g},
\end{align}
where $\alpha$ is the global learning rate. 

\subsection{Complete FLTrust Algorithm}
 Algorithm \ref{global_upate_algo} shows our complete FLTrust method.  FLTrust performs $R_g$ iterations and  has three steps in each iteration. In Step I, the server sends the current global model to the clients or a subset of them. In Step II, the clients compute the local model updates based on the global model and their local training data, which are then sent to the server. Meanwhile, the server itself computes the server model update based on the global model and the root dataset. The local model updates and the server model update are computed by the function ModelUpdate in Algorithm~\ref{local_upate_algo} via performing stochastic gradient descent for $R_l$ iterations with a local learning rate $\beta$. In Step III, the server computes the global model update by aggregating the local model updates and uses it to update the global model with a global learning rate $\alpha$.

\xc{
\subsection{Formal Security Analysis}
\label{sec:sec_ana}

As we discussed in Section~\ref{sec:background}, the optimal global model $\bm{w}^*$ is a solution to the following optimization problem: $\bm{w}^* = \argmin_{\bm{w} \in \Theta} F(\bm{w}) \triangleq \mathbb{E}_{D\sim\mathcal{X}} \left[ f(D, \bm{w}) \right]$, where $\Theta$ is the parameter space of the global model, $D=\bigcup_{i=1}^{n} D_{i}$ is the joint training dataset of the $n$ clients, $\mathcal{X}$ is the training data distribution,  $f(D, \bm{w})$ is the empirical loss function on the training data $D$, and $F(\bm{w})$ is the expected loss function.    Our FLTrust is  an iterative algorithm to find a global model to minimize the empirical loss function $f(D, \bm{w})$. We show that, under some assumptions, the difference between the global model learnt by FLTrust under attacks and the optimal global model $\bm{w}^*$ is bounded. Next, we first describe our assumptions and then describe our theoretical results.  

\begin{assumption}
	\label{assumption_1}
	The expected loss function $F(\bm{w})$ is $\mu$-strongly convex and differentiable over the space $\Theta$ with $L$-Lipschitz continuous gradient. Formally, we have the following for any $\bm{w}, \widehat{\bm{w}} \in \Theta$:
	\begin{gather*}
	F(\widehat{\bm{w}}) \ge F(\bm{w}) + \left\langle {\nabla F(\bm{w}),\widehat{\bm{w}} - \bm{w}} \right\rangle + \frac{\mu}{2}{\left\| \widehat{\bm{w}} - \bm{w} \right\|^2},\\
	\left\| {\nabla F(\bm{w}) - \nabla F(\widehat{\bm{w}}) } \right\| \le L\left\| {\bm{w} - \widehat{\bm{w}}} \right\|, 
	\end{gather*}
	where $\nabla$ represents gradient, $\left\|\cdot \right\|$ represents $\ell_2$ norm, and $\left\langle \cdot, \cdot \right\rangle$ represents inner product of two vectors. Moreover, the empirical loss function $f(D,\bm{w})$ is $L_1$-Lipschitz probabilistically. Formally, 	for any $\delta  \in (0,1)$, there exists an $L_1$ such that:
	\begin{align}
	\text{Pr}\left\{ {\mathop {\sup }\limits_{\bm{w}, \widehat{\bm{w}} \in \Theta :\bm{w} \ne \widehat{\bm{w}}} \frac{\left\| {\nabla f(D,\bm{w}) - \nabla f(D,\widehat{\bm{w}})} \right\|} 
		{\left\| {\bm{w} - \widehat{\bm{w}}} \right\|} \le {L_1}} \right\} \ge 1 - \frac{\delta }{3}. \nonumber
	\end{align}
\end{assumption}	

\begin{assumption}
	\label{assumption_2}
	The gradient of the empirical loss function $\nabla f(D,\bm{w}^*)$ at the optimal global model $\bm{w}^*$ is bounded. Moreover, the gradient difference $h(D,\bm{w}) = \nabla f(D,\bm{w}) - \nabla f(D,\bm{w}^*)$ for any $\bm{w} \in \Theta$ is bounded. Specifically, there exist positive constants $\sigma_1$ and $\gamma_1$ such that for any unit vector $\bm{v}$, $\left\langle {\nabla f(D,\bm{w}^*),\bm{v}} \right\rangle$ is sub-exponential with $\sigma_1$ and $\gamma_1$; and there exist positive constants $\sigma_2$ and $\gamma_2$ such that for any $\bm{w} \in \Theta$ with $\bm{w} \ne \bm{w}^*$ and any unit vector $\bm{v}$, $\left\langle {h(D,\bm{w}) - \mathbb{E}\left[ {h(D,\bm{w})} \right],\bm{v}} \right\rangle / \left\| {\bm{w} - \bm{w}^*} \right\|$ is sub-exponential with $\sigma_2$ and $\gamma_2$.  
	Formally, for $\forall \left| \xi  \right| \le 1/\gamma_1$, $\forall \left| \xi  \right| \le 1/\gamma_2$, we have: 
	\begin{equation*}
	\mathop {\sup }\limits_{\bm{v} \in \bm{B}} \mathbb{E}\left[ {\exp (\xi \left\langle {\nabla f(D,\bm{w}^*), \bm{v}} \right\rangle)} \right]
	\le 
	e ^{\sigma_1 ^2 \xi^2 /2},
	\end{equation*}
	\begin{equation*}
	\mathop {\sup }\limits_{\bm{w} \in \Theta ,\bm{v} \in \bm{B}} \mathbb{E} \left[ {\exp \left( {\frac{{\xi \left\langle {h(D,\bm{w}) - \mathbb{E} \left[ {h(D,\bm{w})} \right],\bm{v}} \right\rangle }} {\left\| {\bm{w} - \bm{w}^*} \right\|}} \right)} \right]
	\le 
	e ^{\sigma_2 ^2 \xi^2 /2} ,
	\end{equation*}
	where $\bm{B}$ is the unit sphere $\bm{B}=\left\{ {\bm{v}:{{\left\| \bm{v} \right\|}} = 1} \right\}$.
\end{assumption}	

\begin{assumption}
	\label{assumption_3}
	Each client's local training dataset $D_i$ ($i=1,2,\cdots,n$) and the root dataset $D_0$ are sampled independently from the distribution $\mathcal{X}$.
\end{assumption}

\begin{thm}
	\label{theorem_1}
	Suppose Assumption~\ref{assumption_1}-\ref{assumption_3} hold and FLTrust uses $R_l=1$ and $\beta=1$. For an arbitrary number of malicious clients, the difference between the global model learnt by FLTrust and the optimal global model $\bm{w}^*$ under no attacks is bounded. Formally, we have the following with probability at least $1 - \delta$:
	\begin{align}
	\left\| \bm{w}^t - \bm{w}^* \right\| 
	\le
	\left( 1- \rho \right)^t \left\| \bm{w}^0 - \bm{w}^* \right\| +  12\alpha\Delta_1/ \rho, \nonumber
	\end{align}
	where $\bm{w}^t$ is the global model in the $t$th iteration, $\rho = 1-  \left( \sqrt {1 - {\mu^2}/(4L^2)} + 24\alpha\Delta_2 + 2\alpha L  \right)$, $\alpha$ is the learning rate,
	$\Delta_1 = \sigma_1 \sqrt \frac{2}{\left| D_0 \right|}  \sqrt{d\log 6 + \log(3/\delta)}$,
	$\Delta_2 = {\sigma_2}\sqrt {\frac{2}{{\left| D_0 \right|}}} \sqrt {d\log \frac{18L_2}{\sigma_2} + \frac{1}{2}d\log \frac{\left| {D_0} \right|}d + \log \left( \frac{{6\sigma_2^2r\sqrt {\left| D_0 \right|} }}{\gamma_2{\sigma_1}\delta } \right)},
	$
	$\left| D_0 \right|$ is the size of the root dataset,
	$d$ is the dimension of $\bm{w}$, ${L_2} = \max \left\{ {L,{L_1}} \right\}$, and $r$ is some positive number such that $\left\| \bm{w} - {\bm{w}^*} \right\| \le r\sqrt d$ for any $\bm{w} \in \Theta$ (i.e., the parameter space $\Theta$ is constrained). When $|1- \rho| < 1$, we have $\lim_{t \rightarrow \infty} \left\| \bm{w}^t - \bm{w}^* \right\| \leq 12\alpha\Delta_1/ \rho$. 
\end{thm}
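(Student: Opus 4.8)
The plan is to view FLTrust as a perturbed population gradient descent on $F$ and to prove a one-step contraction $\|\bm{w}^{t+1}-\bm{w}^*\| \le (1-\rho)\|\bm{w}^t-\bm{w}^*\| + 12\alpha\Delta_1$, then unroll it. Because $R_l=1$ and $\beta=1$, the routine ModelUpdate returns exactly a negative gradient, so $\bm{g}_i=-\nabla f(D_i,\bm{w}^t)$ and $\bm{g}_0=-\nabla f(D_0,\bm{w}^t)$. Iterating the one-step bound and using $\sum_{k=0}^{t-1}(1-\rho)^k \le 1/\rho$ gives $\|\bm{w}^t-\bm{w}^*\| \le (1-\rho)^t\|\bm{w}^0-\bm{w}^*\| + 12\alpha\Delta_1/\rho$; when $|1-\rho|<1$ the first term vanishes as $t\to\infty$, leaving the floor $12\alpha\Delta_1/\rho$. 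The whole argument thus reduces to the one-step inequality, for which I would combine two ingredients.

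The first ingredient is a geometric property of the aggregation rule that holds for an arbitrary number of arbitrarily-behaving malicious clients. Each normalized update $\bm{\bar{g}}_i$ has norm $\|\bm{g}_0\|$ and the trust scores $TS_i=ReLU(c_i)$ are nonnegative, so $\bm{g}$ is a convex combination of vectors of norm $\|\bm{g}_0\|$; hence $\|\bm{g}\|\le\|\bm{g}_0\|$. Moreover $\langle\bm{g},\bm{g}_0\rangle = \|\bm{g}_0\|\frac{\sum_i ReLU(c_i)\,c_i}{\sum_j ReLU(c_j)} \ge 0$, since $ReLU(c_i)\,c_i = ReLU(c_i)^2 \ge 0$. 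Combining, $\|\bm{g}-\bm{g}_0\|^2 = \|\bm{g}\|^2 - 2\langle\bm{g},\bm{g}_0\rangle + \|\bm{g}_0\|^2 \le 2\|\bm{g}_0\|^2$, i.e. $\|\bm{g}-\bm{g}_0\|\le\sqrt{2}\,\|\bm{g}_0\|$. This is exactly where the ReLU-clipping-plus-normalization design buys robustness: the aggregate can neither oppose the trusted server direction nor exceed its magnitude. The second ingredient is a uniform finite-sample bound for the root dataset: writing $\nabla f(D_0,\bm{w})-\nabla F(\bm{w}) = [\nabla f(D_0,\bm{w}^*)-\nabla F(\bm{w}^*)] + [h(D_0,\bm{w})-\mathbb{E}\,h(D_0,\bm{w})]$ and using $\nabla F(\bm{w}^*)=0$, I would bound the first bracket by $\Delta_1$ and the second by $\Delta_2\|\bm{w}-\bm{w}^*\|$. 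These follow from the sub-exponential tails of Assumption~\ref{assumption_2} via Bernstein-type bounds together with $\epsilon$-net arguments over the unit sphere $\bm{B}$ (for $\Delta_1$) and jointly over $\Theta$ and $\bm{B}$, using the probabilistic $L_1$-Lipschitz gradient of Assumption~\ref{assumption_1} to pass from the net to all of $\Theta$ (for $\Delta_2$); the net cardinalities, the sample size $|D_0|$, and the diameter $r\sqrt{d}$ of $\Theta$ are precisely what produce the logarithmic terms inside $\Delta_1$ and $\Delta_2$. The result is $\|\nabla f(D_0,\bm{w})-\nabla F(\bm{w})\| \le \Delta_1+\Delta_2\|\bm{w}-\bm{w}^*\|$ uniformly, with probability at least $1-\delta$.

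To assemble the recursion I would write $\bm{w}^{t+1}-\bm{w}^* = (\bm{w}^t-\bm{w}^*-\alpha\nabla F(\bm{w}^t)) - \alpha(\nabla f(D_0,\bm{w}^t)-\nabla F(\bm{w}^t)) + \alpha(\bm{g}-\bm{g}_0)$, using $\bm{g}_0=-\nabla f(D_0,\bm{w}^t)$. The first summand is the idealized gradient step; strong convexity and the $L$-Lipschitz gradient bound it by $\sqrt{1-\mu^2/(4L^2)}\,\|\bm{w}^t-\bm{w}^*\|$. The second is controlled by the concentration bound, contributing $\alpha\Delta_2\|\bm{w}^t-\bm{w}^*\| + \alpha\Delta_1$. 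The third is at most $\sqrt{2}\,\alpha\|\bm{g}_0\| = \sqrt{2}\,\alpha\|\nabla f(D_0,\bm{w}^t)\|$, and bounding $\|\nabla f(D_0,\bm{w}^t)\| \le L\|\bm{w}^t-\bm{w}^*\| + \Delta_1 + \Delta_2\|\bm{w}^t-\bm{w}^*\|$ (smoothness and concentration, again with $\nabla F(\bm{w}^*)=0$) turns it into further multiples of $L$ and $\Delta_2$ times $\|\bm{w}^t-\bm{w}^*\|$ plus a multiple of $\Delta_1$. Collecting the coefficients of $\|\bm{w}^t-\bm{w}^*\|$ yields $\sqrt{1-\mu^2/(4L^2)} + 24\alpha\Delta_2 + 2\alpha L = 1-\rho$ and the constant term yields $12\alpha\Delta_1$, where the numerical constants absorb the $\sqrt{2}$ factors and the constants of the concentration lemma. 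This is exactly the one-step inequality, and unrolling it completes the proof.

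The hard part is the uniform $\Delta_2$ concentration: it must hold simultaneously for every $\bm{w}\in\Theta$ and every direction $\bm{v}\in\bm{B}$, so the covering must net both sets and the probabilistic Lipschitz bound of Assumption~\ref{assumption_1} is needed to transfer the tail estimate from the finite net to all of $\Theta$, while carefully tracking how net size, $|D_0|$, and $d$ enter. By contrast the geometric aggregation lemma, though conceptually the source of Byzantine robustness, is short. One further subtlety worth flagging is that the aggregation-deviation term contributes $2\alpha L$ to $1-\rho$, which is not of smaller order than $\mu$; contraction ($\rho>0$) is therefore not automatic and holds only because $\sqrt{1-\mu^2/(4L^2)}<1$ and $\alpha$ is taken small enough that $2\alpha L + 24\alpha\Delta_2$ does not close that gap.
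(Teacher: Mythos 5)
Your proposal is correct and follows essentially the same route as the paper's proof: a strong-convexity contraction for the idealized gradient step, an $\varepsilon$-net/sub-exponential uniform concentration bound $\|\bm{g}_0-\nabla F(\bm{w})\|\lesssim \Delta_1+\Delta_2\|\bm{w}-\bm{w}^*\|$ for the root-dataset gradient, and a geometric bound on $\|\bm{g}-\bm{g}_0\|$ from ReLU clipping plus normalization, all assembled into a one-step recursion that is then unrolled. The only difference is cosmetic: you exploit $\langle\bm{g},\bm{g}_0\rangle\ge 0$ to get the sharper factor $\sqrt{2}\,\|\bm{g}_0\|$ where the paper's Lemma~\ref{lemma_1} settles for $2\|\bm{g}_0\|$ (yielding its $3\|\bm{g}_0-\nabla F(\bm{w})\|+2\|\nabla F(\bm{w})\|$ bound), so your raw coefficients are slightly smaller than, and hence imply, the stated $12\alpha\Delta_1$ and $24\alpha\Delta_2$.
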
	
\begin{proof}
See Appendix~\ref{sec:appendix}.
\end{proof}
}


\section{Adaptive Attacks}\label{sec:adaptive}

When an attacker knows our FLTrust is used to learn the global model, the attacker can adapt its attacks to FLTrust. Therefore, in this section, we design strong adaptive attacks to FLTrust. In particular, Fang et al. \cite{fang2019local} proposed the state-of-the-art framework that can optimize local model poisoning attacks for any given aggregation rule. We generate adaptive attacks to FLTrust via instantiating this framework with our aggregation rule. Next, we first describe the general attack framework in \cite{fang2019local}, then we discuss how to design adaptive attacks to FLTrust based on the framework.

\subsection{Local Model Poisoning Attack Framework}

The framework of local model poisoning attacks introduced in \cite{fang2019local} is general to all aggregation rules. Specifically, in each iteration of FL, the attacker aims to change the global model update the most along the opposite direction of the global model update under no attacks, by carefully crafting the local model updates on the malicious clients. Assuming the first $m$ clients are malicious. The local model poisoning attack is formulated as the following optimization problem\footnote{ Fang et al.  formulates the framework based on local models, which is equivalent to  formulating the framework based on  local model updates.}:
\begin{align}
&\max\limits_{\bm{g}_1',\bm{g}_2', \cdots,\bm{g}_m'}\bm{s}^T(\bm{g}-\bm{g}'),\nonumber\\
\text{subject to } &\bm{g} = \mathcal{A}(\bm{g}_1,\cdots,\bm{g}_m,\bm{g}_{m+1},\bm{g}_n),\nonumber\\
&\bm{g}'= \mathcal{A}(\bm{g}_1',\cdots,\bm{g}_m',\bm{g}_{m+1},\bm{g}_n),
\label{attack_framework}
\end{align}
where $\mathcal{A}$ is the aggregation rule of the FL method, $\bm{g}_i'$ is the poisoned local model update on the $i$th malicious client for $i=1,2,\cdots,m$, $\bm{g}$ is the global model update before attack, $\bm{g}'$ is the global model update after attack, and $\bm{s}$ is a column vector of the sign of the global model update before attack. 

\subsection{Adaptive Attack to Our FLTrust}

We leverage the state-of-the-art  framework to design adaptive attacks to our FLTrust. The idea is to instantiate the aggregation rule $\mathcal{A}$ with our aggregation rule in FLTrust in the framework. We denote by $\bm{e}_i= \frac{\bm{g}_i}{||\bm{g}_i||}$  the unit vector whose direction is the same as $\bm{g}_i$.  Then, our aggregation rule in Equation~(\ref{agg_local_model}) can be rewritten as follows:
\begin{align}
\label{rewrittenlocalrule}
\bm{g} = ||\bm{g}_0||\sum\limits_{i=1}^n \frac{ ReLU(c_i)}{\sum\limits_{j=1}^{n}ReLU{(c_j)}} \bm{e}_i.
\end{align}
Suppose there are $m$ malicious clients, and without loss of generality, we assume the first $m$ clients are malicious. These malicious clients send poisoned local model updates $\bm{g}_i', i=1,2,\cdots,m$ to the server. Let $\bm{e}_i'$ ($i=1,2,\cdots,m$) be the corresponding unit vectors. We note that the cosine similarity $c_i'$ between a poisoned local model update $\bm{g}_i'$ and the server model update $\bm{g}_0$ is the same as the cosine similarity between the corresponding unit vectors, i.e., $c_i'=\langle\bm{e}_i', \bm{e}_0\rangle$, where $\langle\cdot,\cdot\rangle$ means the inner product of two vectors. Therefore, we have the poisoned global model update $\bm{g}'$ under attacks as follows:
\begin{align}
\bm{g}' &= ||\bm{g}_0||\left[\sum\limits_{i=1}^m \frac{ReLU{(\langle\bm{e}_i', \bm{e}_0\rangle)}}{\sum\limits_{j=1}^{m}ReLU{(\langle\bm{e}_j', \bm{e}_0\rangle)} + \sum\limits_{j=m+1}^{n}ReLU{(c_j)}} \bm{e}_i'\right. \nonumber\\
&+  \!\! \left.\sum\limits_{i=m+1}^n \frac{ ReLU(c_i)}{\sum\limits_{j=1}^{m}ReLU{(\langle\bm{e}_j', \bm{e}_0\rangle)} + \sum\limits_{j=m+1}^{n}ReLU{(c_j)}} \bm{e}_i\right] \!\!\!.
\label{under_attack}
\end{align}

\begin{algorithm}[t]
	\caption{Our Adaptive Attack to FLTrust.}\label{attack_algo}
	\begin{algorithmic}[1]
		\renewcommand{\algorithmicrequire}{\textbf{Input:}}
		\renewcommand{\algorithmicensure}{\textbf{Output:}}
		\Require  $\bm{g}_0; \bm{g}_i$ for $i=1,2,\cdots,n; m; \sigma; \eta; \gamma; Q; V$.
		\Ensure  $\bm{e}_i'$ for $i=1,2,\cdots,m$.
		\State Compute $\bm{e}_0, \bm{e}_i, c_i$ for $i=1,2,\cdots,n$.
		\State Initialize $\bm{e}_i'$ using Trim attack for $i=1,2,\cdots,m$.
		\For {$v=1,2,\cdots,V$}
		\For {$i=1,2,\cdots,m$} 
		\For {$t=1,2,\cdots,Q$} 
		\State Randomly sample $\bm{u}\sim N(\bm{0},\sigma^2\bm{I})$.
		\State Compute $\nabla_{\bm{e}_i'} h$ according to (\ref{zero_grad}).
		\State Update $\bm{e}_i' = \bm{e}_i' + \eta \nabla_{\bm{e}_i'}h$. 
		\State Normalize $\bm{e}_i'$ such that $\left\| \bm{e}_i' \right\| =1$.
		\EndFor
		\EndFor
		\EndFor\\
		\Return $\bm{e}_i'$ for $i=1,2,\cdots,m$. 
	\end{algorithmic} 
\end{algorithm}

Substituting Equations (\ref{rewrittenlocalrule}) and (\ref{under_attack}) into (\ref{attack_framework}), and noticing that optimizing $\bm{g}_i'$ is equivalent to optimizing $\bm{e}_i'$ for $i=1,2,\cdots,m$, we can instantiate the attack framework in Equation (\ref{attack_framework}) as the following optimization problem:
\begin{align}
&\max\limits_{\bm{e}_1',\bm{e}_2',\cdots,\bm{e}_m'} \quad h(\bm{e}_1',\bm{e}_2',\cdots,\bm{e}_m'),
\label{adaptive}
\end{align}
where  $h(\bm{e}_1',\bm{e}_2',\cdots,\bm{e}_m')$ is defined as follows:
\begin{align}
& h(\bm{e}_1',\bm{e}_2',\cdots,\bm{e}_m') 
                = ||\bm{g}_0||\bm{s}^T \left[\sum\limits_{i=1}^n \frac{ ReLU(c_i)}{\sum\limits_{j=1}^{n}ReLU{(c_j)}} \bm{e}_i \right. 
\nonumber\\
&\;\left.- \sum\limits_{i=1}^b \frac{ReLU{(\langle\bm{e}_i', \bm{e}_0\rangle)}}{\sum\limits_{j=1}^{m}ReLU{(\langle\bm{e}_j', \bm{e}_0\rangle)} + \sum\limits_{j=m+1}^{n}ReLU{(c_j)}} \bm{e}_i' \right.  
\nonumber\\
&\; \left. - \!\!\!\! \left.\sum\limits_{i=m+1}^n \frac{ ReLU(c_i)}{\sum\limits_{j=1}^{m}ReLU{(\langle\bm{e}_j', \bm{e}_0\rangle)} + \sum\limits_{j=m+1}^{n}ReLU{(c_j)}} \bm{e}_i \right.   \right] \!\!, 
\end{align}
where $\bm{s}^T=\text{sgn}(\bm{g})^T$ is the sign of the global model update without attacks. Solving the optimization problem generates an adaptive attack to FLTrust. We consider a strong adaptive attacker who has full knowledge about the FL system when solving the optimization problem. In particular, $||\bm{g}_0||, \bm{s}, c_i\ (i=1,2,\cdots,n), \bm{e}_0$, and $\bm{e}_i\ (i=1,2,\cdots,n)$ are all available to the attacker. 

\myparatight{Solving the optimization problem} We use a standard gradient ascent approach to solve the optimization problem. Specifically, we can compute the gradient $\nabla_{\bm{e}_i'}h$ of the objective function $h$ with respect to each $\bm{e}_i'$ and move $\bm{e}_i'$ a small step along the gradient. Since the gradient $\nabla_{\bm{e}_i'}h$  involves a Jacobian matrix of $\bm{e}_i'$, it is not practical to directly compute the gradient. Therefore, we leverage a zeroth-order method \cite{cheng2018query,nesterov2017random} to compute the gradient, which is a standard method to solve such optimization problems with computationally intractable objective functions. Specifically, we compute the gradient 
$\nabla_{\bm{e}_i'} h$  as follows:
\begin{align}
\label{zero_grad}
\nabla_{\bm{e}_i'} h \approx \frac{h(\bm{e}_i' + \gamma\bm{u}) - h(\bm{e}_i') }{\gamma} \cdot \bm{u},
\end{align}
where $\bm{u}$ is a random vector sampled from the multivariate Gaussian distribution $N(\bm{0},\sigma^2\bm{I})$ with zero mean and diagonal covariance matrix, and $\gamma > 0$ is a smoothing parameter. 

We optimize $\bm{e}_i'$ one by one following the standard {coordinate ascent} approach, i.e., when optimizing $\bm{e}_i'$, all other $\bm{e}_j', j\neq i$ are fixed. Specifically, we use projected gradient ascent to iteratively optimize $\bm{e}_i'$. In the beginning, we initialize $\bm{e}_i'$ using  the Trim attack, i.e., we use the Trim attack to compute the poisoned local model updates and initialize $\bm{e}_i'$ as the corresponding unit vector. Then, in each iteration, we sample a random vector $\bm{u}$ from $N(\bm{0},\sigma^2\bm{I})$ and compute the gradient $\nabla_{\bm{e}_i'}h$ following Equation (\ref{zero_grad}). We multiply the  gradient by a step size $\eta$ and add it to $\bm{e}_i'$ to get the new $\bm{e}_i'$. Finally, we project $\bm{e}_i'$ to the unit sphere to ensure that $\bm{e}_i'$ is a valid unit vector. We repeat the gradient ascent process  for $Q$ iterations. Moreover, we repeat the iterations over the unit vectors for $V$ iterations. Algorithm \ref{attack_algo} shows our adaptive attack. We let $\bm{g}_i'=\Vert\bm{g}_0\Vert\cdot\bm{e}_i'$ after $\bm{e}_i'$ is solved for $i=1,2,\cdots,m$.


\section{Evaluation} 
We evaluate our FLTrust against both existing poisoning attacks to FL and adaptive attacks in this section. 

\subsection{Experimental Setup}
\label{sec:setup}

\subsubsection{Datasets} 
We use multiple datasets from different domains in our evaluation, including five image classification datasets and a human activity recognition dataset. \xc{We follow  previous work \cite{fang2019local} to distribute the training examples in a dataset among clients. 
Assuming there are $M$ classes in a dataset. We randomly split the clients into $M$ groups. A training example with label $l$ is assigned to group $l$ with probability $q>0$ and to any other group with probability $\frac{1-q}{M-1}$. Within the same group, data are uniformly distributed to each client.  $q$ controls the distribution difference of the clients' local training data. If $q=1/M$, then the clients' local training data are independent and identically distributed (IID), otherwise the clients' local training data are non-IID. Moreover, a larger $q$ indicates a higher degree of non-IID among the clients' local training data. One characteristic of FL is that clients often have non-IID local training data~\cite{Konen16,McMahan17}. Therefore, we will set $q>1/M$ by default to simulate the non-IID settings. 

Next, we use the MNIST dataset as an example to show the distribution process. Assume we have 100 clients in total and set $q=0.5$.  $M=10$ for the MNIST dataset. We first randomly split the clients into 10 groups, each containing 10 clients. For a training image of digit $l$ (e.g., $l=5$), we first assign it to group 5 with probability 0.5, and to any other group with probability $\frac{1-0.5}{10-1}\approx 0.056$. Once the group is determined, e.g., group 5 is chosen, we will select a client from group 5  uniformly at random and assign this training image to the selected client. }

\myparatight{MNIST-0.1} MNIST \cite{lecun2010mnist} is a 10-class digit image classification dataset, which consists of  60,000 training examples and 10,000 testing examples. We set $q=0.1$ in MNIST-0.1, which indicates local training data are IID among clients. We use MNIST-0.1 to show that FLTrust is also effective in the IID setting. 

\myparatight{MNIST-0.5} In MNIST-0.5, we simulate non-IID local training data among the clients via setting $q=0.5$. 

\myparatight{Fashion-MNIST} Fashion-MNIST \cite{xiao2017/online} is a 10-class fashion image classification task, which has a predefined training set of 60,000 fashion images and a testing set of 10,000 fashion images. Like the MNIST-0.5 dataset, we distribute the training examples to the clients with $q=0.5$ to simulate non-IID local training data. 

\myparatight{CIFAR-10} CIFAR-10 \cite{krizhevsky2009learning} is a color image classification dataset consisting of predefined 50,000 training examples and 10,000 testing examples. Each example belongs to one of the 10 classes. To simulate non-IID local training data, we distribute the training examples to clients with $q=0.5$. 

\myparatight{Human activity recognition (HAR)} The HAR dataset \cite{anguita2013public} consists of human activity data collected from the smartphones of 30 real-world users. The data are  signals from multiple sensors on a user's smartphone, and the task is to  predict the user's activity among 6 possible activities, i.e., WALKING, WALKING\_UPSTAIRS, WALKING\_DOWNSTAIRS, SITTING, STANDING, and LAYING. Each example includes 561 features and there are 10,299 examples in total. Unlike the previous datasets, we don't need to distribute the data to clients in this dataset, as each user is naturally considered as a client. \xc{HAR represents a real-world FL scenario, where each user is considered as a client. We use 75\% of each client's data as training examples and the rest 25\% as testing examples. We note that HAR has unbalanced local training data on clients: the maximum number of training examples on a client is 409, the minimum number is 281, and the mean is 343.} 

\begin{table*}[!t]\renewcommand{\arraystretch}{1}
\caption{The default FL system parameter settings.}
\centering
\vspace{2mm}
\addtolength{\tabcolsep}{-3pt}
\begin{tabular}{|c|c|c|c|c|c|c|c|} \hline 
{} & {Explanation} & {MNIST-0.1} & {MNIST-0.5} & {Fashion-MNIST} & {CIFAR-10} & {HAR} & {\xc{CH-MNIST}} \\ \hline
{$n$} & {\# clients} & \multicolumn{4}{c|}{100} & {30} & {\xc{40}}\\ \hline
{$\tau$} & {\# clients selected in each iteration} & \multicolumn{6}{c|}{$n$}\\ \hline
{$R_l$} & {\# local iterations} & \multicolumn{6}{c|}{1}\\ \hline
{$R_g$} & {\# global iterations} & \multicolumn{2}{c|}{2,000} & {2,500} & {1,500} & {1,000} & {\xc{2,000}}\\ \hline
{$b$} & {batch size} & \multicolumn{3}{c|}{32} & {64} & \multicolumn{2}{c|}{32}\\ \hline
{$\alpha\cdot\beta$} & {combined learning rate} & \multicolumn{2}{c|}{$3\times10^{-4}$} & {$6\times10^{-3}$} & {$2\times10^{-4}$} & {$3\times10^{-3}$}  & \xc{\makecell{$3\times10^{-4}$ (decay at the 1500th and \\ 1750th iterations with factor 0.9)}}\\ \hline
{$m/n$} & {fraction of malicious clients (\%)} & \multicolumn{6}{c|}{20}\\ \hline
{$m$} & {\# malicious clients} & \multicolumn{4}{c|}{20} & {6} & {\xc{8}}\\ \hline
{$f$} & {Krum parameter} & \multicolumn{6}{c|}{$m$}\\ \hline
{$k$} & {Trim-mean parameter} & \multicolumn{6}{c|}{$m$}\\ \hline
{$|D_0|$} & {size of the root dataset} & \multicolumn{6}{c|}{100}\\ \hline
\end{tabular}
\label{tab:param}
\vspace{-3mm}
\end{table*}

\xc{\myparatight{CH-MNIST} CH-MNIST \cite{kather2016multi} is a medical image classification dataset consisting of 5,000 images of histology tiles collected from colorectal cancer patients. Each example has $64\times 64$ gray-scale pixels and belongs to one of the 8 classes. We use 4,000 images selected randomly as the training examples and use the other 1,000 images as the testing examples. To simulate non-IID local training data, we distribute the training examples to clients with $q=0.5$.}

\subsubsection{Evaluated Poisoning Attacks} 

We consider both data poisoning attacks and local model poisoning attacks. For data poisoning attack, we consider the popular label flipping attack. For local model poisoning attacks, we evaluate Krum attack, Trim attack, and our adaptive attack (untargeted attacks) \cite{fang2019local},  as well as Scaling attack (targeted attack) \cite{bagdasaryan2020backdoor}.

\myparatight{Label flipping (LF) attack} We use the same label flipping attack setting as \cite{fang2019local}. In particular, for each training example on the malicious clients, we flip its label $l$ to $M-l-1$, where $M$ is the total number of labels and $l\in\{0,1,\cdots,M-1\}$. 

\myparatight{Krum attack} Krum attack is an untargeted local model poisoning attack optimized for the Krum aggregation rule. We use the default parameter settings in \cite{fang2019local} for the Krum attack. 

\myparatight{Trim attack} Trim attack is an untargeted local model poisoning attack optimized for the Trim-mean and Median aggregation rules. We use the default parameter settings in \cite{fang2019local} for the Trim attack. 

\myparatight{Scaling attack} Scaling attack is a targeted local model poisoning attack. Specifically, we consider the attacker-chosen target testing examples are normal testing examples with a predefined feature-pattern trigger embedded. Following \cite{bagdasaryan2020backdoor}, we use the data augmentation scheme in \cite{Gu17} to implement the Scaling attack. Specifically,  each malicious client copies $p$ fraction of its local training examples, embeds the  trigger to them, changes their labels to the attacker-chosen target label, and uses them to augment its local training data.  Then, in each iteration of FL, each malicious client computes its local model update based on the augmented local training data and scales it by a factor $\lambda\gg1$ before sending it to the server. 

Specifically, we use the same pattern trigger in \cite{Gu17} as our trigger for MNIST-0.1, MNIST-0.5, Fashion-MNIST, \xc{and CH-MNIST}, and we set the attacker-chosen target label as 0; for CIFAR-10, we consider the same pattern trigger and target label (i.e., ``bird") in \cite{bagdasaryan2020backdoor}; 
and for HAR, we create a feature-pattern trigger by setting every 20th feature to 0 and we set the target label as ``WALKING\_UPSTAIRS". Following previous work~\cite{bagdasaryan2020backdoor}, we set the scaling factor $\lambda=n$, where $n$ is the number of clients. In each dataset, the attacker-chosen target testing examples consist of the trigger-embedded normal testing examples whose true labels are not the target label. 

\xc{\myparatight{Adaptive attack}
We evaluate the adaptive attack proposed in Section \ref{sec:adaptive}.  Our adaptive attack leverages an zeroth-order optimization method. Following the suggestions by previous work \cite{cheng2018query,nesterov2017random}, we set $\sigma^2=0.5$ and $\gamma = 0.005$ in the zeroth-order method. Moreover, we set $\eta=0.01$ and $V=Q=10$ so that the adaptive attack converges. }

\subsubsection{Evaluation Metrics} For the LF attack, Krum attack, Trim attack, \xc{and adaptive attack}, we use the standard \emph{testing error rate} of the global model to evaluate an FL method since these attacks aim to increase the testing error rate. Specifically, the testing error rate of a global model is the fraction of testing examples whose labels are incorrectly predicted by the global model.  An FL method is more robust against these attacks if its global models achieve lower testing error rates under these attacks. The Scaling attack is a targeted attack, which aims to preserve the testing error rate of normal testing examples while making the global model predict the attacker-chosen target label for the attacker-chosen target testing examples. Therefore, other than the testing error rate, we further use \emph{attack success rate} to measure the Scaling attack. Specifically, the attack success rate is the fraction of the attacker-chosen target testing examples whose labels are predicted as the attacker-chosen target label by the global model. An FL method is more robust against the Scaling attack if its global model achieves a lower attack success rate. 

\begin{table}[!t]
\caption{The CNN architecture of the global model used for MNIST-0.1, MNIST-0.5, and Fashion-MNIST.}
\centering
\vspace{2mm}
\begin{tabular}{|c|c|} \hline 
{Layer} & {Size} \\ \hline
{Input} & { $28\times28\times1$}\\ \hline
{Convolution + ReLU} & { $3\times3\times30$}\\ \hline
{Max Pooling} & { $2\times2$}\\ \hline
{Convolution + ReLU} & { $3\times3\times50$}\\ \hline
{Max Pooling} & { $2\times2$}\\ \hline
{Fully Connected + ReLU} & {100}\\ \hline
{Softmax} & {10}\\ \hline
\end{tabular}
\label{tab:cnn}
\vspace{-2mm}
\end{table}

\subsubsection{FL System Settings} By default, we assume there are $n=100$ clients in total for each dataset except HAR and CH-MNIST. For HAR, the data are collected from 30 users, each of which is treated as a client. Therefore, HAR has 30 clients in total. \xc{For CH-MNIST, there are only 4,000 training examples in total and thus we assume 40 clients such that each client  has 100 training examples on average. Unless otherwise mentioned, we assume 20\% of the clients are malicious for each dataset. However, we will also explore the impact of the fraction of malicious clients.} Table \ref{tab:param} shows the default FL system settings that we will use unless otherwise mentioned.

\myparatight{Global models} We train different types of global models on different datasets to show the generality of our method. Specifically, for MNIST-0.1, MNIST-0.5, and Fashion-MNIST, we train a convolutional neural network (CNN) as the global model.  Table \ref{tab:cnn} shows the architecture of the CNN.  And we train a logistic regression (LR) classifier as the global model for HAR. \xc{For CIFAR-10 and CH-MNIST, we consider the widely used ResNet20 architecture \cite{he2016deep} as the global model.}

\myparatight{Parameter settings of the FL methods}  We compare FLTrust with FedAvg~\cite{Konen16,McMahan17}, Krum~\cite{Blanchard17}, Trim-mean~\cite{Yin18}, and Median~\cite{Yin18}. 
Details of these FL methods can be found in Section \ref{sec:background}. FedAvg is a popular FL method in non-adversarial settings, while Krum, Trim-mean, and Median are Byzantine-robust FL methods.  These methods all follow the three-step framework described in Algorithm \ref{global_upate_algo}, though they use different aggregation rules. Therefore, they all use the parameters $\tau$, $R_l$, $R_g$, $\alpha$, $\beta$, and $b$.   Following previous work \cite{fang2019local}, we set $\tau=n$, i.e., all clients are selected in each iteration; and we set $R_l=1$, in which we can treat the product of the global learning rate $\alpha$ and the local learning rate $\beta$ as a single learning rate. We set this combined learning rate on each dataset to achieve small training error rates and fast convergence. We set the batch size $b=32$ for all datasets except CIFAR-10, where we set $b=64$. We set the number of global iterations $R_g$ such that the FL methods converge. Specifically, $R_g=2,000$ for MNIST-0.1, MNIST-0.5, \xc{and CH-MNIST}; $R_g=2,500$ for Fashion-MNIST; $R_g=1,500$ for CIFAR-10; and $R_g=1,000$ for HAR. 

Krum further has the parameter $f$ and Trim-mean further has the trim parameter $k$, both of which are an upper bound of the number of malicious clients. We set $f=k=m$, which assumes that the server knows the exact number of malicious clients and gives advantages to Krum and Trim-mean. 

\myparatight{Root dataset} Our FLTrust requires a small root dataset. By default, we assume the root dataset has only 100 training examples. Moreover, we consider the following two cases depending on how the root dataset is created. 
\begin{itemize}
    \item {\bf Case I.} We assume the service provider can collect a representative root dataset for the learning task, i.e., the root dataset has the same distribution as the overall training data distribution of the learning task. In particular,  we sample the root dataset  from the union of the clients' clean local training data uniformly at random. For instance, for MNIST-0.5, we sample the root dataset  from its 60,000 training examples uniformly at random. 
    \item {\bf Case II.} We assume the root dataset has a distribution different from the overall training data distribution of the learning task. \xc{In particular, we assume the root dataset is biased towards a certain class. Specifically, we sample a fraction of the examples in the root dataset from a particular class (class 1 in our experiments) in the union of the clients' clean local training data and the remaining examples are sampled from the remaining classes uniformly at random, where we call the fraction \emph{bias probability}.  Note that, for all the datasets except HAR and CH-MNIST, the root dataset has the same distribution as the overall training data, i.e., Case II reduces to Case I, when the bias probability is 0.1 because they have 10 classes; for HAR and CH-MNIST,  Case II reduces to Case I when the bias probability is 0.17 and 0.125 because they have 6 and 8 classes, respectively. The root data distribution deviates more from the overall training data distribution when the bias probability is larger. }
\end{itemize}

In both cases, we exclude the sampled root dataset from the clients' local training data, indicating that the root dataset is collected independently by the service provider. Unless otherwise mentioned, we consider Case I. 

\begin{table}[!t]\renewcommand{\arraystretch}{1}
\centering
\caption{The testing error rates of different FL methods under different attacks and  the attack success rates of the Scaling attacks. The results for the Scaling attacks are in the form of ``testing error rate / attack success rate''.}
\vspace{1mm}
\centering
\addtolength{\tabcolsep}{-4pt}
\captionsetup[subfloat]{captionskip=0pt}

\subfloat[CNN global model, MNIST-0.1]{
\begin{tabular}{|c|c|c|c|c|c|}
	\hline
	& FedAvg	& Krum & Trim-mean  & Median  & FLTrust  \\
	\hline
	No attack & 0.04 & 0.10 & 0.06 & 0.06   & 0.04  \\
	\hline
	LF attack & 0.06 &  0.10    &   0.05  & 0.05 & 0.04  \\
	\hline
	Krum attack & 0.10 &  0.90   &  0.07    & 0.07 & 0.04  \\
	\hline
	Trim attack & 0.16  & 0.10    &  0.13    & 0.13 &  0.04 \\
	\hline
	Scaling attack  & 0.02 / 1.00  & 0.10 / 0.00 & 0.05 / 0.01  & 0.05 / 0.01 & 0.03 / 0.00 \\
    \hline
    \xc{Adaptive attack} & \xc{0.08}  & \xc{0.10} & \xc{0.11}  & \xc{0.13} & \xc{0.04} \\
    \hline
\end{tabular}
}

\subfloat[CNN global model, MNIST-0.5]{
\begin{tabular}{|c|c|c|c|c|c|}
	\hline
	& FedAvg & Krum & Trim-mean  & Median  & FLTrust  \\
	\hline
	No attack & 0.04 & 0.10 & 0.06 & 0.06   & 0.05  \\
	\hline
	LF attack & 0.06 &  0.10    &   0.06  & 0.06 & 0.05  \\
	\hline
	Krum attack & 0.10 &  0.91   &  0.14    & 0.15 & 0.05  \\
	\hline
	Trim attack  & 0.28 & 0.10    &  0.23    & 0.43 &  0.06 \\
	\hline
	Scaling attack  & 0.02 / 1.00  & 0.09 / 0.01 & 0.06 / 0.02  & 0.06 / 0.01 & 0.05 / 0.00 \\
    \hline
    \xc{Adaptive attack} & \xc{0.13}  & \xc{0.10} & \xc{0.22}  & \xc{0.90} & \xc{0.06} \\
    \hline
\end{tabular}
}

\subfloat[CNN global model, Fashion-MNIST]{
\begin{tabular}{|c|c|c|c|c|c|}
	\hline
	& FedAvg & Krum & Trim-mean  & Median  & FLTrust  \\
	\hline
	No attack & 0.10  & 0.16 & 0.14 & 0.14   & 0.11  \\
	\hline
	LF attack & 0.14 &  0.15    &   0.26  & 0.21 & 0.11  \\
	\hline
	Krum attack & 0.13 &  0.90   &  0.18    & 0.23 & 0.12  \\
	\hline
	Trim attack  & 0.90 & 0.16    &  0.24    & 0.27 &  0.14 \\
	\hline
	Scaling attack  & 0.90 / 1.00  &  0.16 / 0.03   &  0.17 / 0.85    & 0.16 / 0.05 &  0.11 / 0.02 \\
    \hline
    \xc{Adaptive attack} & \xc{0.90}  & \xc{0.18} & \xc{0.34}  & \xc{0.24} & \xc{0.14} \\
    \hline
\end{tabular}
}

\subfloat[ResNet20 global model, CIFAR-10]{
	\begin{tabular}{|c|c|c|c|c|c|}
		\hline
		& FedAvg & Krum & Trim-mean  & Median  & FLTrust  \\
		\hline
		No attack & 0.16 & 0.54 & 0.24 & 0.25   & 0.18  \\
		\hline
		LF attack & 0.21 &  0.56  &  0.27   & 0.45 & 0.18  \\
		\hline
		Krum attack & 0.24 &  0.90   &  0.52   &  0.64 &  0.18 \\
		\hline
		Trim attack & 0.81 &  0.51   &   0.72   & 0.75 & 0.20 \\
		\hline
		Scaling attack  & 0.90 / 1.00  &  0.44 / 0.07   &  0.22 / 0.96   & 0.25 / 0.96 & 0.18 / 0.02 \\
	    \hline
	    \xc{Adaptive attack} & \xc{0.90}  & \xc{0.58} & \xc{0.69}  & \xc{0.82} & \xc{0.20} \\
        \hline
	\end{tabular}
}

\subfloat[LR global model, HAR]{
\begin{tabular}{|c|c|c|c|c|c|}
	\hline
	& FedAvg & Krum & Trim-mean  & Median  & FLTrust  \\
	\hline
	No attack & 0.03 & 0.12 & 0.04 & 0.05   & 0.04  \\
	\hline
	LF attack & 0.17 &  0.10    &   0.05  & 0.05 & 0.04  \\
	\hline
	Krum attack & 0.03  & 0.22   &  0.05    & 0.05 & 0.04  \\
	\hline
	Trim attack  & 0.32 & 0.10    &  0.36    & 0.13 &  0.05 \\
	\hline
	Scaling attack  & 0.04 / 0.81  & 0.10 / 0.03    &  0.04 / 0.36    & 0.05 / 0.13 &  0.05 / 0.01 \\
	\hline
	\xc{Adaptive attack} & \xc{0.04}  & \xc{0.19} & \xc{0.05}  & \xc{0.06} & \xc{0.05} \\
    \hline
\end{tabular}
}

\xc{
\subfloat[\xc{ResNet20 global model, CH-MNIST}]{
\begin{tabular}{|c|c|c|c|c|c|}
	\hline
	& FedAvg & Krum & Trim-mean  & Median  & FLTrust  \\
	\hline
	No attack & 0.10 & 0.24 & 0.10 & 0.11   & 0.10  \\
	\hline
	LF attack & 0.12 &  0.39    &   0.15  & 0.13 & 0.12  \\
	\hline
	Krum attack & 0.11  & 0.95   &  0.13    & 0.13 & 0.12  \\
	\hline
	Trim attack  & 0.64 & 0.21    &  0.55    & 0.44 &  0.13 \\
	\hline
	Scaling attack  & 0.26 / 0.20  & 0.34 / 0.03    &  0.14 / 0.02    & 0.11 / 0.01 &  0.14 / 0.03 \\
	\hline
	Adaptive attack & 0.14  & 0.29 & 0.50  & 0.47 & 0.13 \\
    \hline
\end{tabular}
}
}

\label{tab:effective}
\vspace{-4mm}
\end{table}

\subsection{Experimental Results}\label{sec:exp_res}

\myparatight{Our FLTrust achieves the three defense goals} Recall that we have three defense goals (discussed in Section~\ref{sec:prob}): fidelity, robustness, and efficiency. Table \ref{tab:effective} shows the testing error rates of different FL methods under different attacks \xc{including our adaptive attack}, as well as the attack success rate of the Scaling attack on the six datasets. Our results show that FLTrust achieves the three goals. 

First, when there is no attack, our FLTrust has testing error rates similar to FedAvg, achieving the fidelity goal. However, existing Byzantine-robust FL methods may have higher or much higher testing error rates under no attacks. For instance, on MNIST-0.1, the testing error rates for FedAvg and FLTrust are both 0.04, while they are 0.10, 0.06, and 0.06 for Krum, Trim-mean, and  Median, respectively; on CH-MNIST, FedAvg, Trim-mean, and FLTrust achieve testing error rates 0.10, while Krum and Median achieve testing error rates 0.24 and 0.11, respectively. Our results indicate that  FLTrust is more accurate than  existing Byzantine-robust FL methods in non-adversarial settings. This is because existing Byzantine-robust FL methods exclude some local model updates when aggregating them as the global model update, while FLTrust considers all of them with the help of the root dataset. 

Second, our FLTrust achieves the robustness goal, while existing FL methods do not. Specifically, \xc{the testing error rates of FLTrust under the untargeted attacks including our adaptive attack are at most 0.04 higher than those of FedAvg under no attacks on the six datasets.}  On the contrary, every existing Byzantine-robust FL method has much higher testing error rates, especially under the untargeted attack that is optimized for the method. For instance, on MNIST-0.5, Krum attack increases the testing error rate of Krum from 0.10 to 0.91, while Trim attack increases the testing error rates of Trim-mean and Median from 0.06 to 0.23 and 0.43, respectively.  FedAvg may have lower testing error rates than  existing Byzantine-robust FL methods under the evaluated untargeted attacks. This is because these untargeted attacks are not optimized for FedAvg. Previous work~\cite{Blanchard17} showed that FedAvg can be arbitrarily manipulated by a single malicious client. 

\begin{table*}[!t]\renewcommand{\arraystretch}{1.1}
	\centering
	\caption{The testing error rates of different variants of FLTrust under different attacks and the attack success rates of the Scaling attacks on MNIST-0.5. The results for the Scaling attacks are in the form of ``testing error rate / attack success rate''. ``--'' means that the attacks are not applicable.}
	\vspace{2mm}
	\addtolength{\tabcolsep}{0pt}
	\begin{tabular}{|c|c|c|c|c|c|c|}
		\hline
		& No attack  & LF attack  & Krum attack  & Trim attack & Scaling attack & \xc{Adaptive attack} \\
		\hline
		FLTrust-Server & 0.21     & --   &  --  & -- & -- & -- \\
		\hline
		FLTrust-withServer & 0.07    & 0.08    &  0.09  & 0.10 & 0.08 / 0.01 & \xc{0.94} \\
		\hline
		FLTrust-NoReLU & 0.28   & 0.90 & 0.90 & 0.90 & 0.94 / 0.08 & \xc{0.90}\\
		\hline
		FLTrust-NoNorm & 0.05     & 0.06    &  0.06  & 0.08 & 0.94 / 0.08 & \xc{0.06}\\
		\hline
		FLTrust-ParNorm & 0.06    & 0.06    &  0.06  & 0.06 & 0.06 / 0.01 & \xc{0.06}\\
		\hline
		FLTrust & 0.05     & 0.05    &  0.05  & 0.06 & 0.05 / 0.00 & \xc{0.06}\\
		\hline
	\end{tabular}%
	\label{tab:variants}
\end{table*}%

\begin{figure}[!t]
	\centering
	\includegraphics[scale = 0.2]{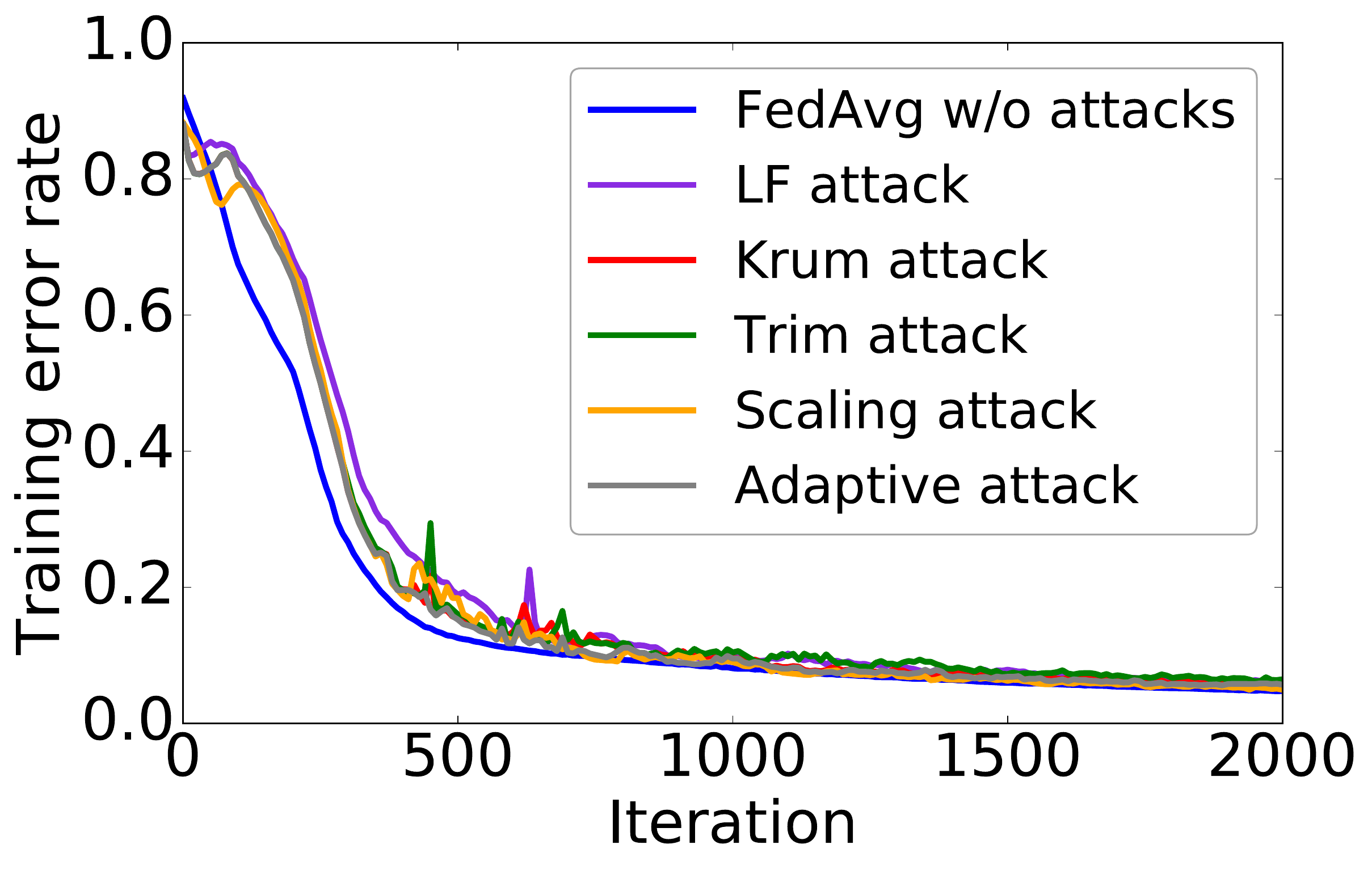}
	\caption{\xc{The training error rates vs. the number of iterations for FLTrust under different attacks and FedAvg without attacks on MNIST-0.5.}}
	\label{fig:convergence}
	\vspace{-2mm}
\end{figure}

Moreover,  for the Scaling attack, FLTrust substantially reduces its attack success rates. Specifically, the attack success rates for FLTrust are at most 0.03. On the contrary, the attack success rates for FedAvg are always high on the six datasets, and they are also high for the existing Byzantine-robust FL methods on multiple datasets, indicating that existing FL methods are not robust against the Scaling attack. One interesting observation is that the Scaling attack may decrease the testing error rates in some cases. \xc{We suspect the reason may be that the data augmentation in the Scaling attack positively impacts the aggregation of the local model updates. Specifically, the data augmentation in the Scaling attack improves the diversity of the training data, and thus helps the learned global model better generalize to the testing dataset.
}

Third, FLTrust achieves the efficiency goal. Specifically, in each iteration, FLTrust does not incur extra overhead to the clients; and compared to FedAvg, the extra computation incurred to the server by FLTrust includes computing a server model update, computing the trust scores, and normalizing the local model updates, which are negligible for the powerful server. Moreover, Figure \ref{fig:convergence} shows the training error rates versus the global iteration number for  FLTrust under different attacks and FedAvg under no attack on MNIST-0.5. Our results show that FLTrust converges as fast as FedAvg, which means that FLTrust also does not incur extra communications cost for the clients (each iteration of FL requires communications between clients and server), compared to FedAvg under no attacks. We note that Krum, Trim-mean, and Median do not incur extra overhead to the clients. However, Krum incurs significant computational overhead to the server when there are a large number of clients. This is because Krum requires calculating pairwise distance between local model updates in each iteration. 

\begin{table}[!t]\renewcommand{\arraystretch}{1.1}
	\centering
	\caption{The testing error rates of FLTrust under different attacks and the attack success rates of the Scaling attacks when the root dataset is sampled with different bias probabilities in Case II.}
	\vspace{2mm}
	\addtolength{\tabcolsep}{-5pt}
	\xc{
	\subfloat[MNIST-0.1]{
        \begin{tabular}{|c|c|c|c|c|c|c|}
		\hline
		{Bias probability} & {0.1}  & {0.2} & {0.4} & {0.6} & {0.8}  & {1.0}\\
		\hline
		{No attack} & {0.04}  & {0.04} & {0.04} & {0.05} & {0.05}  & {0.34}\\
		\hline
		{LF attack} & {0.04}  & {0.04} & {0.04} & {0.05} & {0.78}  & {0.84}\\
		\hline
		{Krum attack} & {0.04}  & {0.04} & {0.07} & {0.89} & {0.89}  & {0.89}\\
		\hline
		{Trim attack} & {0.04}  & {0.05} & {0.08} & {0.12} & {0.46}  & {0.89}\\
		\hline
		{Scaling attack} & {0.03 / 0.00}  & {0.03 / 0.01} & {0.04 / 0.00} & {0.04 / 0.00} & {0.06 / 0.01}  & {0.42 / 0.01}\\
		\hline
		{Adaptive attack} & {0.04}  & {0.05} & {0.08} & {0.12} & {0.90}  & {0.90}\\
		\hline
	    \end{tabular}%
        \label{mnist-0.1-bias}
    }}
    
	\subfloat[MNIST-0.5]{
        \begin{tabular}{|c|c|c|c|c|c|c|}
		\hline
		{Bias probability} & {0.1}  & {0.2} & {0.4} & {0.6} & {0.8}  & {1.0}\\
		\hline
		{No attack} & {0.05}  & {0.05} & {0.06} & {0.08} & {0.11}  & {0.80}\\
		\hline
		{LF attack} & {0.05}  & {0.05} & {0.08} & {0.10} & {0.25}  & {0.89}\\
		\hline
		{Krum attack} & {0.05}  & {0.05} & {0.08} & {0.12} & {0.86}  & {0.89}\\
		\hline
		{Trim attack} & {0.06}  & {0.06} & {0.08} & {0.12} & {0.16}  & {0.89}\\
		\hline
		{Scaling attack} & {0.05 / 0.00}  & {0.05 / 0.01} & {0.06 / 0.00} & {0.07 / 0.01} & {0.12 / 0.00}  & {0.86 / 0.01}\\
		\hline
		\xc{Adaptive attack} & \xc{0.06}  & \xc{0.07} & \xc{0.08} & \xc{0.13} & \xc{0.90}  & \xc{0.90}\\
		\hline
	    \end{tabular}%
        \label{mnist-0.5-bias}
    }

    \xc{
    \subfloat[Fashion-MNIST]{
        \begin{tabular}{|c|c|c|c|c|c|c|}
		\hline
		{Bias probability} & {0.1}  & {0.2} & {0.4} & {0.6} & {0.8}  & {1.0}\\
		\hline
		{No attack} & {0.11}  & {0.11} & {0.12} & {0.15} & {0.16}  & {0.90}\\
		\hline
		{LF attack} & {0.11}  & {0.11} & {0.12} & {0.12} & {0.14}  & {0.90}\\
		\hline
		{Krum attack} & {0.12}  & {0.12} & {0.16} & {0.90} & {0.90}  & {0.90}\\
		\hline
		{Trim attack} & {0.14}  & {0.14} & {0.15} & {0.21} & {0.90}  & {0.90}\\
		\hline
		{Scaling attack} & {0.11 / 0.02}  & {0.12 / 0.04} & {0.12 / 0.04} & {0.13 / 0.02} & {0.15 / 0.03}  & {0.90 / 0.00}\\
		\hline
		{Adaptive attack} & {0.14}  & {0.14} & {0.16} & {0.90} & {0.90}  & {0.90}\\
		\hline
	    \end{tabular}%
        \label{fashionmnist-bias}
    }}

     \xc{
    \subfloat[CIFAR-10]{
        \begin{tabular}{|c|c|c|c|c|c|c|}
		\hline
		{Bias probability} & 0.1  & 0.2 & 0.4 & 0.6  & 0.8 & 1.0 \\
		\hline
		{No attack} & 0.18  & 0.18 & 0.18 & 0.21 & 0.90 & 0.90 \\
		\hline
		{LF attack} & 0.18  & 0.19 & 0.20  & 0.24 & 0.90 & 0.90 \\
		\hline
		{Krum attack} & 0.18  & 0.18 & 0.19 & 0.33  & 0.90  & 0.90\\
		\hline
		{Trim attack} & 0.20 & 0.20  &  0.24 & 0.63 & 0.90  & 0.90\\
		\hline
		{Scaling attack} & 0.18 / 0.02  & 0.18 / 0.00 & 0.18 / 0.03 & 0.22 / 0.04  &  0.90 / 0.00 & 0.90 / 0.00 \\
		\hline
		{Adaptive attack} & 0.20  & 0.20 & 0.27 & 0.68 & 0.90 & 0.90\\
		\hline
	    \end{tabular}%
        \label{CIFAR-10-bias}
    }}
    
    \xc{
    \subfloat[HAR]{
        \begin{tabular}{|c|c|c|c|c|c|c|}
		\hline
		{Bias probability} & {0.17}  & {0.2} & {0.4} & {0.6} & {0.8}  & {1.0}\\
		\hline
		{No attack} & {0.04}  & {0.04} & {0.06} & {0.06} & {0.07}  & {0.48}\\
		\hline
		{LF attack} & {0.04}  & {0.05} & {0.06} & {0.05} & {0.07}  & {0.48}\\
		\hline
		{Krum attack} & {0.04}  & {0.05} & {0.05} & {0.05} & {0.09}  & {0.48}\\
		\hline
		{Trim attack} & {0.05}  & {0.05} & {0.06} & {0.09} & {0.14}  & {0.48}\\
		\hline
		{Scaling attack} & {0.05 / 0.01}  & {0.05 / 0.01} & {0.06 / 0.02} & {0.06 / 0.03} & {0.07 / 0.05}  & {0.48 / 0.34}\\
		\hline
		{Adaptive attack} & {0.05}  & {0.05} & {0.06} & {0.09} & {0.48}  & {0.48}\\
		\hline
	    \end{tabular}%
        \label{har-bias}
    }}
    \xc{
    \subfloat[CH-MNIST]{
        \begin{tabular}{|c|c|c|c|c|c|c|}
		\hline
		{Bias probability} & {0.125}  & {0.2} & {0.4} & {0.6} & {0.8}  & {1.0}\\
		\hline
		{No attack} & {0.10}  & {0.10} & {0.11} & {0.13} & {0.13}  & {0.89}\\
		\hline
		{LF attack} & {0.12}  & {0.12} & {0.12} & {0.17} & {0.21}  & {0.89}\\
		\hline
		{Krum attack} & {0.12}  & {0.12} & {0.14} & {0.17} & {0.19}  & {0.89}\\
		\hline
		{Trim attack} & {0.13}  & {0.13} & {0.14} & {0.20} & {0.20}  & {0.89}\\
		\hline
		{Scaling attack} & {0.14 / 0.03}  & {0.14 / 0.02} & {0.15 / 0.02} & {0.16 / 0.06} & {0.14 / 0.01}  & {0.89 / 0.01}\\
		\hline
		{Adaptive attack} & {0.13}  & {0.14} & {0.14} & {0.89} & {0.89}  & {0.89}\\
		\hline
	    \end{tabular}%
        \label{har-bias}
    }}
	\label{tab:sample_bias}
	\vspace{-2mm}
\end{table}%

\myparatight{Comparing different variants of FLTrust} FLTrust has three key features: a root dataset, using ReLU to clip the cosine similarity scores, and normalizing each local model update. Depending on how each feature is used, we consider the following five variants of FLTrust:
\begin{itemize}
    \item {\bf FLTrust-Server.} In this variant, the server only uses the root dataset to train the global model. Therefore, there is no communications between the clients and the server during the training process. \xc{We use this variant to show that the server cannot obtain a good model using its root dataset alone. In other words, even if some clients are malicious, communicating with clients still improves the global model. } 
    \item {\bf FLTrust-withServer.} In this variant, the server computes the weighted average of the clients' local model updates together with the server model update whose trust score is 1.   
    \item {\bf FLTrust-NoReLU.} In this variant, the server does not use ReLU to clip the cosine similarity scores of the local model updates when computing their trust scores. 
    \item {\bf FLTrust-NoNorm.} In this variant, the server does not normalize the local model updates to have the same magnitude as the server model update. 
    \item {\bf FLTrust-ParNorm.} In this variant, the server applies partial normalization, i.e., only normalizes the local model updates whose magnitudes are larger than that of the server model update to  have the same magnitude as the server model update. 

\end{itemize}

\begin{figure}[!t] 
	\centering
	\vspace{-3mm}		
	\subfloat[Testing error rate]{\includegraphics[width=0.24 \textwidth]{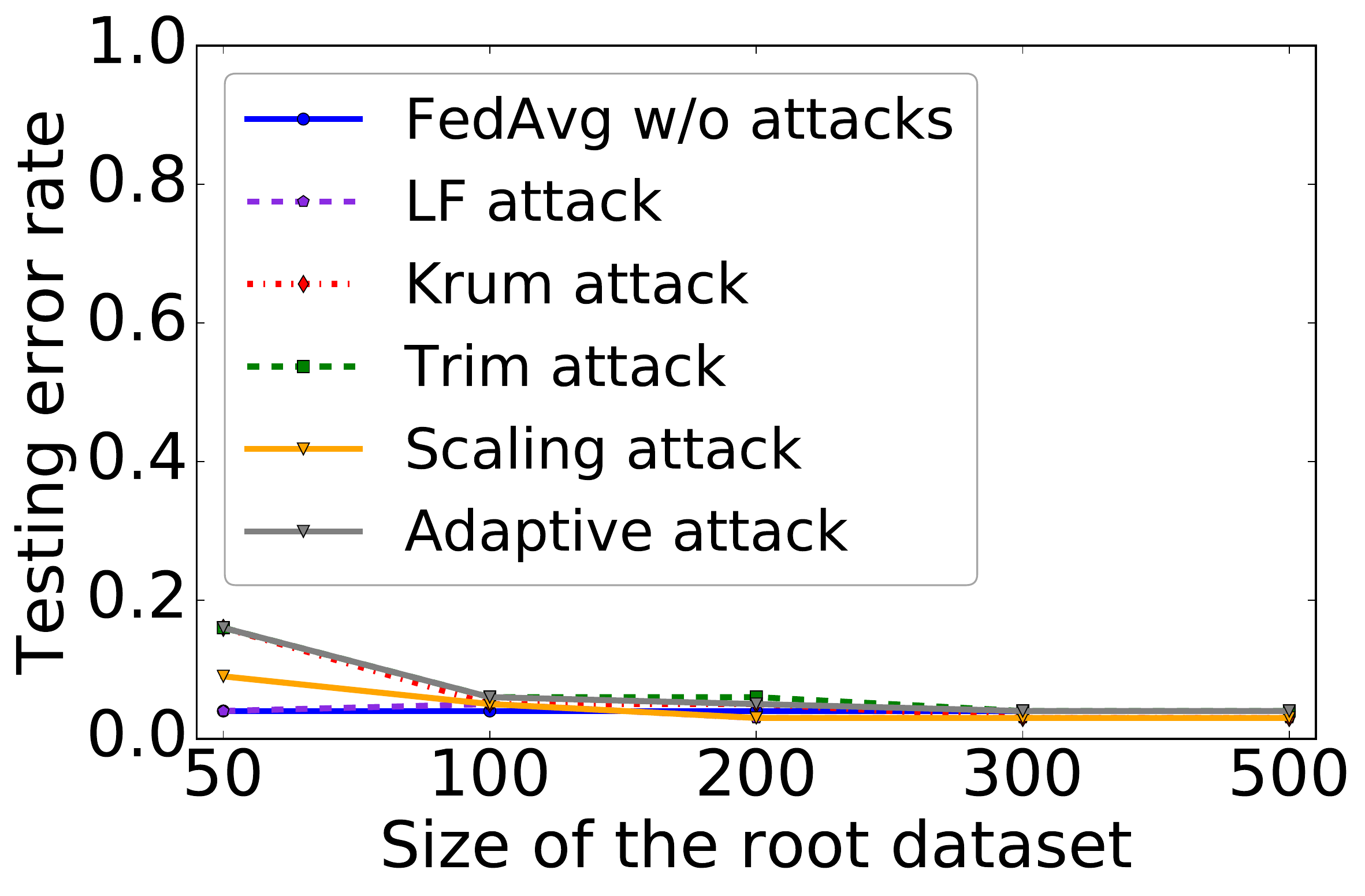}\label{fig:untargeted_rootsize}}
	\subfloat[Attack success rate]{\includegraphics[width=0.24 \textwidth]{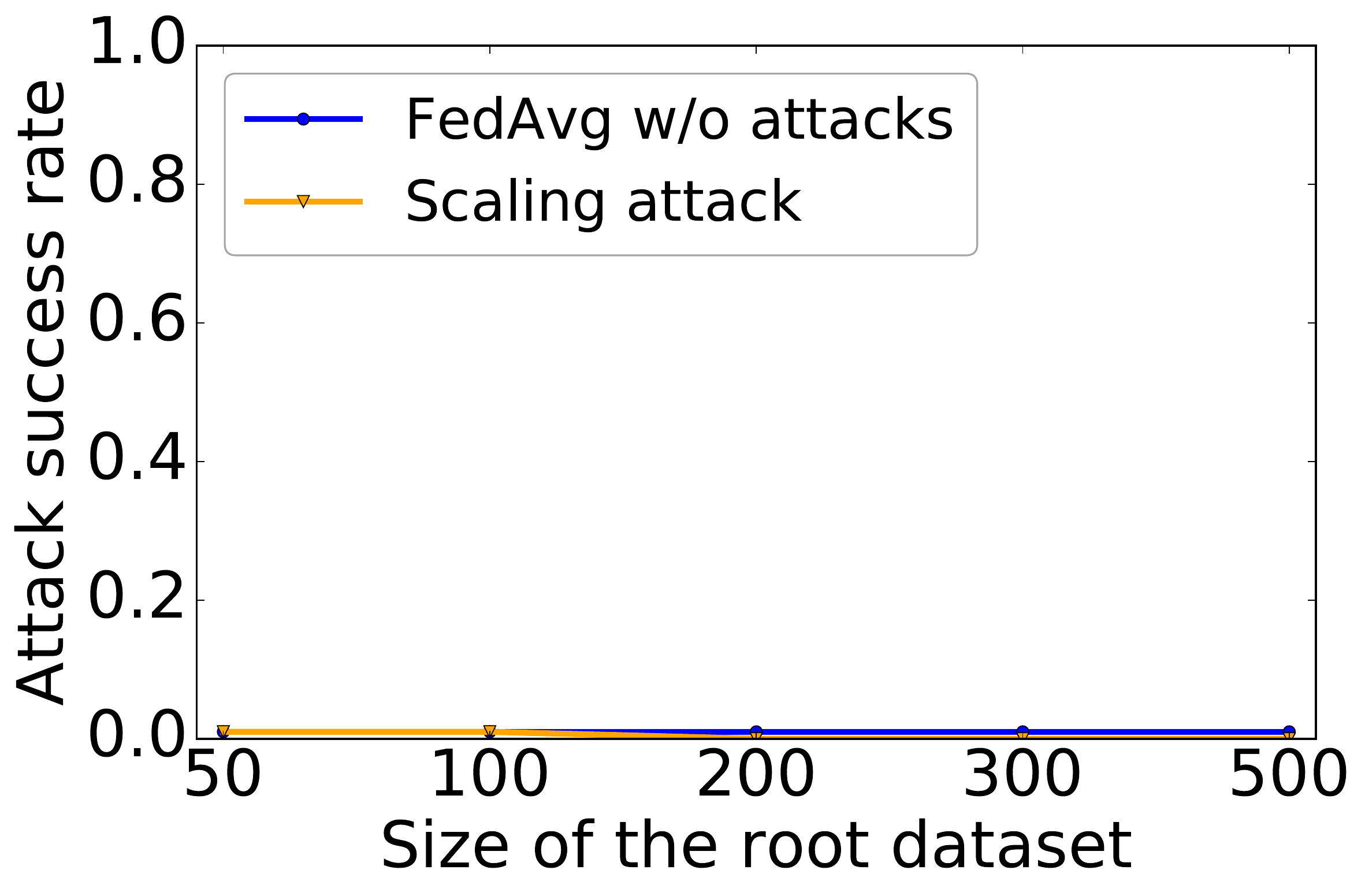}\label{fig:scaling_rootsize}}
	\caption{\xc{Impact of the root dataset size on FLTrust under different attacks for MNIST-0.5.}} 
	\label{fig:num_sample}
	\vspace{-3mm}
\end{figure}

Table \ref{tab:variants} compares the variants with respect to their testing error rates under different attacks and the attack success rates of the Scaling attacks on MNIST-0.5. The attacks are not applicable to FLTrust-Server as it does not require communications from the clients. Our results show that FLTrust outperforms the five variants. FLTrust outperforms FLTrust-Server and FLTrust-withServer because the root dataset is small. The fact that FLTrust outperforms 
FLTrust-NoReLU, FLTrust-NoNorm, and FLTrust-ParNorm indicates the necessity of our ReLU operation and normalization.

\myparatight{Impact of the root dataset} Our root dataset can be characterized by its size and how it is sampled (i.e., Case I vs. Case II). Therefore, we study the impact of the root dataset on FLTrust with respect to its size and how it is sampled. Figure \ref{fig:num_sample} shows the testing error rates of FLTrust under different attacks and the attack success rates under the Scaling attack on MNIST-0.5 when the size of the root dataset increases from 50 to 500, where the root dataset is sampled uniformly in Case I. We observe that a root dataset with only 100 training examples is sufficient for FLTrust to defend against the attacks. Specifically, when  the root dataset has 100 training examples, the testing error rates of FLTrust under  attacks are similar to that of FedAvg without attacks, and the attack success rate of the Scaling attack is close to 0. When the size of the root dataset increases beyond 100, the testing error rates and attack success rates of FLTrust  further decrease slightly. 

\xc{We also evaluate the impact of the bias probability in Case II. Table \ref{tab:sample_bias} shows the testing error rates of FLTrust under different attacks and the attack success rates of the Scaling attacks when the bias probability varies. The second column in each table corresponds to the bias probability with which Case II reduces to Case I. We  increase the bias probability up to 1.0 to simulate larger difference between the root data distribution and the overall training data distribution. We observe that FLTrust is accurate and robust when the bias probability is not too large. For instance, when the bias probability is no more than 0.4 for MNIST-0.5, the testing error rates of FLTrust under attacks are at most 0.08, compared to 0.05 when the bias probability is 0.1. Our results show that FLTrust works well when the root data distribution does not diverge too much from the overall training data distribution.} 

\begin{figure}[!t]
	\centering
	\subfloat[LF attack]{\includegraphics[width=0.24 \textwidth]{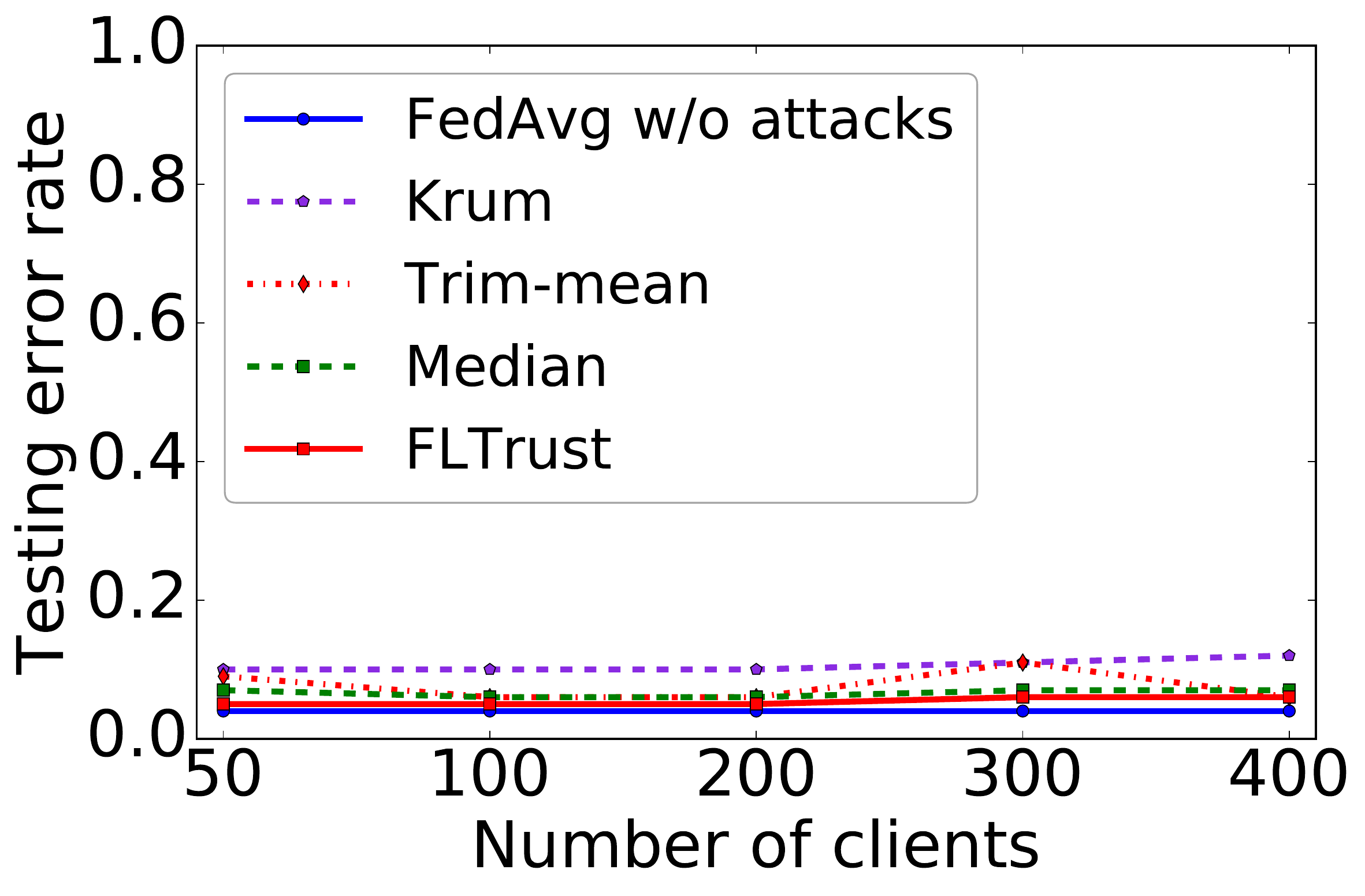}}
	\subfloat[Krum attack]{\includegraphics[width=0.24 \textwidth]{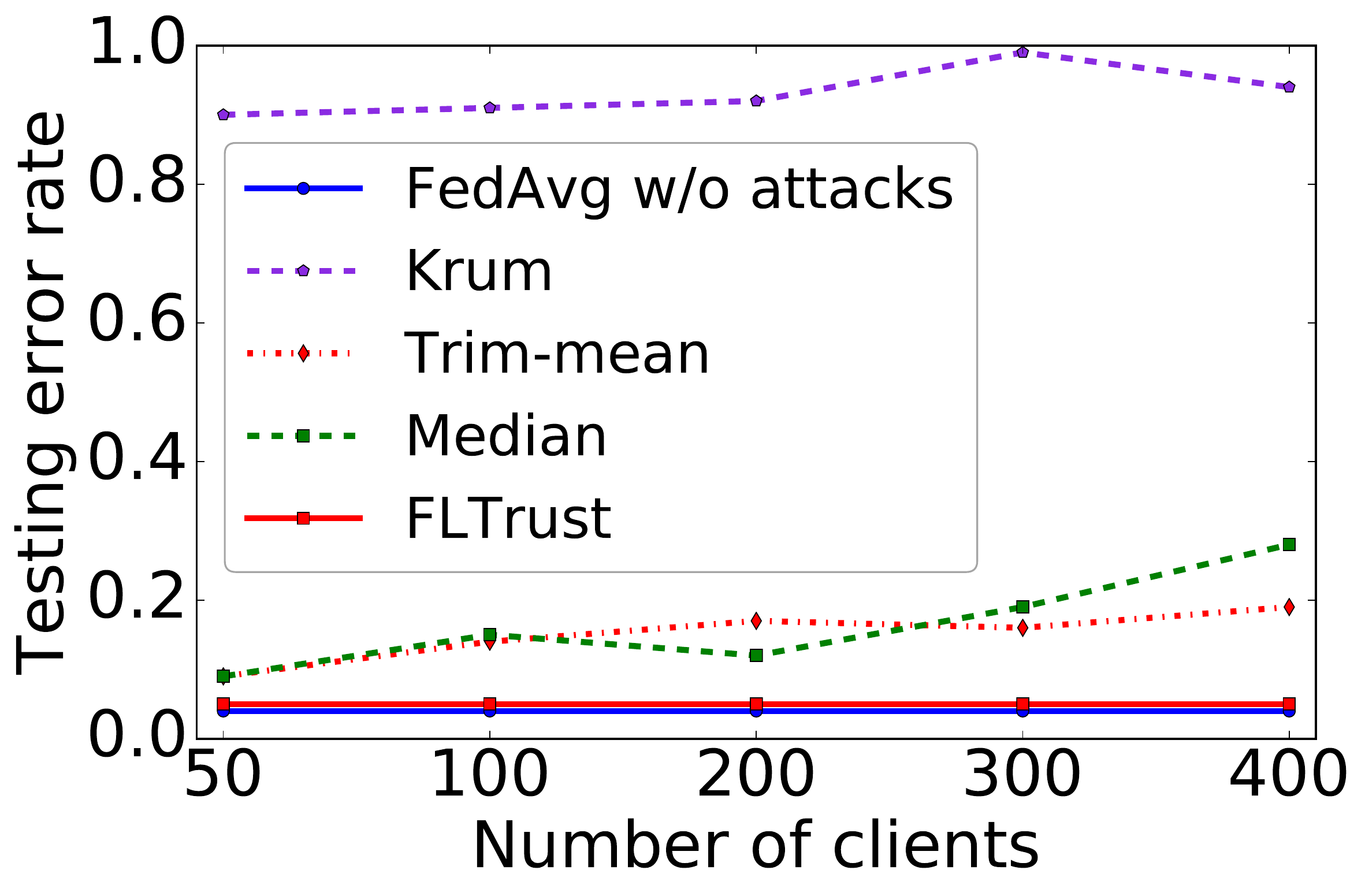}}\\
	\vspace{-1mm}
	\subfloat[Trim attack]{\includegraphics[width=0.24 \textwidth]{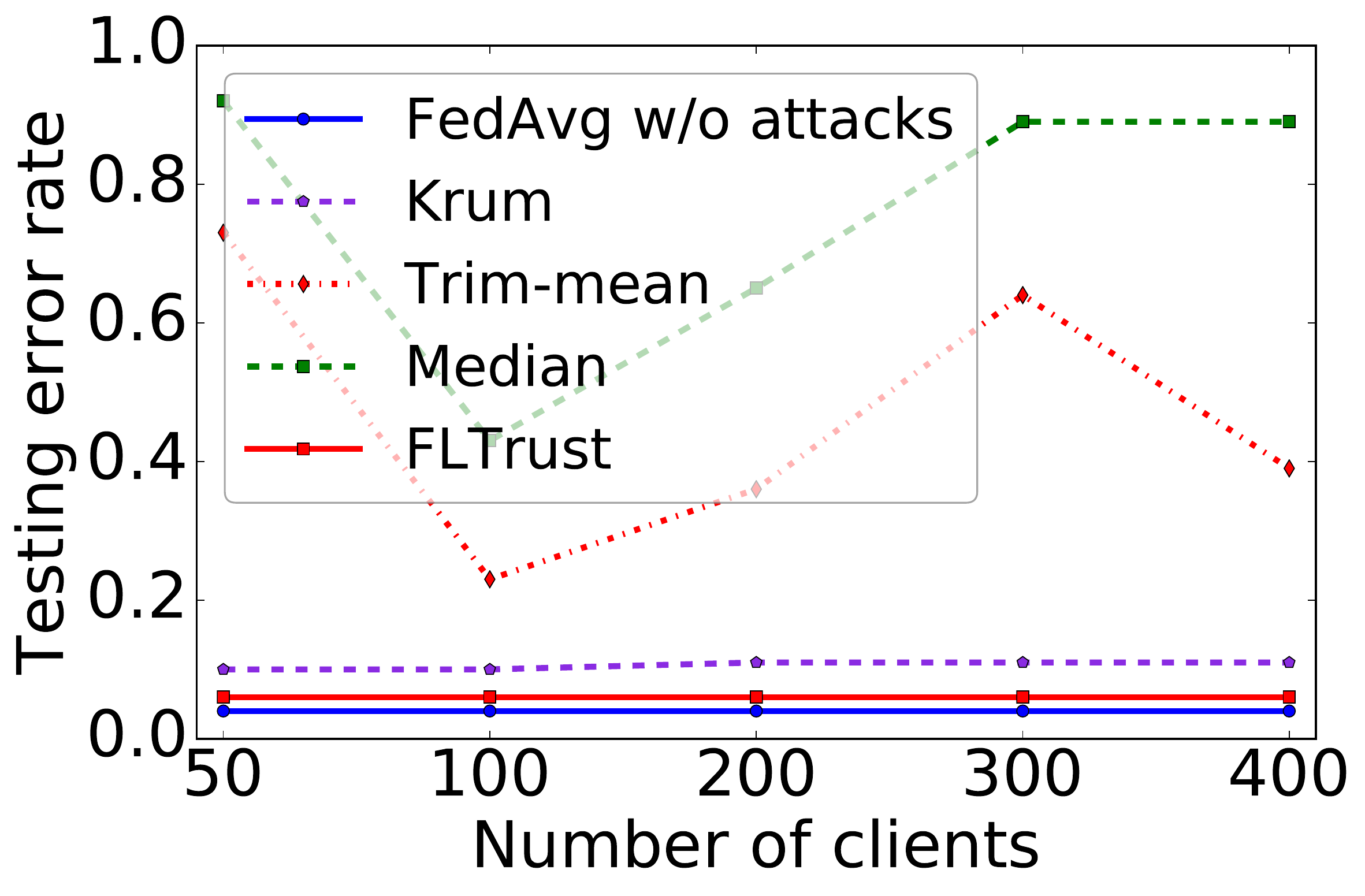}} 
	\subfloat[Scaling attack]{\includegraphics[width=0.24 \textwidth]{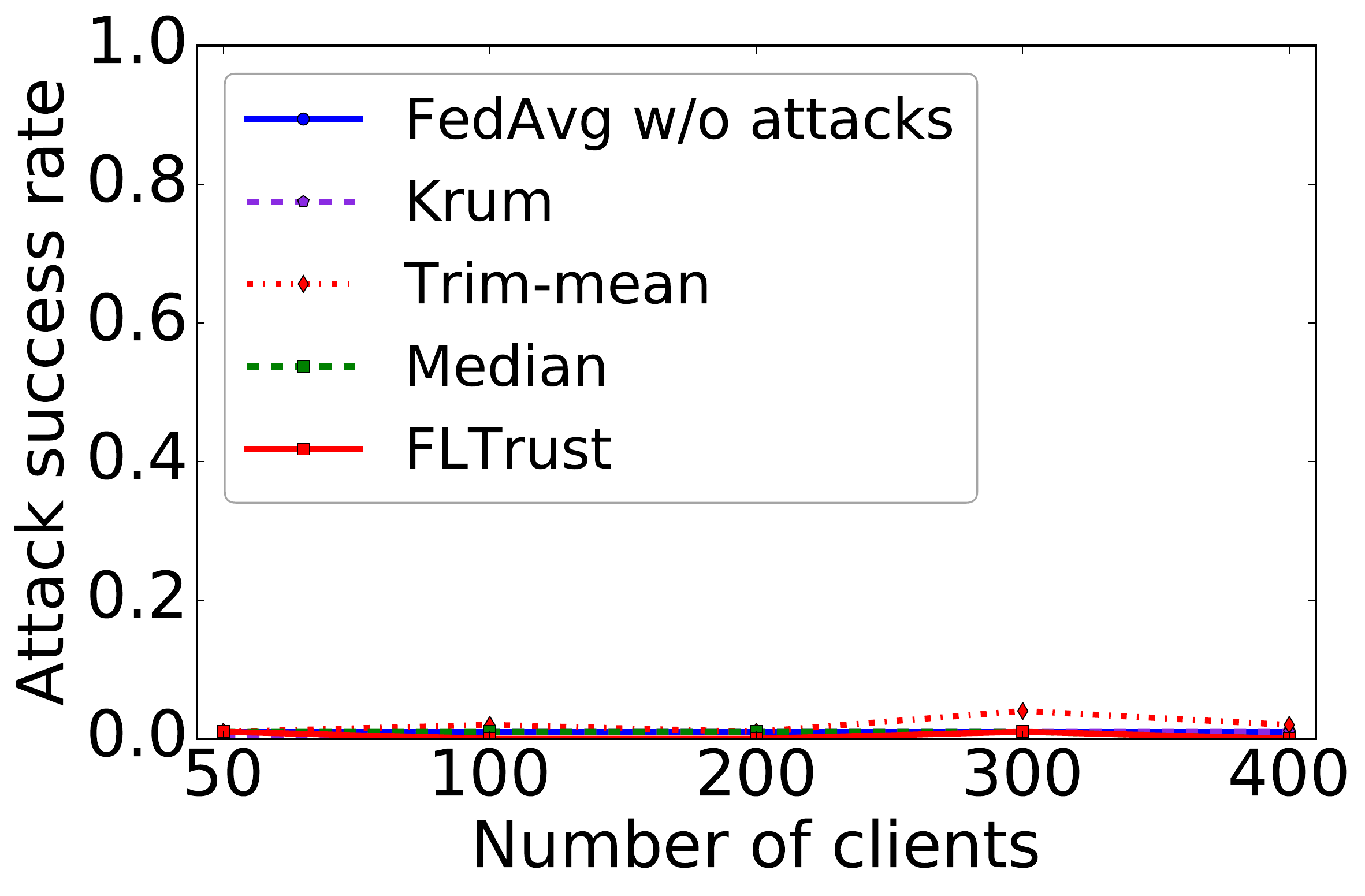}}\\
	\vspace{-1mm}
	\subfloat[\xc{Adaptive attack}]{\includegraphics[width=0.24 \textwidth]{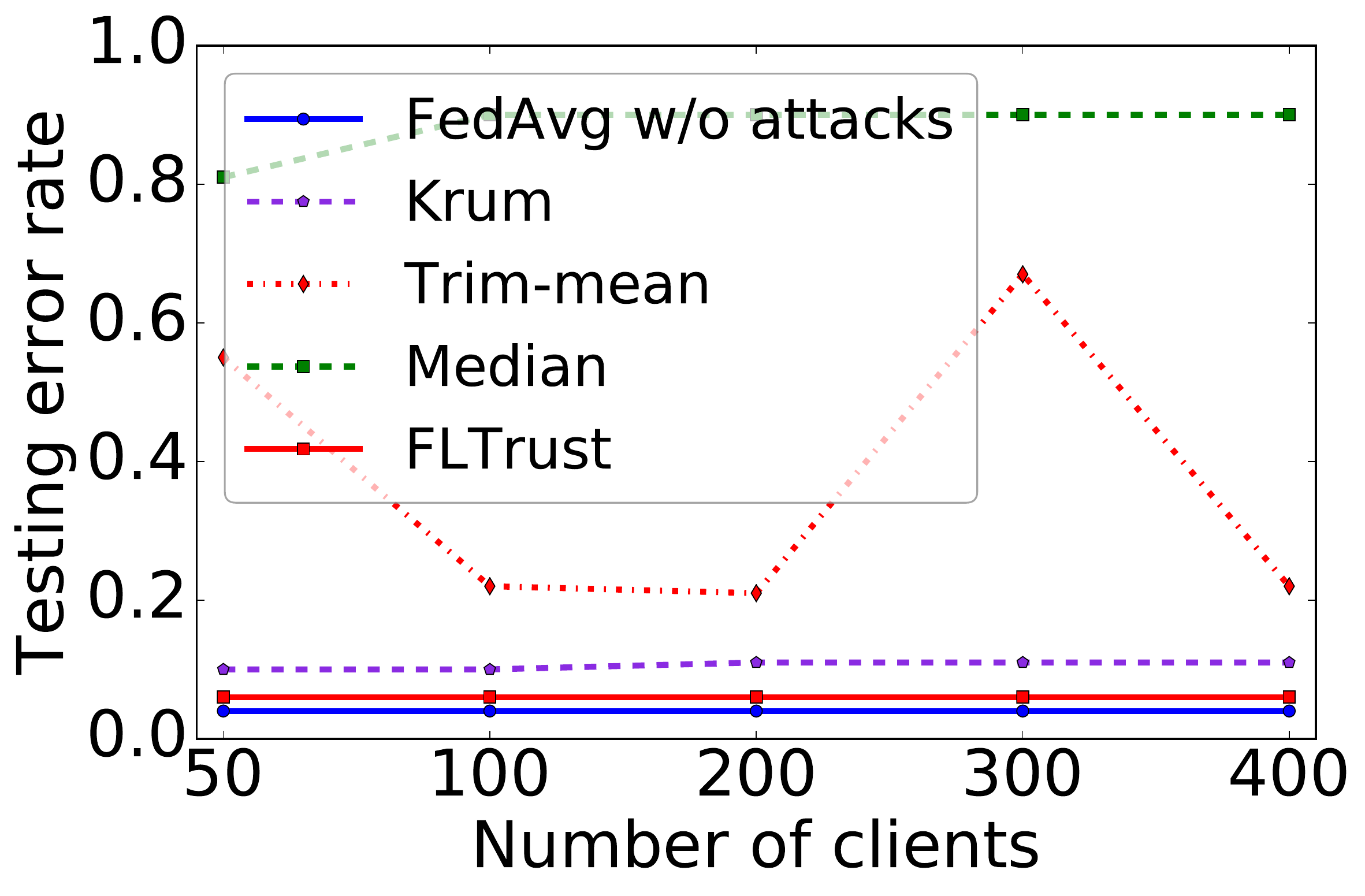}}
	\vspace{1mm}
	\caption{Impact of the total number of clients on the testing error rates of different FL methods under different attacks ((a)-(c)) and the attack success rates of the Scaling attacks, where MNIST-0.5 is used. The testing error rates of all the compared FL methods are similar and small under the Scaling attacks, which we omit for simplicity.} 
	\label{fig:num_clients}
	\vspace{-5mm}
\end{figure}

\begin{figure}[!t]
	\centering
	\subfloat[LF attack]{\includegraphics[width=0.24 \textwidth]{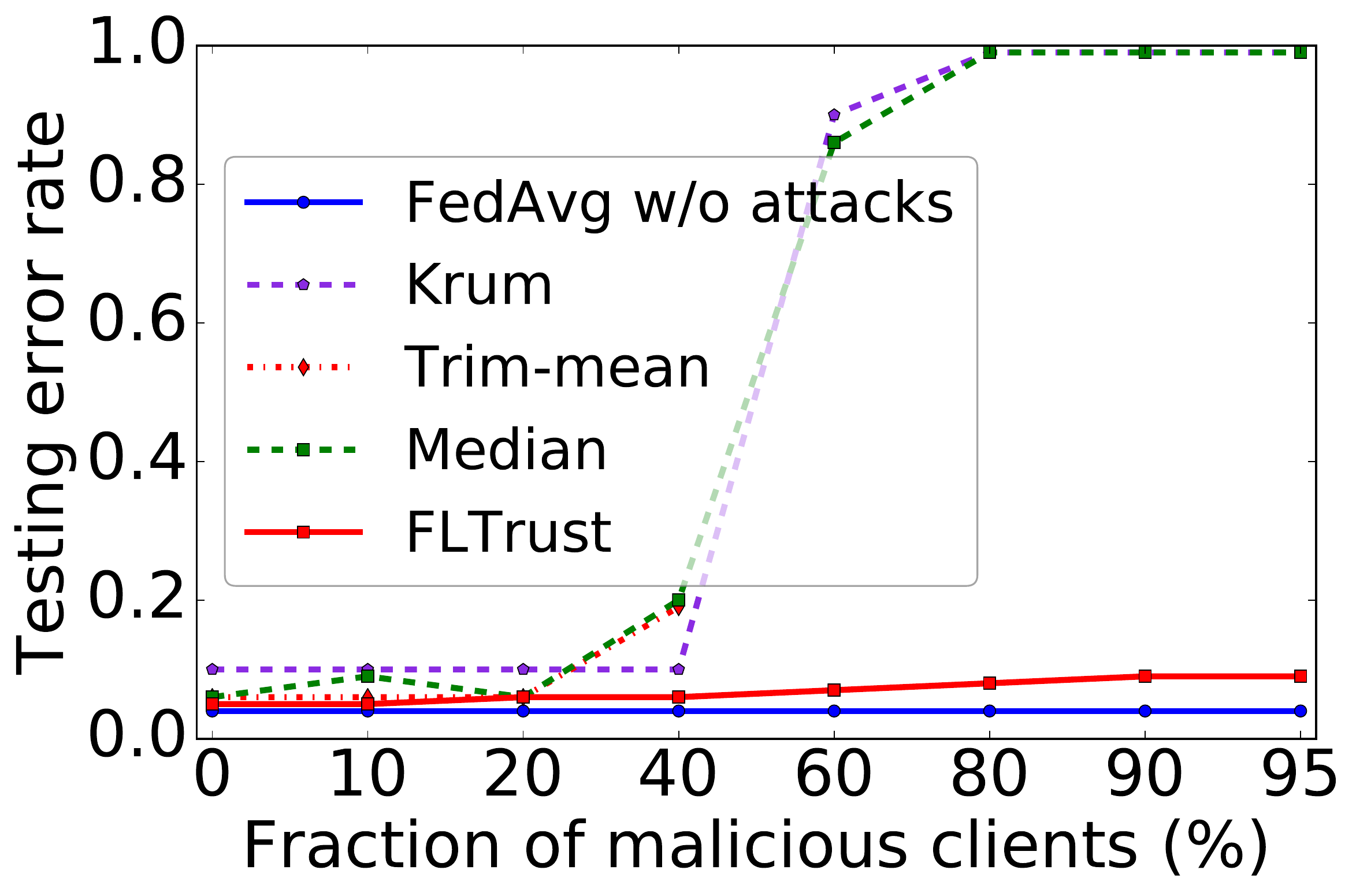}\label{fig:num_malicious_lf}}
	\subfloat[Krum attack]{\includegraphics[width=0.24 \textwidth]{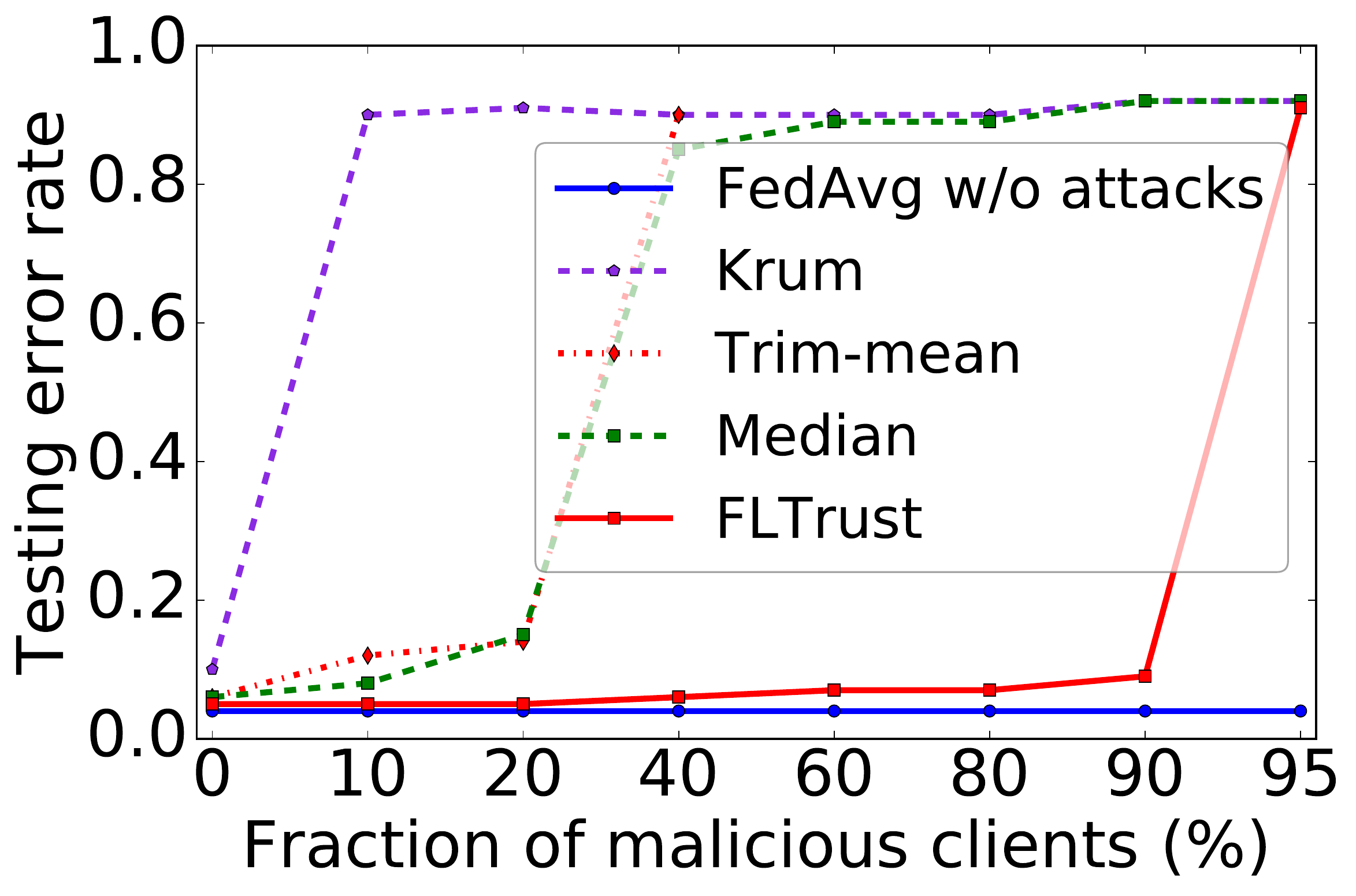}\label{fig:num_malicious_krum}}\\
	\vspace{-1mm}
	\subfloat[Trim attack]{\includegraphics[width=0.24 \textwidth]{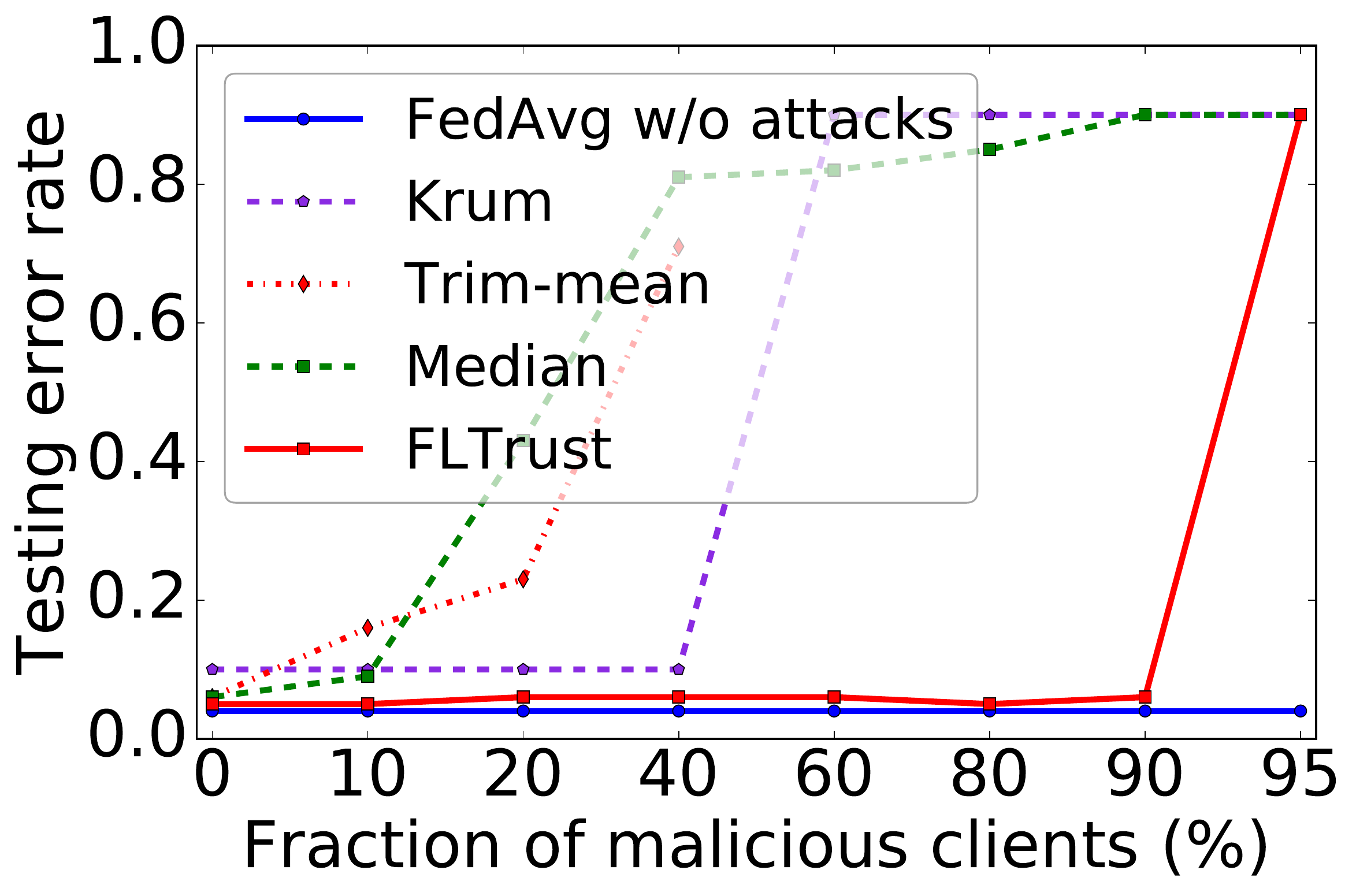}\label{fig:num_malicious_trim}} 
	\subfloat[Scaling attack]{\includegraphics[width=0.24 \textwidth]{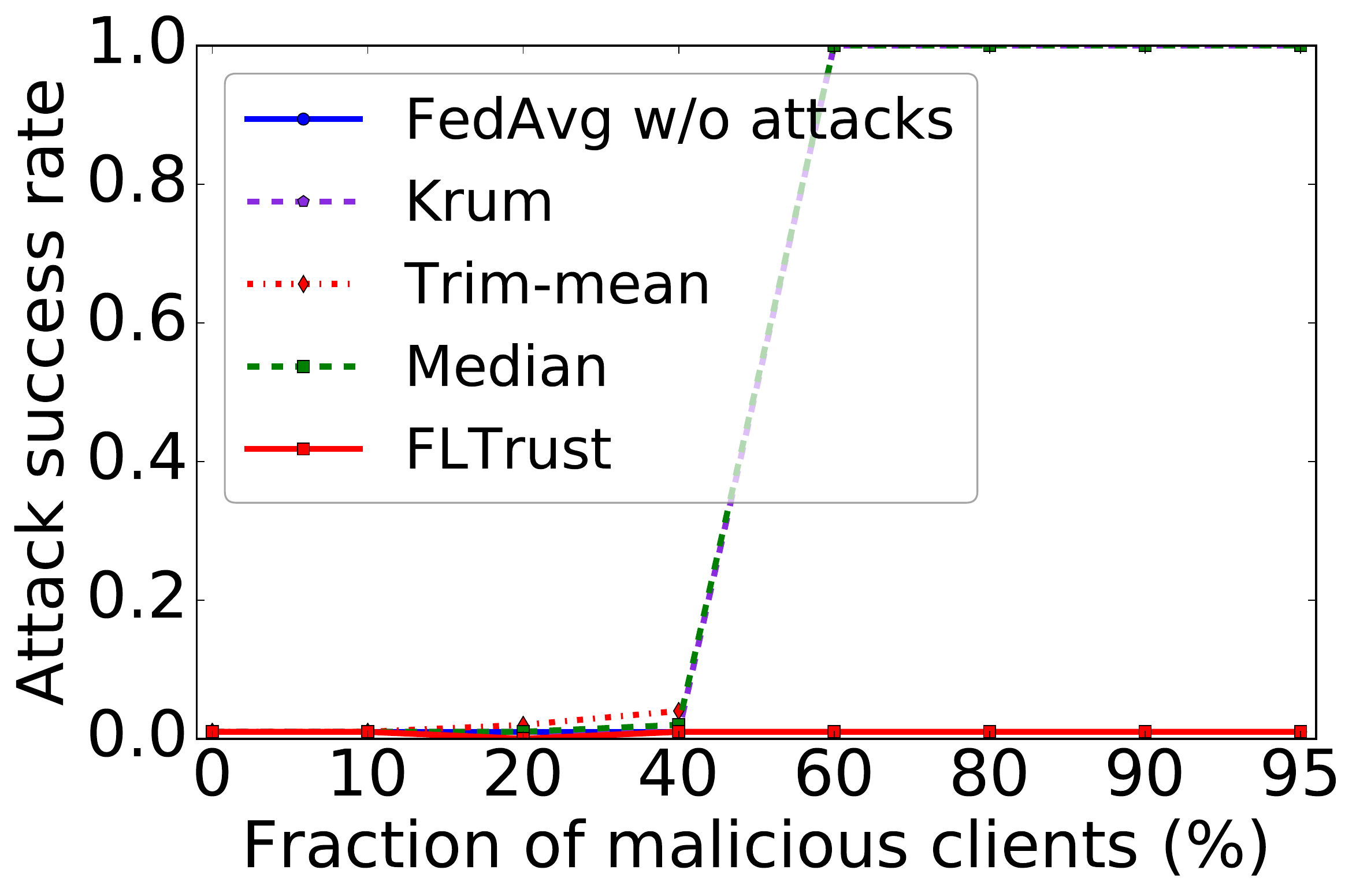}\label{fig:num_malicious_scale}}\\
	\vspace{-1mm}
	\subfloat[\xc{Adaptive attack}]{\includegraphics[width=0.24 \textwidth]{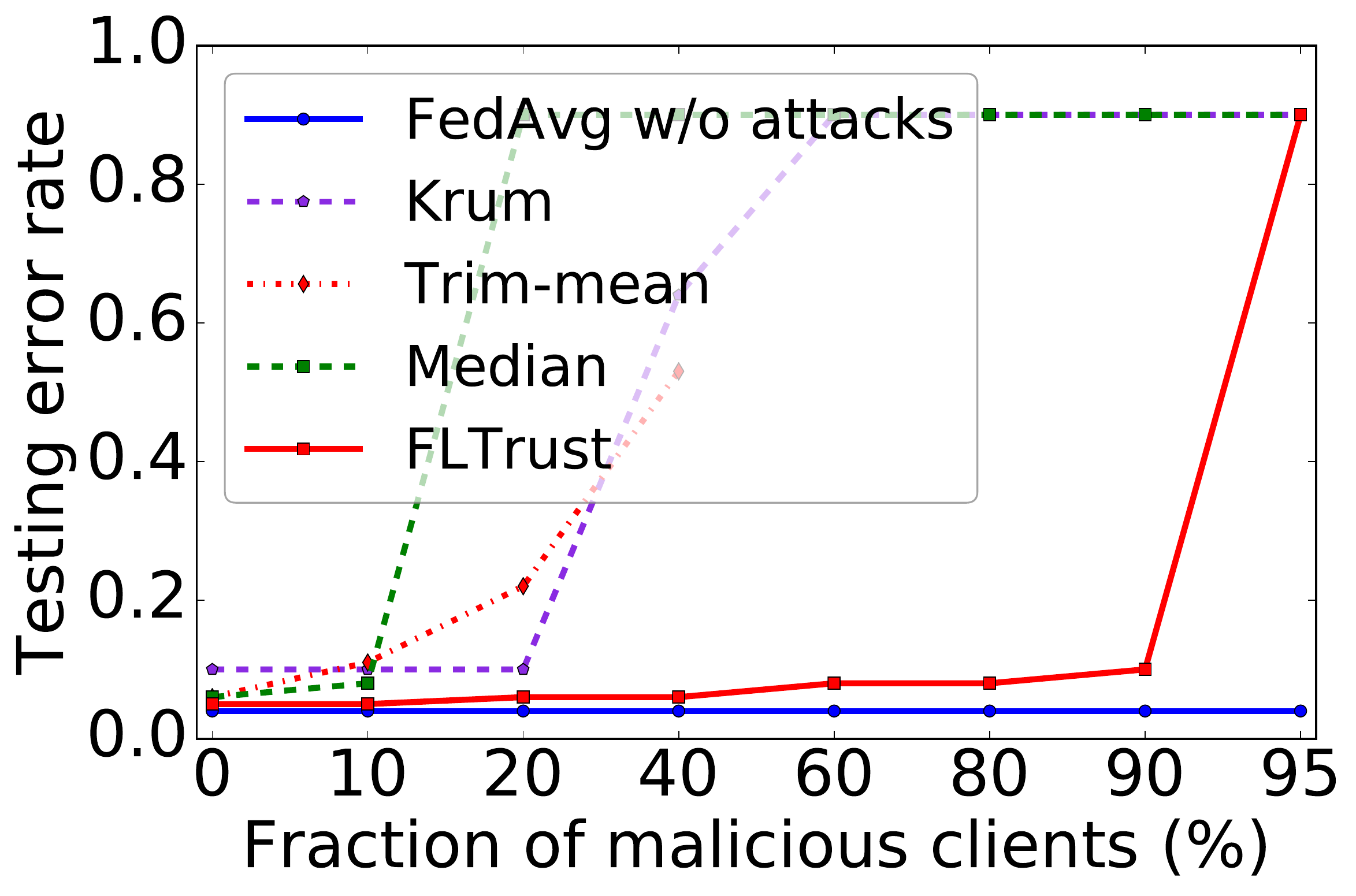}\label{fig:num_malicious_scale}}
	\vspace{1mm}
	\caption{Impact of the fraction of malicious clients on the testing error rates of different FL methods under different attacks ((a)-(c)) and the attack success rates of the Scaling attacks, where MNIST-0.5 is used. The testing error rates of all the compared FL methods are similar and small under the Scaling attacks, which we omit for simplicity.}
	\label{fig:num_malicious}
\end{figure} 

\begin{figure}[!t]
	\centering
	\subfloat[MNIST-0.1]{\includegraphics[width=0.24 \textwidth]{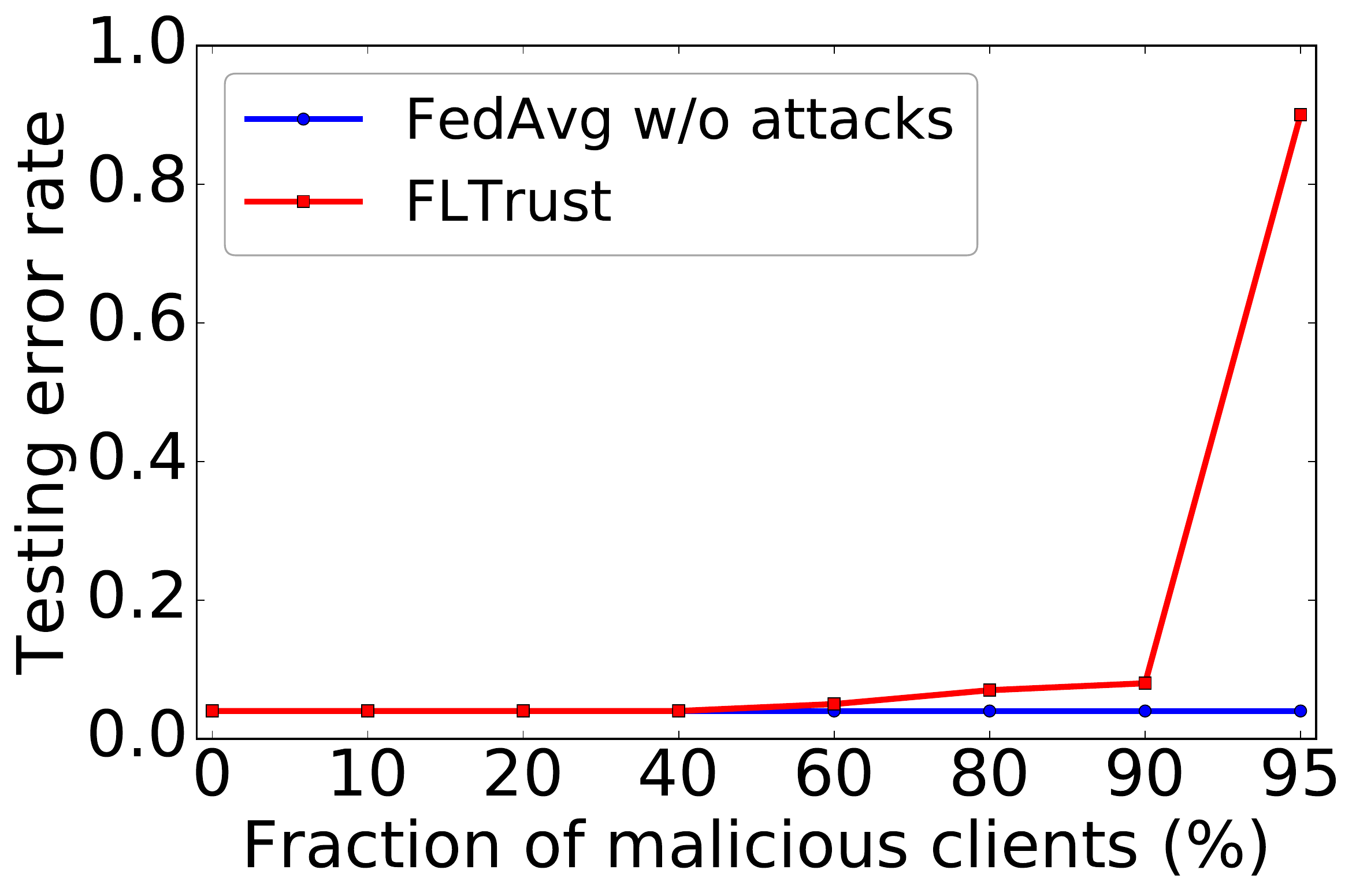}}
	\subfloat[MNIST-0.5]{\includegraphics[width=0.24 \textwidth]{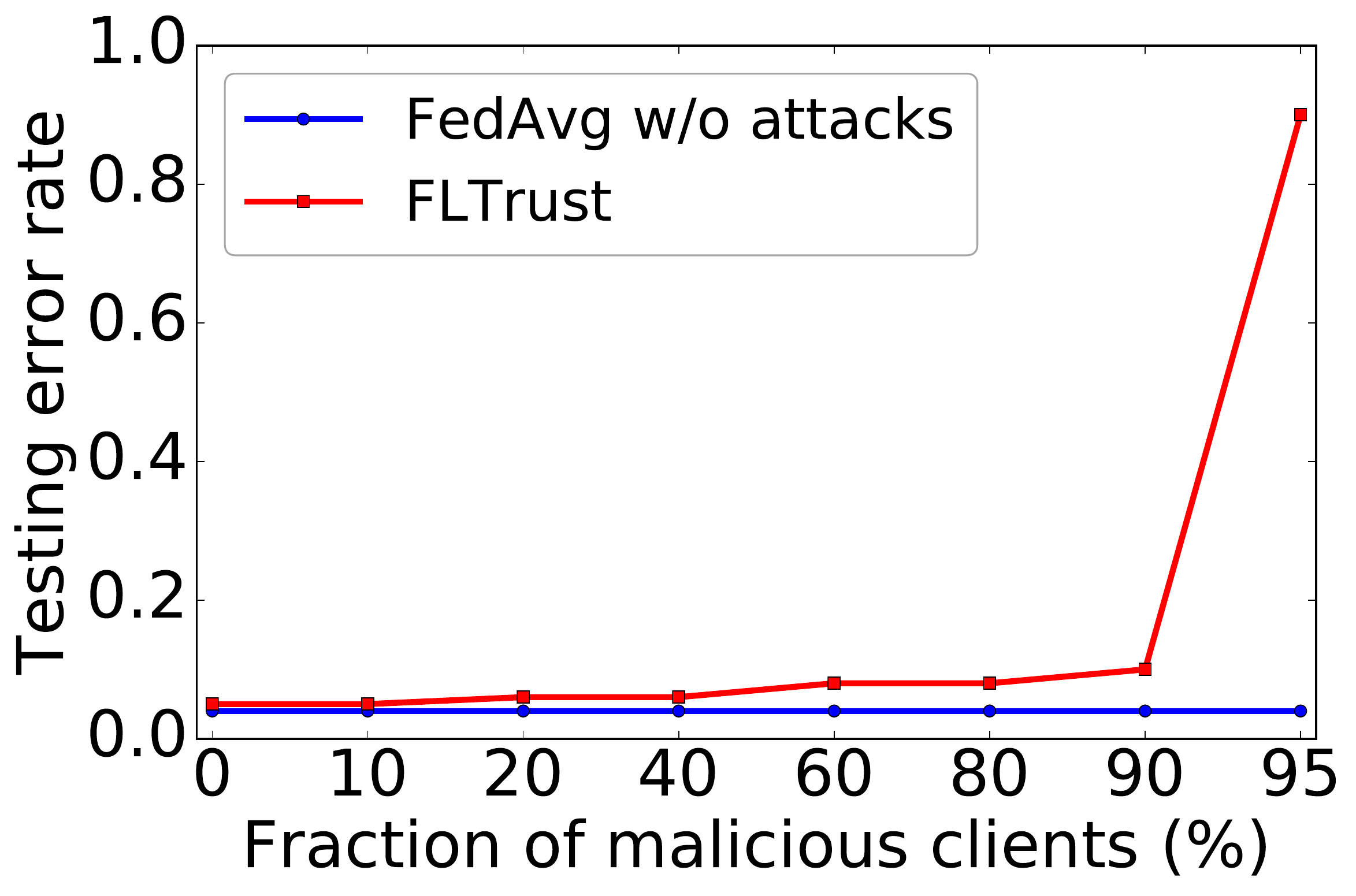}} \\
	\vspace{-1mm}
	\subfloat[Fashion-MNIST]{\includegraphics[width=0.24 \textwidth]{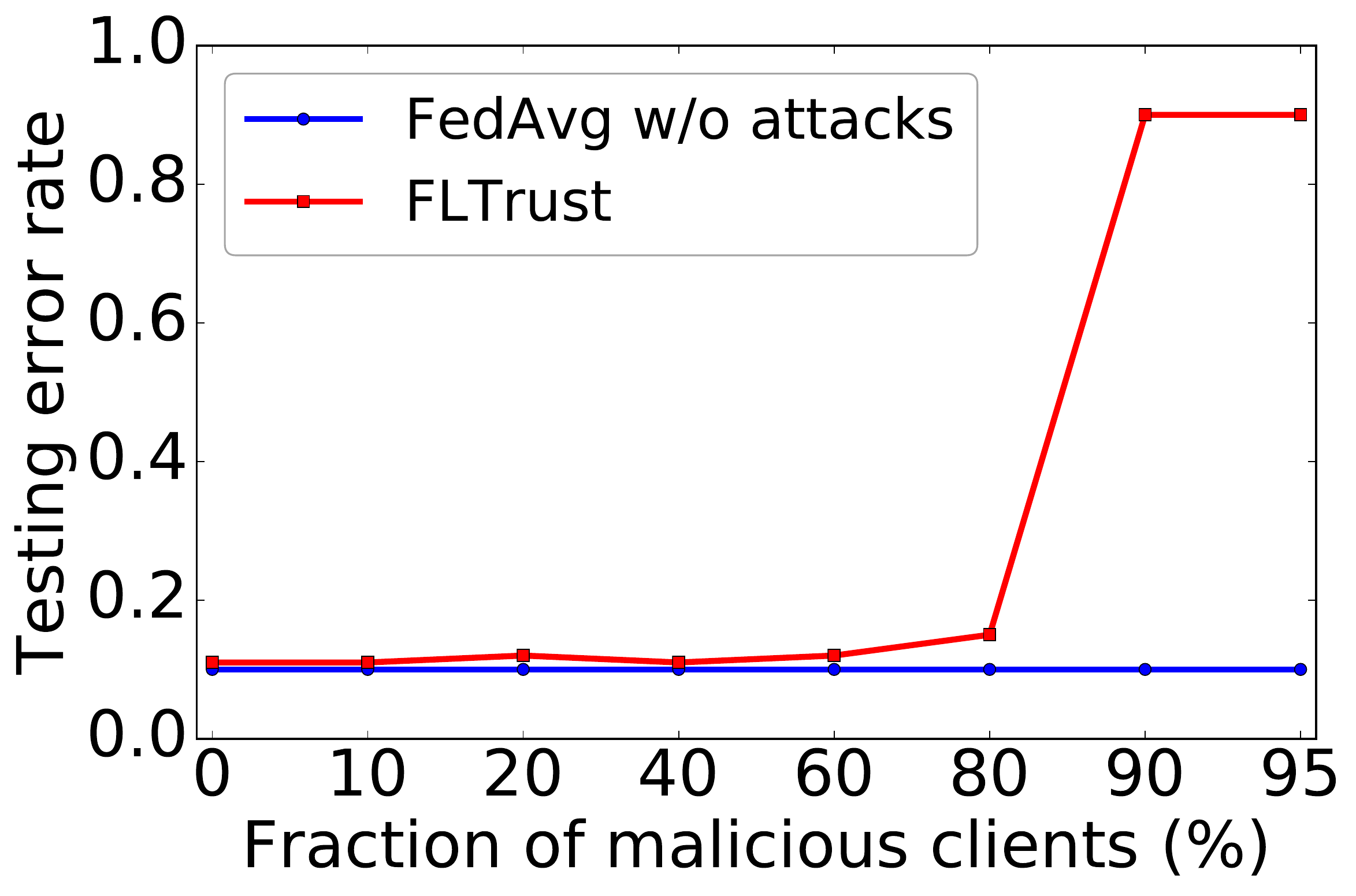}}
	\subfloat[CIFAR-10]{\includegraphics[width=0.24 \textwidth]{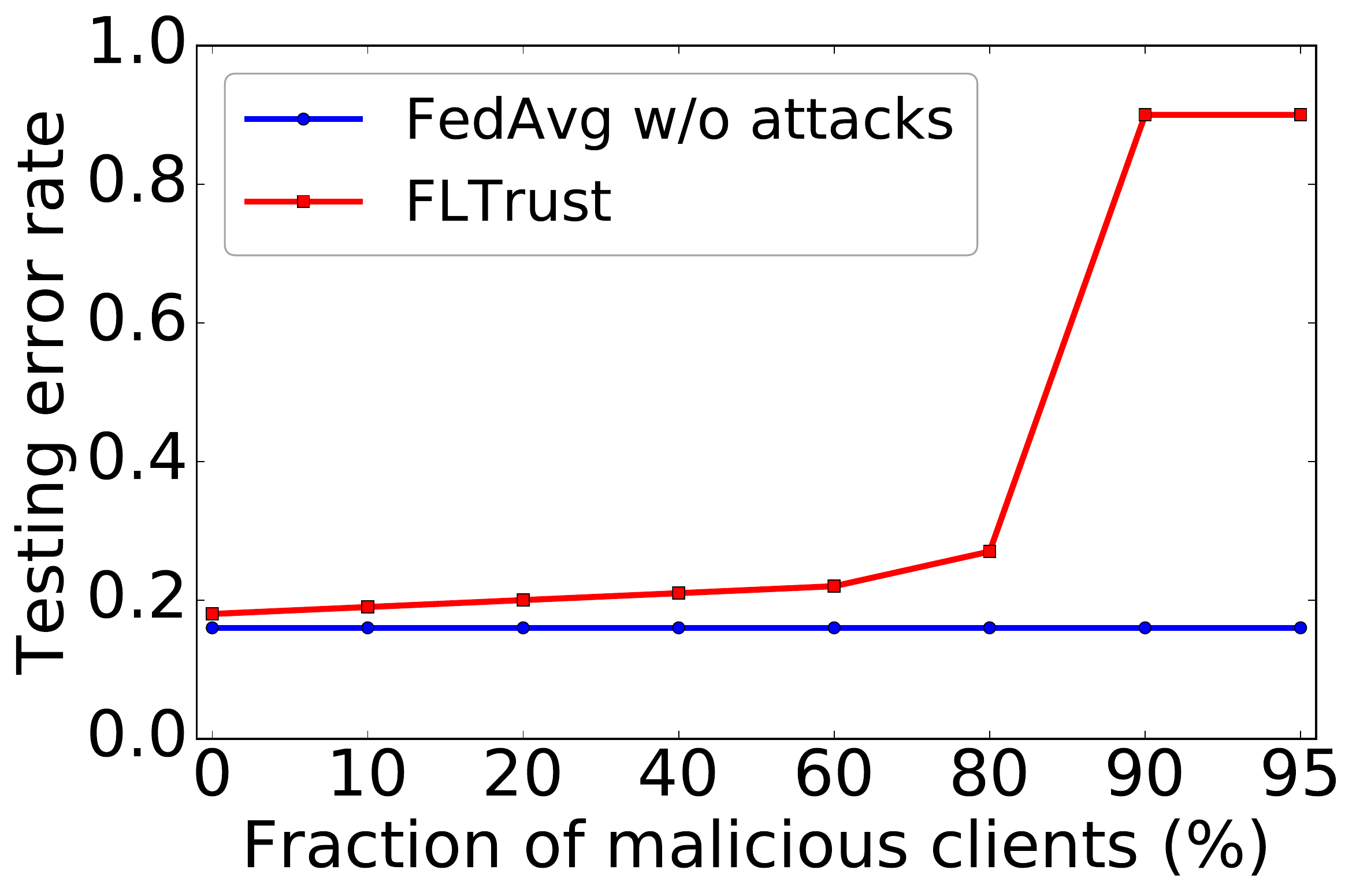}}\\
	\vspace{-1mm}
	\subfloat[HAR]{\includegraphics[width=0.24 \textwidth]{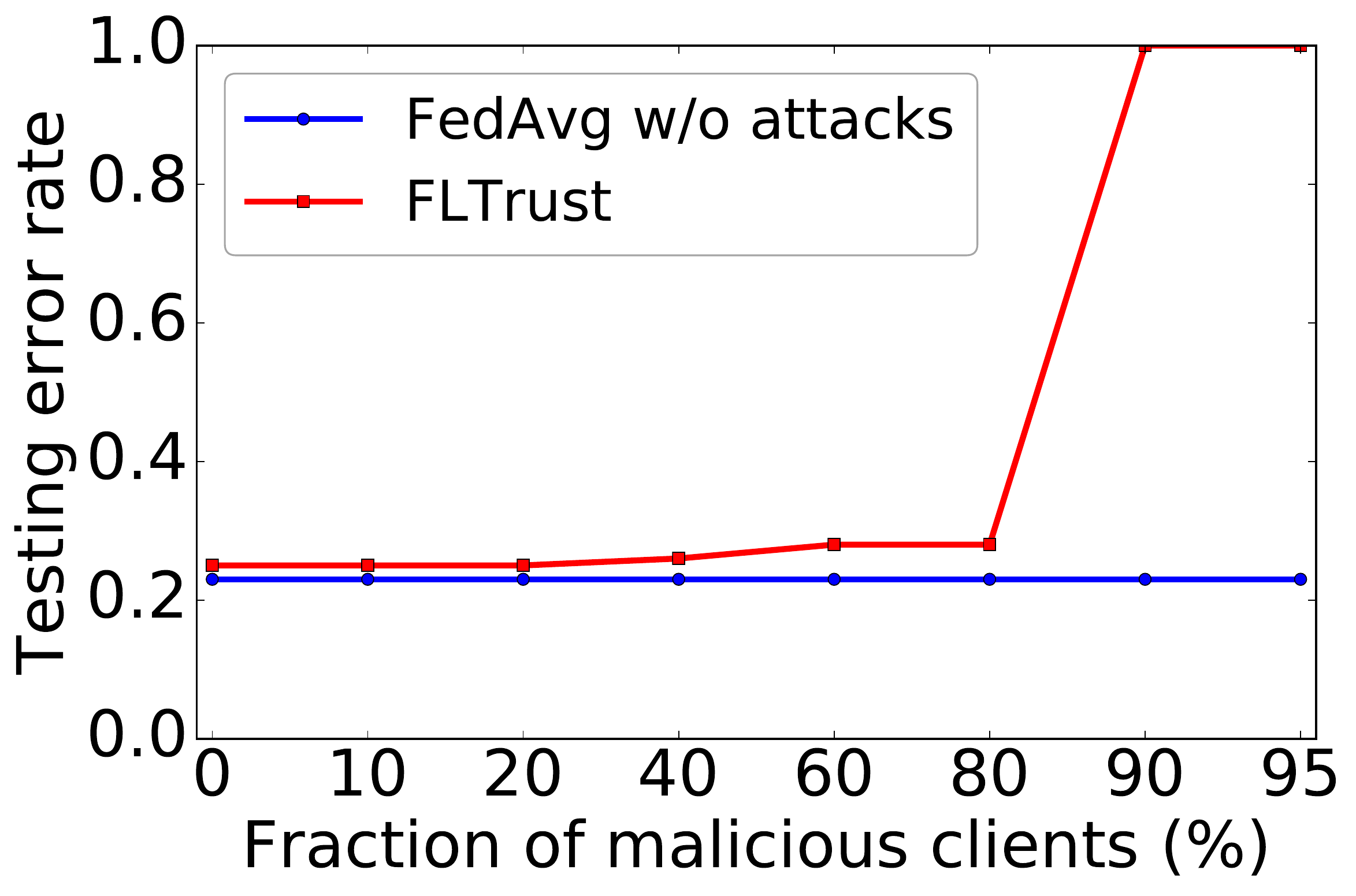}}
	\subfloat[\xc{CH-MNIST}]{\includegraphics[width=0.24 \textwidth]{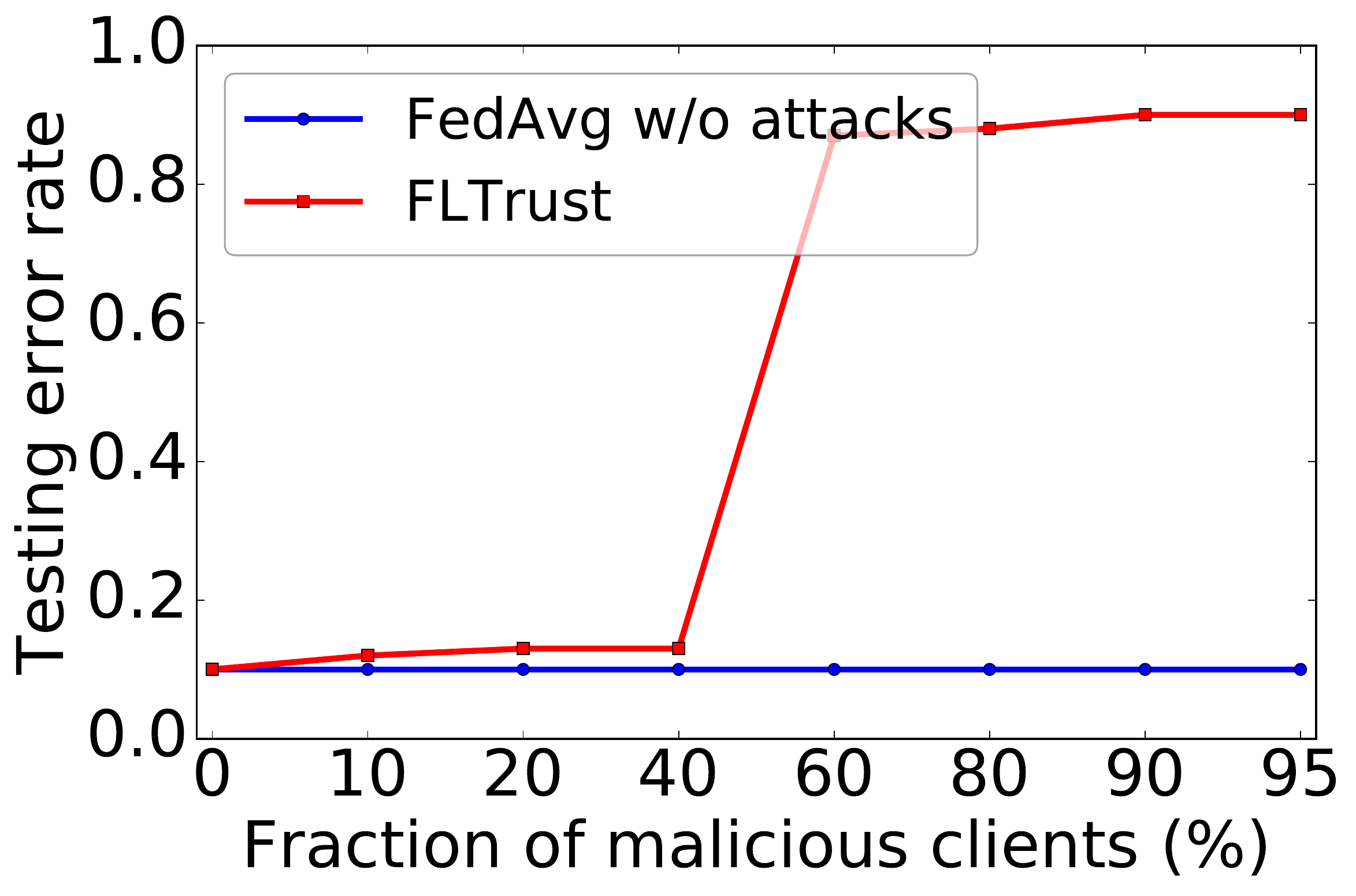}}
	\vspace{1mm}
	\caption{Impact of the fraction of malicious clients on the testing error rates of FLTrust under the adaptive attacks.}
	\label{adpative_attack}
\end{figure}

\myparatight{Impact of the total number of clients}
Figure \ref{fig:num_clients} shows the testing error rates of different FL methods under different attacks, as well as the attack success rates of the Scaling attacks, when the total number of clients $n$ increases from 50 to 400. We set the fraction of malicious clients to be $\frac{m}{n}=20\%$. We observe that FLTrust can defend against the attacks for all considered total number of clients. Specifically, FLTrust under attacks achieves testing error rates similar to FedAvg under no attacks, while the attack success rates of the Scaling attacks are close to 0 for FLTrust. Existing  methods can defend against the Scaling attacks on MNIST-0.5, i.e., the attack success rates are close to 0.  However, they cannot defend against the Krum attack, Trim attack, \xc{and/or our adaptive attack}, i.e., their corresponding testing error rates are large.  

\myparatight{Impact of the number of malicious clients} Figure \ref{fig:num_malicious} shows the testing error rates of different FL methods under different attacks and the attack success rates of the Scaling attacks on MNIST-0.5, when the fraction of malicious clients increases from 0 to 95\%.  Trim-mean cannot be applied when the fraction of malicious clients exceeds 50\% because the number of local model updates removed by Trim-mean is twice of the number of malicious clients. Therefore, for Trim-mean, we only show the results when the malicious clients are less than 50\%. 

We observe that, under existing attacks \xc{and our adaptive attacks}, FLTrust can tolerate up to 90\% of malicious clients. Specifically, FLTrust under these attacks still achieves testing error rates similar to FedAvg without attacks when up to 90\% of the clients are malicious, while the attack success rates of the Scaling attacks for FLTrust are still close to 0 when up to 95\% of the clients are malicious. However, existing Byzantine-robust FL methods can tolerate much less malicious clients.  
For instance, under Krum attack, the testing error rate of the global model learnt by Krum increases to 0.90 when only 10\% of the clients are malicious, while the testing error rates of the global models learnt by Trim-mean and Median become larger than 0.85 when the fraction of malicious clients reaches 40\%. 

Figure \ref{adpative_attack} further shows the testing error rates of the  global models learnt by FLTrust as a function of the fraction of malicious clients under the adaptive attacks on all datasets.  Our results show that FLTrust is robust against adaptive attacks even if a large fraction of clients are malicious on all datasets.   Specifically, for MNIST-0.1 (MNIST-0.5, Fashion-MNIST, CIFAR-10, HAR, \xc{or CH-MNIST}), FLTrust under adaptive attacks with over 60\% (over 40\%, up to 60\%, up to 60\%,  up to 40\%, \xc{or over 40\%}) of malicious clients can still achieve testing error rates similar to FedAvg under no attack.   


\xc{\section{Discussion and Limitations}
\label{sec:discussion}

\myparatight{FLTrust vs. fault-tolerant computing}
Fault-tolerant computing \cite{barborak1993consensus} aims to remain functional when there are malicious clients. However, conventional fault-tolerant computing and federated learning have the following key difference: the clients communicate with each other to compute results in fault-tolerant computing \cite{barborak1993consensus}, while clients only communicate with a cloud server in federated learning. Our FLTrust leverages such unique characteristics of federated learning to bootstrap trust, i.e., the server collects a root dataset, and uses it to guide the aggregation of the local model updates.}

\myparatight{Different ways of using the root dataset} Fang et al. \cite{fang2019local} also proposed to use a root dataset (they called it validation dataset). However, we use the root dataset in a way that is different from theirs. In particular, they use the root dataset to remove potentially malicious local model updates in each iteration, while we use it to assign trust scores to clients and normalize local model updates. As shown by Fang et al. \cite{fang2019local}, their way of using the root dataset is not effective in many cases. 

\xc{\myparatight{Poisoned root dataset} Our FLTrust requires a clean root dataset. We acknowledge that FLTrust may not be robust against poisoned root dataset. The root dataset may be poisoned when it is collected from the Internet or by an insider attacker. However, since FLTrust only requires a small root dataset, a service provider can collect a clean one by itself with a small cost, e.g., asking its employees to generate and manually label a clean root dataset. }

\xc{\myparatight{Adaptive attacks and hierarchical root of trust} 
We considered an adaptive attack via extending the state-of-the-art framework of local model poisoning attacks to our FLTrust. We acknowledge that there may exist stronger local model poisoning attacks to FLTrust, which is an interesting future work to explore. Moreover, it is an interesting future work to consider a hierarchical root of trust. For instance, the root dataset may contain multiple subsets with different levels of trust. The subsets with higher trust may have a larger impact on the aggregation.}


\section{Conclusion and Future Work}
\label{sec:conclusion}
We proposed and evaluated a new federated learning method called FLTrust to achieve Byzantine robustness against malicious clients. The key difference between our FLTrust and existing federated learning methods is that the server itself collects a clean small training dataset (i.e., root dataset) to bootstrap trust in FLTrust. Our extensive evaluations on six datasets show that FLTrust with a small root dataset can achieve Byzantine robustness against a large fraction of malicious clients. In particular, FLTrust under adaptive attacks with a large fraction of malicious clients can still train global models that are as good as the global models learnt by FedAvg under no attacks. \xc{Interesting future work includes 1) designing stronger local model poisoning attacks to FLTrust and 2) considering a hierarchical root of trust.} 


\section*{Acknowledgement}
We thank the anonymous reviewers for their constructive comments. This work was supported in part by NSF grants No. 1937786, 1943226, and 2110252, an IBM Faculty Award, and a Google Faculty Research Award.

\bibliographystyle{IEEEtranS}
\bibliography{refs}
\balance{
\xc{\appendix
\subsection{Proof of Theorem~\ref{theorem_1}} \label{sec:appendix}

Before proving Theorem~\ref{theorem_1}, we first restate our FLTrust algorithm and prove some lemmas. We note that in our setting where $R_l=1$, only the combined learning rate $\alpha\cdot\beta$ influences FLTrust. Therefore, given a combined learning rate, we can always set $\beta=1$ and let $\alpha$ be the combined learning rate. In this case, the local model updates and server update are equivalent to the negative gradients of the clients and the server, respectively. With a slight abuse of notation, we use $\bm{g}_i$ and $\bm{g}_0$ to represent the gradients of the $i$th client and the server in our proof, respectively. 
We denote by $\mathcal{S}$ the set of clients whose cosine similarity $c_i$  is positive in the $t$th global iteration. 
Let $\bm{\bar{g}}_i = \frac{\left\| \bm{g}_0 \right\|}{\left\| \bm{g}_i\right\| }\cdot\bm{g}_i$ and  $\varphi_i = \frac{ReLU(c_i)}{\sum\limits_{j \in \mathcal{S}}ReLU(c_j)}=\frac{c_i}{\sum\limits_{j \in \mathcal{S}} c_j}$, where $i \in \mathcal{S}$.  
Then, we can obtain the global gradient $\bm{g}$ from Equation~(\ref{agg_local_model}) as:
\begin{align}
\label{agg_local_model_rewrite}
\bm{g} = \sum\limits_{i \in \mathcal{S}} \varphi_i \bm{\bar{g}}_i, \quad \text{s.t.} \sum\limits_{i \in \mathcal{S}}\varphi_i = 1, 0 < \varphi_i  < 1.
\end{align}

\begin{lem}
\label{lemma_1}
	For an arbitrary number of malicious clients, the distance between $\bm{g}$ and $\nabla F(\bm{w})$ is bounded as follows in each iteration:
	\begin{align}
	\left\| \bm{g} - \nabla F(\bm{w}) \right\|  
	\le 
	3 \left\| \bm{g}_0  - \nabla F(\bm{w}) \right\| +  2 \left\| \nabla F(\bm{w}) \right\|. \nonumber
	\end{align}
\end{lem}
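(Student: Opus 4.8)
The plan is to control $\|\bm{g} - \nabla F(\bm{w})\|$ by first peeling off the server gradient $\bm{g}_0$ via the triangle inequality and then bounding the distance between the aggregated gradient $\bm{g}$ and $\bm{g}_0$. Writing $\nabla F$ for $\nabla F(\bm{w})$, we have
\[
\left\| \bm{g} - \nabla F \right\| \le \left\| \bm{g} - \bm{g}_0 \right\| + \left\| \bm{g}_0 - \nabla F \right\|,
\]
so the whole task reduces to estimating $\left\| \bm{g} - \bm{g}_0 \right\|$, and the remaining $\left\| \bm{g}_0 - \nabla F \right\|$ on the right will contribute to the final coefficient $3$.

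The key observation I would exploit is that FLTrust's normalization forces every admissible normalized update onto the sphere of radius $\|\bm{g}_0\|$. Since $\bm{\bar{g}}_i = (\|\bm{g}_0\|/\|\bm{g}_i\|)\,\bm{g}_i$ is a positive rescaling of $\bm{g}_i$, it satisfies $\|\bm{\bar{g}}_i\| = \|\bm{g}_0\|$ and has the same cosine similarity $c_i$ with $\bm{g}_0$ as $\bm{g}_i$ does. Because the aggregation in Equation~(\ref{agg_local_model_rewrite}) ranges only over $i \in \mathcal{S}$, where $c_i > 0$, expanding the squared norm gives
\[
\left\| \bm{\bar{g}}_i - \bm{g}_0 \right\|^2 = 2\|\bm{g}_0\|^2\left( 1 - c_i \right) \le 2\|\bm{g}_0\|^2,
\]
hence $\left\| \bm{\bar{g}}_i - \bm{g}_0 \right\| \le \sqrt{2}\,\|\bm{g}_0\|$ for every $i \in \mathcal{S}$, uniformly and regardless of how adversarially the malicious updates were crafted.

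Next I would use that $\bm{g}$ is a convex combination of the $\bm{\bar{g}}_i$: since $\sum_{i \in \mathcal{S}} \varphi_i = 1$, we can rewrite $\bm{g} - \bm{g}_0 = \sum_{i \in \mathcal{S}} \varphi_i (\bm{\bar{g}}_i - \bm{g}_0)$ and apply the triangle inequality together with the per-term bound, obtaining $\left\| \bm{g} - \bm{g}_0 \right\| \le \sqrt{2}\,\|\bm{g}_0\|$. Finally, bounding the server-gradient magnitude by $\|\bm{g}_0\| \le \left\| \bm{g}_0 - \nabla F \right\| + \left\| \nabla F \right\|$ and substituting back yields $\left\| \bm{g} - \nabla F \right\| \le (1 + \sqrt{2})\left\| \bm{g}_0 - \nabla F \right\| + \sqrt{2}\left\| \nabla F \right\|$, and the stated bound follows from $1 + \sqrt{2} < 3$ and $\sqrt{2} < 2$.

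The step I expect to be the conceptual crux, rather than a technical obstacle, is the uniform estimate $\left\| \bm{\bar{g}}_i - \bm{g}_0 \right\| \le \sqrt{2}\,\|\bm{g}_0\|$: this is precisely where the ReLU clipping (which discards negative cosine similarities) and the magnitude normalization combine to neutralize an \emph{arbitrary} number of malicious clients, since no admissible normalized update can sit farther than $\sqrt{2}\,\|\bm{g}_0\|$ from the trusted server direction. Everything else is triangle inequalities and the convexity of the aggregation weights $\varphi_i$; note that the lemma is a deterministic, per-iteration estimate, so no probabilistic argument enters at this stage (the randomness of the root dataset is deferred to the later bounds on $\left\| \bm{g}_0 - \nabla F \right\|$).
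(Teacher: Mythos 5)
Your proof is correct, and while it shares the paper's overall skeleton (triangle inequality around $\bm{g}_0$, the normalization $\|\bm{\bar{g}}_i\| = \|\bm{g}_0\|$, convexity of the weights $\varphi_i$, and finally $\|\bm{g}_0\| \le \|\bm{g}_0 - \nabla F(\bm{w})\| + \|\nabla F(\bm{w})\|$), your handling of the middle term $\|\bm{g} - \bm{g}_0\|$ is genuinely different and strictly sharper. The paper bounds this term by $\|\bm{g}\| + \|\bm{g}_0\| \le 2\|\bm{g}_0\|$; it does invoke positivity of the inner products in its step $(a)$, namely $\|\bm{g} - \bm{g}_0\| \le \|\bm{g} + \bm{g}_0\|$, but that step is actually redundant, since the plain triangle inequality reaches the same endpoint --- so the paper's constants $(3,2)$ would hold even without the ReLU clipping, for any convex combination of normalized updates. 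Your per-term estimate $\|\bm{\bar{g}}_i - \bm{g}_0\|^2 = 2\|\bm{g}_0\|^2(1 - c_i) \le 2\|\bm{g}_0\|^2$ is the place where the clipping $c_i > 0$ does quantitative work: it caps the chord length on the sphere at $\sqrt{2}\,\|\bm{g}_0\|$ rather than the diameter $2\|\bm{g}_0\|$, and averaging with the $\varphi_i$ preserves this. The payoff is the sharper bound $(1+\sqrt{2})\left\| \bm{g}_0 - \nabla F(\bm{w}) \right\| + \sqrt{2}\left\| \nabla F(\bm{w}) \right\|$, which implies the lemma as stated and, if propagated through the proof of Theorem~\ref{theorem_1}, would slightly improve the constants in $\rho$ and in the residual term ($3\alpha$ becoming $(1+\sqrt{2})\alpha$, etc.). The paper's route is marginally shorter (no squared-norm expansion); yours isolates more precisely why ReLU clipping matters and what it buys.
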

\begin{proof}We have the following equations:
	\begin{align}
	& \left\| \bm{g} - \nabla F(\bm{w}) \right\| \\
	&= \left\|  \sum\limits_{i \in \mathcal{S}} \varphi_i \bm{\bar{g}}_i - \nabla F(\bm{w}) \right\| \\
	&= \left\| \sum\limits_{i \in \mathcal{S}} \varphi_i \bm{\bar{g}}_i - \bm{g}_0 + \bm{g}_0  -  \nabla F(\bm{w})  \right\| \\
	&\le \left\| \sum\limits_{i \in \mathcal{S}} \varphi_i \bm{\bar{g}}_i - \bm{g}_0  \right\|+  \left\| \bm{g}_0  -  \nabla F(\bm{w})  \right\| \\
	& \stackrel{(a)} \le  \left\|  \sum\limits_{i \in \mathcal{S}} \varphi_i \bm{\bar{g}}_i + \bm{g}_0  \right\| + \left\| \bm{g}_0  -  \nabla F(\bm{w})  \right\|  \\
	&\le \left\|   \sum\limits_{i \in \mathcal{S}} \varphi_i \bm{\bar{g}}_i  \right\| + \left\| \bm{g}_0   \right\| + \left\| \bm{g}_0  -  \nabla F(\bm{w})  \right\| \\
	& \le \sum\limits_{i \in \mathcal{S}}  \varphi_i \left\| \bm{\bar{g}}_i \right\| + \left\| \bm{g}_0   \right\| + \left\| \bm{g}_0  -  \nabla F(\bm{w})  \right\| \\
	& \stackrel{(b)}= \sum\limits_{i \in \mathcal{S}}  \varphi_i \left\| \bm{g}_0 \right\| + \left\| \bm{g}_0   \right\| + \left\| \bm{g}_0  -  \nabla F(\bm{w})  \right\| \\
	& \stackrel{(c)}=  2 \left\| \bm{g}_0   \right\| + \left\| \bm{g}_0  -  \nabla F(\bm{w})  \right\| \\
	&= 2 \left\| \bm{g}_0  - \nabla F(\bm{w}) + \nabla F(\bm{w})  \right\|  + \left\| \bm{g}_0  -  \nabla F(\bm{w})  \right\| \\
	&\le 2 \left\| \bm{g}_0  - \nabla F(\bm{w}) \right\| + 2 \left\| \nabla F(\bm{w}) \right\| + \left\| \bm{g}_0  -  \nabla F(\bm{w})  \right\| \\
	&= 3 \left\| \bm{g}_0  - \nabla F(\bm{w}) \right\| +  2 \left\| \nabla F(\bm{w}) \right\|, 
	\end{align}
	where $(a)$ is because $\langle \bm{\bar{g}}_i, \bm{g}_0 \rangle > 0$ for $i \in \mathcal{S}$; $(b)$ is because FLTrust normalizes the local model updates to have the same magnitude as the server model update, i.e., $\left\| \bm{\bar{g}}_i  \right\| = \left\| \bm{g}_0   \right\|$; and $(c)$ is because $\sum\limits_{i \in \mathcal{S}}\varphi_i = 1$.
\end{proof}

\begin{lem}
	\label{lemma_2}
	Assume Assumption~\ref{assumption_1} holds. If we set the learning rate as $\alpha = \mu/(2L^2)$, then we have the following in any global iteration $t\ge 1$:
	\begin{align}
	&\left\| \bm{w}^{t-1} - \bm{w} ^* - \alpha \nabla F(\bm{w}^{t-1}) \right\| \nonumber \\ 
	&\le \sqrt {1 - {\mu^2}/(4L^2)} \left\| \bm{w}^ {t-1}  - \bm{w} ^* \right\|. \nonumber
	\end{align}
\end{lem}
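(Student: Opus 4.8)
The plan is to prove the inequality in squared form and exploit that $\bm{w}^*$ minimizes $F$, hence is a stationary point with $\nabla F(\bm{w}^*) = \bm{0}$. Writing $\bm{w} = \bm{w}^{t-1}$ for brevity, this lets me rewrite the gradient-descent residual as a difference of gradients,
\[
\left\| \bm{w} - \bm{w}^* - \alpha \nabla F(\bm{w}) \right\|^2 = \left\| (\bm{w} - \bm{w}^*) - \alpha\bigl( \nabla F(\bm{w}) - \nabla F(\bm{w}^*) \bigr) \right\|^2,
\]
and then expand the square into three pieces: the leading term $\left\| \bm{w} - \bm{w}^* \right\|^2$, the cross term $-2\alpha\left\langle \bm{w} - \bm{w}^*, \nabla F(\bm{w}) - \nabla F(\bm{w}^*)\right\rangle$, and the quadratic remainder $\alpha^2 \left\| \nabla F(\bm{w}) - \nabla F(\bm{w}^*)\right\|^2$.

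The two parts of Assumption~\ref{assumption_1} then control these pieces separately. The $L$-Lipschitz continuity of $\nabla F$ bounds the remainder directly by $\alpha^2 L^2 \left\| \bm{w} - \bm{w}^* \right\|^2$. The cross term is the crux, and I would handle it by first extracting a strong-monotonicity inequality from the $\mu$-strong convexity assumption: applying the strong-convexity inequality once to the ordered pair $(\bm{w}, \bm{w}^*)$ and once with the two points interchanged, then adding the two inequalities, cancels all loss-value terms and yields $\left\langle \bm{w} - \bm{w}^*, \nabla F(\bm{w}) - \nabla F(\bm{w}^*)\right\rangle \ge \mu \left\| \bm{w} - \bm{w}^* \right\|^2$. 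This is the step I expect to require the most care; once it is established the cross term contributes at most $-2\alpha\mu \left\| \bm{w} - \bm{w}^* \right\|^2$ and everything else is mechanical bookkeeping.

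Collecting the three estimates gives
\[
\left\| \bm{w} - \bm{w}^* - \alpha \nabla F(\bm{w}) \right\|^2 \le \bigl( 1 - 2\alpha\mu + \alpha^2 L^2 \bigr)\left\| \bm{w} - \bm{w}^* \right\|^2,
\]
and substituting the prescribed step size $\alpha = \mu/(2L^2)$ collapses the contraction factor to $1 - 3\mu^2/(4L^2)$. To finish, I would note that any $\mu$-strongly convex function with $L$-Lipschitz gradient satisfies $\mu \le L$, so this factor lies in the interval $(0,1)$ and its square root is well defined; since $1 - 3\mu^2/(4L^2) \le 1 - \mu^2/(4L^2)$, taking square roots of both nonnegative sides yields the claimed bound $\sqrt{1 - \mu^2/(4L^2)}\,\left\| \bm{w}^{t-1} - \bm{w}^* \right\|$. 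The gap between the sharp factor $3\mu^2/(4L^2)$ and the stated $\mu^2/(4L^2)$ is simply a convenient, slightly loose choice of constant.
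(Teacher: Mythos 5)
Your proof is correct and follows essentially the same route as the paper's: use $\nabla F(\bm{w}^*)=\bm{0}$ to rewrite the residual as a difference of gradients, expand the square, bound the cross term via monotonicity from strong convexity and the quadratic term via the $L$-Lipschitz gradient, then substitute $\alpha = \mu/(2L^2)$. The one small difference is that you derive strong monotonicity by applying the strong-convexity inequality symmetrically in both directions (coefficient $\mu$, giving the tighter contraction factor $1-3\mu^2/(4L^2)$), whereas the paper pairs strong convexity with plain convexity (coefficient $\mu/2$, giving exactly the stated $1-\mu^2/(4L^2)$); your relaxation from the tighter factor to the stated one is valid, so the proof goes through.
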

\begin{proof}
	Since $\nabla F({\bm{w} ^*}) = 0$, we have the following:
	\begin{align} 
	& {\left\| \bm{w}^{t-1} - \bm{w} ^* - \alpha \nabla F(\bm{w}^{t-1}) \right\|^2} \\
	&= {\left\| \bm{w}^{t-1} - \bm{w} ^* - \alpha \left( {\nabla F(\bm{w}^{t-1}) - \nabla F({\bm{w} ^*})} \right) \right\|^2} \\
	&= {\left\| \bm{w}^{t-1} - \bm{w} ^* \right\|^2} + \alpha ^2{\left\| {\nabla F(\bm{w}^{t-1}) - \nabla F({\bm{w} ^*})} \right\|^2}  \nonumber \\
	& \quad - 2\alpha \left\langle {\bm{w}^{t-1}  - \bm{w} ^*,\nabla F(\bm{w}^{t-1}) - \nabla F({\bm{w}^*})} \right\rangle.
	\label{w_to_prove}
	\end{align}	
	
	By Assumption~\ref{assumption_1}, we have:
	\begin{align}
	& \left\| {\nabla F(\bm{w}^{t-1}) - \nabla F({\bm{w}^*})} \right\| \le L \left\| {\bm{w}^{t-1} - \bm{w} ^*} \right\|, \label{L_strongly_convex_equ_L}\\
	& F({\bm{w} ^*}) + \left\langle {\nabla F({\bm{w} ^*}), \bm{w}^{t-1} - \bm{w} ^*} \right\rangle  \le F(\bm{w}^{t-1}) \nonumber\\
	& \qquad -\frac{\mu}{2}{\left\| {\bm{w}^{t-1}  - \bm{w} ^*} \right\|^2}, \label{L_strongly_convex_equ} \\
	& F(\bm{w}^{t-1}) + \left\langle {\nabla F(\bm{w}^{t-1}), \bm{w} ^*  - \bm{w}^{t-1}} \right\rangle  \le F({\bm{w} ^*}). \label{convex_equ}
	\end{align}
	
	Summing up inequalities~(\ref{L_strongly_convex_equ}) and~(\ref{convex_equ}), we have:
	\begin{align}
	&\left\langle {\bm{w} ^* - \bm{w}^{t-1},\nabla F(\bm{w}^{t-1}) - \nabla F(\bm{w} ^*)} \right\rangle  \nonumber\\
	& \le - \frac{\mu}{2}{\left\| {\bm{w}^{t-1}  - \bm{w} ^*} \right\|^2}.
	\label{L_strongly_convex_equ_with_convex_equ}
	\end{align}
	
	Substituting inequalities~(\ref{L_strongly_convex_equ_L}) and (\ref{L_strongly_convex_equ_with_convex_equ}) into~(\ref{w_to_prove}), we have:
	\begin{align}
	&{\left\| \bm{w}^{t-1} - \bm{w} ^* - \alpha \nabla F(\bm{w}^{t-1}) \right\|^2} \nonumber \\
	&\le \left( 1 + \alpha^2 L^2 - \alpha \mu \right){\left\| \bm{w}^{t-1}  - \bm{w} ^* \right\|^2}.
	\end{align}
	
	By choosing $\alpha = \mu/(2L^2)$, we have:
	\begin{align}
	&{\left\| \bm{w}^{t-1} - \bm{w} ^* - \alpha \nabla F(\bm{w}^{t-1}) \right\|^2}  \nonumber \\
	& \le \left( 1 - \mu^2/(4L^2)  \right){\left\| \bm{w}^{t-1}  - \bm{w} ^* \right\|^2},
	\end{align}
	which concludes the proof.
\end{proof}

\begin{lem}
	\label{lemma_3}
	Suppose Assumption~\ref{assumption_2} holds. For any $\delta  \in (0,1)$ and any $\bm{w} \in \Theta$, we let ${\Delta _1} = \sqrt 2 {\sigma_1 } \sqrt {(d\log 6 + \log (3/ \delta))/ {\left| {D_0} \right|}}$ and ${\Delta _3} = \sqrt 2 {\sigma_2} \sqrt {(d\log 6 + \log (3 / \delta))/ {\left| {D_0} \right|}}$. 
	If ${\Delta _1} \le {\sigma_1^2} / {\gamma_1}$ and ${\Delta _3} \le {\sigma_2^2} / {\gamma_2}$, then we have:
	\begin{align}
	& \text{Pr} \left\{ {\left\| {\frac{1}{\left| D_0 \right|}\sum\limits_{X_i \in {D_0}} {\nabla f({X_i},{\bm{w}^*})}  - \nabla F({\bm{w}^*})} \right\| \ge 2{\Delta _1}} \right\} \le \frac{\delta }{3}, \nonumber
	\\
	& \text{Pr} \left\{ \left\| {\frac{1}{{\left| {{D_0}} \right|}}\sum\limits_{X_i \in {D_0}} {\nabla h({X_i},{\bm{w}})}  - {\mathbb{E}\left[ {h(X,\bm{w})} \right]}} \right\| \right. \nonumber \\
	& \left. \qquad 
	\ge 2{\Delta _3} {\left\| {\bm{w} - {\bm{w}^*}} \right\|}  \vphantom{\frac{1}{\left| {D_0} \right|} \sum\limits_{X_i \in {D_0}}}  \right\} 
	\le \frac{\delta }{3}.  \nonumber
	\end{align}
\end{lem}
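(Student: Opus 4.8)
The plan is to prove both tail bounds with a single template: each left-hand side is the norm of an average of $|D_0|$ i.i.d.\ random vectors, so I would (i) reduce the norm to a supremum of inner products over the unit sphere $\bm{B}$ via a covering argument, (ii) control the inner product along each fixed direction by a Chernoff/Bernstein bound derived from the sub-exponential moment generating function (MGF) bounds in Assumption~\ref{assumption_2}, and (iii) take a union bound over a finite net. The independence needed to factor the MGF of the average comes from Assumption~\ref{assumption_3}, which makes the summands i.i.d.

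For the first inequality, I would first observe that at the optimum $\nabla F(\bm{w}^*) = \mathbb{E}\left[\nabla f(X,\bm{w}^*)\right] = 0$, so $\bm{z} \triangleq \frac{1}{|D_0|}\sum_{X_i \in D_0}\nabla f(X_i,\bm{w}^*) - \nabla F(\bm{w}^*)$ is already a centered average and the MGF bound $\mathbb{E}\left[\exp(\xi\langle\nabla f(X,\bm{w}^*),\bm{v}\rangle)\right]\le e^{\sigma_1^2\xi^2/2}$ from Assumption~\ref{assumption_2} applies directly. Fixing a unit vector $\bm{v}$, the MGF of $\langle\bm{z},\bm{v}\rangle$ factorizes across the $|D_0|$ independent samples, and a standard Chernoff argument yields $\text{Pr}\left\{\langle\bm{z},\bm{v}\rangle\ge\Delta_1\right\}\le\exp\left(-|D_0|\Delta_1^2/(2\sigma_1^2)\right)$; here the hypothesis $\Delta_1\le\sigma_1^2/\gamma_1$ is exactly what guarantees that the minimizing dual parameter $\xi^\star=|D_0|\Delta_1/\sigma_1^2$ stays within the admissible range $|\xi|\le|D_0|/\gamma_1$ of the sub-exponential MGF, i.e.\ that we remain in the sub-Gaussian branch of the tail. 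I would then invoke a $1/2$-net $V_{1/2}$ of $\bm{B}$ with $|V_{1/2}|\le 6^d$ together with the elementary bound $\left\|\bm{z}\right\|\le 2\max_{\bm{v}\in V_{1/2}}\langle\bm{z},\bm{v}\rangle$, so that $\text{Pr}\left\{\left\|\bm{z}\right\|\ge 2\Delta_1\right\}\le 6^d\exp\left(-|D_0|\Delta_1^2/(2\sigma_1^2)\right)$. Substituting the definition $\Delta_1=\sqrt{2}\sigma_1\sqrt{(d\log 6+\log(3/\delta))/|D_0|}$ makes $|D_0|\Delta_1^2/(2\sigma_1^2)=d\log 6+\log(3/\delta)$, so the factor $6^d$ cancels $e^{d\log 6}$ and the bound collapses to exactly $\delta/3$.

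The second inequality follows from the identical template applied to the recentered gradient differences $h(X_i,\bm{w})-\mathbb{E}\left[h(X,\bm{w})\right]$, using the second MGF bound of Assumption~\ref{assumption_2} for the quantity $\langle h(X,\bm{w})-\mathbb{E}\left[h(X,\bm{w})\right],\bm{v}\rangle/\left\|\bm{w}-\bm{w}^*\right\|$ with parameters $\sigma_2,\gamma_2$. The only bookkeeping difference is that the normalization by $\left\|\bm{w}-\bm{w}^*\right\|$ rides through the whole argument, converting the threshold $2\Delta_3$ into $2\Delta_3\left\|\bm{w}-\bm{w}^*\right\|$, and the condition $\Delta_3\le\sigma_2^2/\gamma_2$ again keeps the Chernoff optimizer in the valid MGF range.

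I expect the main obstacle to be the covering step and the attendant constant bookkeeping rather than the tail bound itself: one must use a net fine enough that $\left\|\bm{z}\right\|$ is captured up to the factor $2$ (hence the $2\Delta$ thresholds), yet with cardinality bounded by exactly $6^d$ so that it cancels the $d\log 6$ term planted in the definitions of $\Delta_1$ and $\Delta_3$. A secondary point requiring care is verifying that the regime condition $\Delta\le\sigma^2/\gamma$ is precisely the hypothesis under which the sub-exponential MGF yields a sub-Gaussian (rather than exponential) tail; everything else — centering via $\nabla F(\bm{w}^*)=0$, MGF factorization via i.i.d.\ sampling, and the union bound — is routine.
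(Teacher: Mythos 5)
Your proposal is correct and follows essentially the same route as the paper's proof: a $1/2$-net of the unit sphere with $\log N_{1/2}\le d\log 6$, a Chernoff bound from the sub-exponential MGF in Assumption~\ref{assumption_2} (with the condition $\Delta\le\sigma^2/\gamma$ keeping the optimizer in the sub-Gaussian regime), a union bound over the net, and substitution of the definition of $\Delta_1$ to collapse the bound to $\delta/3$, with the second inequality handled identically after normalizing by $\left\|\bm{w}-\bm{w}^*\right\|$. The only cosmetic difference is that the paper invokes the covering and concentration facts by citation rather than deriving the Chernoff step explicitly, which you do; both are sound.
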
	
\begin{proof}
	We  prove the first inequality of Lemma~\ref{lemma_3}. The proof of the second inequality is similar, and we omit it  for brevity.
	Let $\bm{V} = \{ {\bm{v}_{1,}}, \cdots, {\bm{v}_{N_{\frac{1}{2}}}\}}$ be an $\frac{1}{2}$-cover of the unit sphere $\bm{B}$. It is shown in~\cite{ChenPOMACS17,vershynin2010introduction} that we have $\log {N}_{\frac{1}{2}} \le d\log 6$ and the following:
	\begin{align}
	\label{vershynin2010introduction_equ}
	& \left\| {\frac{1}{{\left| {{D_0}} \right|}}\sum\limits_{X_i \in {D_0}} {\nabla f({X_i},{\bm{w}^*})}  - \nabla F({\bm{w}^*})} \right\| \le \nonumber \\
	&  2\mathop {\sup }\limits_{\bm{v} \in \bm{V}} \left\{ {\left\langle {\frac{1}{{\left| {{D_0}} \right|}}\sum\limits_{X_i \in {D_0}} {\nabla f({X_i},{\bm{w}^*})}  - \nabla F({\bm{w}^*}),\bm{v}} \right\rangle } \right\}. 
	\end{align}
	
	According to the concentration inequalities for sub-exponential random variables~\cite{wainwright2019high}, when Assumption~\ref{assumption_2} and  condition $\Delta _1 \le \sigma_1 ^2 / \gamma_1$ hold, we have: 
	\begin{align}
	&\textit{Pr} \left\{ {\left\langle {\frac{1}{{\left| {{D_0}} \right|}}\sum\limits_{X_i \in {D_0}} {\nabla f({X_i},{\bm{w}^*})}  - \nabla   F({\bm{w}^*}),\bm{v}} \right\rangle  \ge {\Delta _1}} \right\} \nonumber \\
	& \le \exp \left( { - \left| {{D_0}} \right|\Delta _1^2} \mathord{\left/\right.	\kern-\nulldelimiterspace} (2\sigma_1 ^2) \right).
	\end{align}
	
	Taking the union bound over all vectors in $\bm{V}$ and combining it with inequality (\ref{vershynin2010introduction_equ}), we have:
	\begin{align}
	\label{last_equ_lemma3}
	& \textit{Pr} \left\{ {\left\| {\frac{1}{{\left| {{D_0}} \right|}}\sum\limits_{X_i \in {D_0}} {\nabla f({X_i},{\bm{w}^*})}  - \nabla F({\bm{w}^*})} \right\| \ge 2{\Delta _1}} \right\}  \nonumber \\
	& \le \exp \left( { - \left| {{D_0}} \right|\Delta _1^2} / (2\sigma_1 ^2) +d\log 6 \right).
	\end{align}
	
	We conclude the proof by letting ${\Delta _1} = \sqrt 2 {\sigma_1 } \sqrt {(d\log 6 + \log (3/ \delta))/ {\left| {{D_0}} \right|}}$ in (\ref{last_equ_lemma3}). 
\end{proof}

\begin{lem}
	\label{lemma_4}
	Suppose Assumptions~\ref{assumption_1}-\ref{assumption_3} hold and $\Theta  \subset \left\{ {\bm{w}:\left\| \bm{w} - {\bm{w}^*} \right\| \le r\sqrt d } \right\}$ holds for some positive parameter $r$.  
	Then, for any  $\delta  \in (0,1)$, if $\Delta _1 \le \sigma_1 ^2 / \gamma_1$ and $\Delta_2 \le \sigma_2 ^2 / \gamma_2$, we have the following for any $\bm{w} \in  \Theta$:
	\begin{align}
	\text{Pr} \left\{ {\left\| \bm{g}_0 - \nabla F(\bm{w}) \right\| \le 8\Delta_2 {\left\| {\bm{w} - \bm{w}^*} \right\|} + 4\Delta_1} \right\} \ge 1 - \delta,  \nonumber
	\end{align}
	where $\Delta_2 = {\sigma_2}\sqrt {\frac{2}{{\left| D_0 \right|}}} \sqrt {K_1 + K_2}$, $K_1 = d\log \frac{18L_2}{\sigma_2}$, $K_2 = \frac{1}{2}d\log \frac{\left| {D_0} \right|}d + \log \left( \frac{{6\sigma_2^2r\sqrt {\left| D_0 \right|} }}{\gamma_2{\sigma_1}\delta } \right)$,
	${L_2} = \max \left\{ {L,{L_1}} \right\}$, and $\left| D_0 \right|$ is the size of the root dataset.
\end{lem}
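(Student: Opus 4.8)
The plan is to first observe that, under the proof's standing convention $R_l = 1$, $\beta = 1$, the server update $\bm{g}_0$ coincides with the empirical gradient $\frac{1}{|D_0|}\sum_{X_i\in D_0}\nabla f(X_i,\bm{w})$ on the root dataset, so the claim is really a bound on how far this empirical gradient strays from the population gradient $\nabla F(\bm{w})$, and crucially it must hold \emph{uniformly} over all $\bm{w}\in\Theta$ (the probability sits outside the ``for any $\bm{w}$'' quantifier). I would begin by inserting the optimum $\bm{w}^*$ and using $\nabla F(\bm{w}^*)=0$ together with $\mathbb{E}[h(D,\bm{w})] = \nabla F(\bm{w})$ (from Assumption~\ref{assumption_3} and $\mathbb{E}[\nabla f(D,\bm{w})] = \nabla F(\bm{w})$):
\begin{align}
\bm{g}_0 - \nabla F(\bm{w})
&= \big( h(D_0,\bm{w}) - \mathbb{E}[h(D,\bm{w})] \big) \nonumber \\
&\quad + \Big( \tfrac{1}{|D_0|}\sum_{X_i\in D_0}\nabla f(X_i,\bm{w}^*) - \nabla F(\bm{w}^*) \Big). \nonumber
\end{align}
The triangle inequality then splits the target into a ``gradient-difference'' term and an ``anchor'' term at $\bm{w}^*$. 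The anchor term is exactly what the first inequality of Lemma~\ref{lemma_3} controls, giving $\le 2\Delta_1$ with probability at least $1-\delta/3$.

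The core of the argument is the uniform control of $\|h(D_0,\bm{w}) - \mathbb{E}[h(D,\bm{w})]\|$ over $\bm{w}\in\Theta$. The second inequality of Lemma~\ref{lemma_3} supplies only a \emph{pointwise} bound of order $\|\bm{w}-\bm{w}^*\|$, so I would promote it to a uniform bound with an $\epsilon$-net argument. Because $\Theta\subset\{\bm{w}:\|\bm{w}-\bm{w}^*\|\le r\sqrt d\}$ lies inside a Euclidean ball, it admits a finite $\epsilon$-cover $\{\bm{w}_1,\dots,\bm{w}_{N_\epsilon}\}$ with $\log N_\epsilon \lesssim d\log(r\sqrt d/\epsilon)$; applying the pointwise concentration at each net point with confidence rescaled by $N_\epsilon$ and union-bounding produces exactly the covering-entropy factor that upgrades the pointwise scale into the larger $\Delta_2$, the constants $K_1 = d\log(18L_2/\sigma_2)$ and $K_2$ recording this entropy and the rescaled confidence. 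For an off-net $\bm{w}$, I would pass to its nearest net point $\bm{w}_l$ (so $\|\bm{w}-\bm{w}_l\|\le\epsilon$) and absorb the discretization error through Lipschitz continuity: $\|h(D_0,\bm{w}) - h(D_0,\bm{w}_l)\| = \|\nabla f(D_0,\bm{w}) - \nabla f(D_0,\bm{w}_l)\| \le L_1\epsilon$ by the probabilistic Lipschitz bound of Assumption~\ref{assumption_1} (holding with probability $\ge 1-\delta/3$), and $\|\mathbb{E}[h(D,\bm{w})] - \mathbb{E}[h(D,\bm{w}_l)]\| = \|\nabla F(\bm{w}) - \nabla F(\bm{w}_l)\| \le L\epsilon$, both combining through $L_2 = \max\{L,L_1\}$.

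Finally I would union-bound the three failure events — anchor concentration, net-wise concentration of $h$, and the probabilistic Lipschitz estimate, each of probability at most $\delta/3$ — to obtain the overall confidence $1-\delta$, and then collect constants to reach $8\Delta_2\|\bm{w}-\bm{w}^*\| + 4\Delta_1$. The step I expect to be the main obstacle is calibrating the net scale $\epsilon$, because the pointwise bound is \emph{multiplicative} in $\|\bm{w}-\bm{w}^*\|$ and hence vanishes near the optimum, whereas the Lipschitz discretization error is additive in $\epsilon$. Choosing $\epsilon$ too small inflates $N_\epsilon$ (and hence $\Delta_2$), while choosing it too large leaves an uncontrolled residual where $\bm{w}$ is close to $\bm{w}^*$; reconciling these is what fixes $\epsilon$ at a scale of order $\sigma_2/L_2$ (the source of the $18L_2/\sigma_2$ inside $K_1$) refined by the statistical resolution $\sqrt{d/|D_0|}$ (the source of the $\tfrac12 d\log(|D_0|/d)$ inside $K_2$). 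Carefully tracking how the net-point bound, the covering slack, and the discretization error accumulate against both the multiplicative and additive pieces is precisely what forces the loose prefactors $8$ and $4$ in the final estimate; everything else is routine bookkeeping.
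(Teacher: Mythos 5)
Your skeleton matches the paper's: write $\bm{g}_0-\nabla F(\bm{w})$ via the anchor at $\bm{w}^*$ (controlled by the first inequality of Lemma~\ref{lemma_3}) plus the centered gradient difference $h$, upgrade pointwise concentration to a uniform bound by covering, absorb the discretization error through $L_2=\max\{L,L_1\}$, and union-bound three events of probability $\delta/3$ each. But there is a genuine gap in the covering step, and you put your finger on it yourself: a \emph{single} net scale $\epsilon$ cannot reconcile the multiplicative net-point bound with the additive discretization error \emph{and} deliver the stated $\Delta_2$. Whatever single $\epsilon$ you pick, the entropy of a net of the whole ball of radius $r\sqrt d$ is of order $d\log(3r\sqrt d/\epsilon)$, which carries a $d\log r$ term; in the lemma's $\Delta_2$, however, $r$ appears only inside one logarithm (in $K_2$), \emph{not} multiplied by $d$. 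Moreover, your proposed scale $\epsilon\sim(\sigma_2/L_2)\sqrt{d/|D_0|}$ leaves an additive discretization error $2L_2\epsilon\sim\sigma_2\sqrt{d/|D_0|}$, which near $\bm{w}^*$ (where the target bound degenerates to $4\Delta_1\sim\sigma_1\sqrt{d/|D_0|}$) can only be absorbed if $\sigma_2\lesssim\sigma_1$ --- a relation not implied by the assumptions. So your argument proves a statement of the same shape but with a strictly larger constant in place of $\Delta_2$; it does not establish the lemma as stated.

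The paper closes exactly this hole with a multi-resolution peeling argument (following Chen et al.). It covers a nested family of balls $\Theta_\ell=\{\bm{w}:\|\bm{w}-\bm{w}^*\|\le\tau\ell\}$, $\ell=1,\dots,\ell^*=\lceil r\sqrt d/\tau\rceil$, each at its own scale $\varepsilon_\ell=\frac{\sigma_2\tau\ell}{L_2}\sqrt{d/|D_0|}$ \emph{proportional to the ball radius}. This has four consequences: (i) the per-shell entropy $d\log(3\tau\ell/\varepsilon_\ell)=d\log(3L_2/\sigma_2)+\frac12 d\log(|D_0|/d)$ is independent of $\ell$ and of $r$ --- this is precisely $K_1$ plus the first piece of $K_2$; (ii) $r$ enters only through the union bound over the $\ell^*\approx 2\sigma_2^2 r\sqrt{|D_0|}/(\gamma_2\sigma_1)$ shells, i.e., a single logarithm --- the second piece of $K_2$; (iii) the discretization error $2L_2\varepsilon_\ell=2\sigma_2\tau\ell\sqrt{d/|D_0|}\le 2\Delta_2\tau\ell$ is itself multiplicative in the shell radius, so it folds into the $\Delta_2\|\bm{w}-\bm{w}^*\|$ term rather than having to fit under the $\Delta_1$ budget; and (iv) the innermost shell $\ell=1$, where the multiplicative bound is vacuous, is handled by the calibrated choice $\tau=\frac{\gamma_2\sigma_1}{2\sigma_2^2}\sqrt{d/|D_0|}$, which together with the hypothesis $\Delta_2\le\sigma_2^2/\gamma_2$ gives $4\Delta_2\tau\le 2\sigma_1\sqrt{d/|D_0|}\le 2\Delta_1$; the prefactor $8$ then comes from $2(\ell-1)\ge\ell$ for $\ell\ge 2$, giving the final $8\Delta_2\|\bm{w}-\bm{w}^*\|+4\Delta_1$. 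The missing ideas in your proposal are thus the radius-proportional multi-scale cover and the $\tau$ calibration; without them the $d\log r$ entropy and the uncontrolled near-optimum residual cannot both be avoided.
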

\begin{proof}
	Our proof is mainly based on the $\varepsilon$-net argument~\cite{vershynin2010introduction} and~\cite{ChenPOMACS17}.
	We let $\tau = \frac{\gamma_2\sigma_1}{2\sigma_2^2}\sqrt{\frac{d}{\left| D_0 \right|}}$ and $\ell^*$ be an integer that satisfies $\ell^* = {\left\lceil r\sqrt d/\tau \right\rceil}$.
	For any integer $1 \le \ell \le \ell^*$, we define 
	$
	{\Theta_\ell } = \left\{ {\bm{w}:\left\| {\bm{w} - {\bm{w}^*}} \right\| \le \tau\ell } \right\}.
	$
	Given an integer $\ell$, we let ${\bm{w}_1},\cdots,{\bm{w}_{{N}_{\varepsilon_\ell }}}$ be an $\varepsilon_\ell$-cover of ${\Theta _\ell }$, where $\varepsilon_\ell = \frac{{\sigma_2}\tau \ell }{L_2}\sqrt {\frac{d}{{\left| {D_0} \right|}}} $ and ${L_2} = \max \left\{ {L,{L_1}} \right\}$.
	 From~\cite{vershynin2010introduction}, we know that $\log {{N}_{{\varepsilon_\ell }}} \le d\log \left( {\frac{{3\tau\ell }}{\varepsilon_\ell }} \right)$. 
	 For any $\bm{w} \in {\Theta_\ell}$, there exists a $j_\ell$ ($1 \le j_\ell \le {{N}_{\varepsilon_\ell}}$) such that:
	 \begin{align}
	 \label{w_j_equ}
	 \left\| {\bm{w} - \bm{w}_{j_\ell} } \right\| \le {\varepsilon_\ell }.
	 \end{align}
	According to the triangle inequality, we have:
	\begin{align}
	& \left\| {\frac{1}{{\left| {D_0} \right|}}\sum\limits_{X_i \in {D_0}} {\nabla f({X_i},\bm{w}) - \nabla F(\bm{w})} } \right\| 
	 \le 
	 \left\| {\nabla F(\bm{w}) - \nabla F({\bm{w}_{j_\ell} })} \right\|  \nonumber \\
	&  \quad + \left\| {\frac{1}{{\left| {D_0} \right|}}\sum\limits_{X_i \in D_0} {\left( {\nabla f({X_i},\bm{w}) - \nabla f({X_i},\bm{w}_{j_\ell})} \right)} } \right\| 	& \nonumber \\
	& \quad + \left\| {\frac{1}{{\left| {D_0} \right|}}\sum\limits_{X_i \in {D_0}} {\nabla f({X_i},{\bm{w}_{j_\ell}}) - \nabla F(\bm{w}_{j_\ell})} } \right\| .
	\end{align}	

	According to Assumption~\ref{assumption_1} and inequality~(\ref{w_j_equ}), we have:
	\begin{align}
	\label{T_1_equ}
	\left\| {\nabla F(\bm{w}) - \nabla F({\bm{w}_{j_\ell}})} \right\| \le L\left\| {\bm{w} - \bm{w}_{j_\ell}} \right\| \le L{\varepsilon_\ell }
	\end{align}
	
	Next, we define an event $\mathcal{E}_1$ as follows:
	\begin{align}
	\mathcal{E}_1 = \left\{ {\mathop {\sup }\limits_{\bm{w},\widehat{\bm{w}} \in \Theta :\bm{w} \ne \widehat{\bm{w}}} \frac{{\left\| {\nabla f(X,\bm{w}) - \nabla f(X,\widehat{\bm{w}})} \right\|}}{{\left\| {\bm{w} -\widehat{\bm{w}}} \right\|}} \le {L_1}} \right\} \nonumber.
	\end{align}
	
	According to Assumption~\ref{assumption_2}, we have $\textit{Pr} \left\{ \mathcal{E}_1 \right\} \ge 1 - \frac{\delta }{3}$. Moreover, we have the following:
	\begin{align}
	\label{T_2_equ}
	& \mathop {\sup }\limits_{\bm{w} \in \Theta } \left\| {\frac{1}{{\left| {{D_0}} \right|}}\sum\limits_{X_i \in {D_0}} {\left( {\nabla f({X_i},\bm{w}) - \nabla f({X_i},{\bm{w}_{j_\ell}})} \right)} } \right\| \nonumber \\
	& \le  L_1 \left\| {\bm{w} - \bm{w}_{j_\ell}} \right\| \le  L_1{\varepsilon_\ell}.
	\end{align}
	
	According to the triangle inequality, we have:
	\begin{equation}
	\begin{aligned}[b]
	& \left\| {\frac{1}{{\left| {D_0} \right|}}\sum\limits_{X_i \in {D_0}} {\nabla f({X_i},{\bm{w}_{j_\ell}}) - \nabla F({\bm{w}_{j_\ell}})} } \right\|   \\
	& \le \left\| {\frac{1}{{\left| {D_0} \right|}}\sum\limits_{X_i \in {D_0}} {\nabla f({X_i},{\bm{w}^*}) - \nabla F({\bm{w}^*})} } \right\|   \\
	&  \quad +  \left\| \frac{1}{{\left| {D_0} \right|}}\sum\limits_{X_i \in {D_0}} {\left( {\nabla f({X_i},{\bm{w}_{j_\ell}}) - \nabla f({X_i},\bm{w}^*)} \right)}   
	\right. \\
	& \qquad \qquad \left. - \left({ \nabla F({\bm{w}_{j_\ell}})- \nabla F({\bm{w}^*})} \right) 
	\vphantom{\frac{1}{{\left| {D_0} \right|}}\sum\limits_{X_i \in {D_0}} {\left( {\nabla f({X_i},{\bm{w}_{j_\ell}}) - \nabla f({X_i},\bm{w}^*)} \right)} }
	\right\|   \\
	& \stackrel{(a)} \le 	\left\| {\frac{1}{{\left| {{D_0}} \right|}}\sum\limits_{X_i \in {D_0}} {\nabla f({X_i},{\bm{w}^*}) - \nabla F({\bm{w}^*})} } \right\| 	 \\
	& \quad + \left\| {\frac{1}{{\left| {{D_0}} \right|}}\sum\limits_{X_i \in {D_0}} {h({X_i},{\bm{w}_{j_\ell}}) - \mathbb{E} \left[ {h\left( {X,{\bm{w}_{j_\ell}}} \right)} \right]} } \right\| ,
	\end{aligned}
	\end{equation}
	where $(a)$ is due to $\mathbb{E}\left[ {h\left( {X,\bm{w}} \right)} \right] = \nabla F(\bm{w}) - \nabla F({\bm{w}^*})$.

	We also define events $\mathcal{E}_2$ and $\mathcal{E}_3(\ell)$ as:
	\begin{equation}
	\begin{aligned}[b]
	& \mathcal{E}_2 =	\left\{  {\left\| {\frac{1}{\left| {{D_0}} \right|}  \sum\limits_{X_i \in {D_0}} {\nabla f({X_i},{\bm{w}^*})}  - \nabla F({\bm{w}^*})} \right\| \le 2{\Delta _1}}  \right\}, \\
	& \mathcal{E}_3(\ell) = \left\{ \mathop {\sup }\limits_{1 \le {j } \le {N}_{\varepsilon}}  {\left\| \frac{1}{{\left| {D_0} \right|}}\sum\limits_{X_i \in {D_0}} {h({X_i},\bm{w}_j) - \mathbb{E} \left[ {h\left( {X,\bm{w}_j} \right)} \right]}  \right\|  }  
	\right.  \\
	& \qquad \qquad  \left.	\le 2 \Delta _2 \tau \ell \vphantom { \mathop {\sup }\limits_{1 \le {j } \le {\left| D \right|}_{\varepsilon}} }  \right\}. \nonumber 
	\end{aligned}
	\end{equation}

	 According to Lemma~\ref{lemma_3} and~\cite{ChenPOMACS17}, $\Delta _1 \le \sigma_1 ^2 / \gamma_1$, and $\Delta _2 \le \sigma_2 ^2 / \gamma_2$, we have $\textit{Pr} \left\{ \mathcal{E}_2 \right\} \ge 1 - \frac{\delta }{3}$ and $\textit{Pr} \left\{ \mathcal{E}_3(\ell) \right\} \ge 1 - \frac{\delta }{3\ell^*}$.
	Therefore, on event $  \mathcal{E}_1 \cap  \mathcal{E}_2 \cap   \mathcal{E}_3(\ell)$, we have:
	\begin{align}
	&\mathop {\sup }\limits_{\bm{w} \in {\Theta _\ell }} \left\| {\frac{1}{{\left| {D_0} \right|}}\sum\limits_{X_i \in {D_0}} {\nabla f({X_i},\bm{w}) - \nabla F(\bm{w})} } \right\|   \nonumber\\
	& \le L\varepsilon_\ell + L_1 \varepsilon_\ell  + 2{\Delta _1} + 2{\Delta _2} \tau \ell,   \\ 
	& \stackrel{(a)} \le 2L_2\varepsilon_\ell + 2{\Delta _1} + 2{\Delta _2} \tau \ell
	 \stackrel{(b)} \le 4{\Delta _2}\tau\ell + 2{\Delta _1},
	\end{align}
	where $(a)$ holds because $(L + {L_1})\le 2{L_2}$ and $(b)$ is due to $\Delta _2 \ge \sigma_2\sqrt{d/{\left| {D_0} \right|}}$.
	
	Thus, according to the union bound, we have probability at least $1 - \delta$ that event  $  \mathcal{E}_1 \cap  \mathcal{E}_2 \cap (\cap_{\ell=1}^{ \ell^* } \mathcal{E}_3(\ell))$ holds.
	On event $  \mathcal{E}_1 \cap  \mathcal{E}_2 \cap (\cap_{\ell=1}^{ \ell^* } \mathcal{E}_3(\ell))$, for any $\bm{w} \in \Theta _{\ell^*}$, there exists an $1 \le \ell \le \ell^*$ such that $(\ell -1)\tau < \left\| {\bm{w} - \bm{w}^*} \right\| \le \ell \tau$ holds.
	If $\ell =1$, then we have:
	\begin{align}
	\label{ell_1_bound}
	& \left\| \frac{1}{{\left| {D_0} \right|}}\sum\limits_{X_i \in {D_0}} \nabla f({X_i},\bm{w}) - \nabla F(\bm{w}) \right\|   \nonumber\\
	& \le 4{\Delta _2} \tau + 2{\Delta _1}
	 \stackrel{(a)} \le 4 \Delta _1,
	\end{align}
	where $(a)$ holds because $\Delta _2 \le \sigma_2 ^2 / \gamma_2$ and $\Delta _1 \ge \sigma_1\sqrt{d/{\left| {D_0} \right|}}$.
	If $\ell \ge 2$, then we have $2(\ell -1) \ge \ell$ and the following:
	\begin{align}
	\label{ell_2_bound}
	& \left\| \frac{1}{{\left| {D_0} \right|}}\sum\limits_{X_i \in {D_0}} \nabla f({X_i},\bm{w}) - \nabla F(\bm{w}) \right\|   \nonumber\\
	& \le 8{\Delta _2} \left\| {\bm{w} - \bm{w}^*} \right\| + 2{\Delta _1}.
	\end{align}
	
	Combining inequalities~(\ref{ell_1_bound}) and~(\ref{ell_2_bound}), we have:
	\begin{align}
	&\mathop {\sup }\limits_{\bm{w} \in {\Theta _{\ell^*} }} \left\| \frac{1}{{\left| {D_0} \right|}}\sum\limits_{X_i \in {D_0}} \nabla f({X_i},\bm{w}) - \nabla F(\bm{w}) \right\|   \nonumber\\
	& \le 8{\Delta _2} \left\| {\bm{w} - \bm{w}^*} \right\| + 4{\Delta _1}.
	\end{align}

	We conclude the proof since $\Theta \subset \Theta _{\ell^*} $ and  $\bm{g}_0= \frac{1}{{\left| {D_0} \right|}} \sum\limits_{X_i \in {D_0}} \nabla f({X_i},\bm{w})$.   
\end{proof}

 \myparatight{Proof of Theorem~\ref{theorem_1}} With the lemmas above, we can prove Theorem~\ref{theorem_1} next. We have the following equations for the $t$th global iteration: 
\begin{align}
	&\left\| \bm{w}^t - \bm{w}^* \right\| \nonumber\\
	&= \left\| \bm{w}^{t-1} - \alpha \bm{g}^{t-1} - \bm{w}^* \right\|  \nonumber\\
	&= \left\| \bm{w}^{t-1} - \alpha \nabla F(\bm{w}^{t-1}) - \bm{w}^* +  \alpha \nabla F(\bm{w}^{t-1}) - \alpha \bm{g}^{t-1} \right\| \nonumber\\
	&\le \left\| \bm{w}^{t-1} - \alpha \nabla F(\bm{w}^{t-1}) - \bm{w}^* \right\| + \alpha \left\| \bm{g}^{t-1} - \nabla F(\bm{w}^{t-1}) \right\| \nonumber\\
	& \stackrel{(a)}\le \left\| \bm{w}^{t-1} - \alpha \nabla F(\bm{w}^{t-1}) - \bm{w}^* \right\| +  3\alpha \left\| \bm{g}_0^{t-1} - \nabla F(\bm{w}^{t-1}) \right\| \nonumber\\
	& \quad +  2\alpha \left\| \nabla F(\bm{w}^{t-1}) \right\|  \nonumber\\
	& \stackrel{(b)}= \underbrace{ \left\| \bm{w}^{t-1} - \alpha \nabla F(\bm{w}^{t-1}) - \bm{w}^* \right\| }_{A_1} +  3\alpha \underbrace{\left\| \bm{g}_0^{t-1} - \nabla F(\bm{w}^{t-1}) \right\|}_{A_2} \nonumber\\
	& \quad +  2\alpha  \underbrace{\left\| \nabla F(\bm{w}^{t-1}) - \nabla F(\bm{w}^*) \right\| }_{A_3} \nonumber\\
	& \stackrel{(c)} \le \sqrt {1 - {\mu^2}/(4L^2)} \left\| \bm{w}^{t-1} - \bm{w}^* \right\| + 2\alpha L \left\|  \bm{w}^{t-1} - \bm{w}^* \right\| \nonumber\\
	& \quad + 3\alpha \left( 8\Delta_2 \left\|  \bm{w}^{t-1} - \bm{w}^*  \right\| + 4\Delta_1 \right) \nonumber\\
	&= \left( \sqrt {1 - {\mu^2}/(4L^2)} + 24\alpha\Delta_2 + 2\alpha L  \right)  \left\|  \bm{w}^{t-1} - \bm{w}^*  \right\| \nonumber\\
	& \quad  + 12\alpha\Delta_1,
	\end{align}
where $(a)$ is obtained based on Lemma~\ref{lemma_1}; $(b)$ is due to $\nabla F(\bm{w}^*)=0$; and $(c)$ is obtained by plugging Lemma~\ref{lemma_2}, Lemma~\ref{lemma_4}, and Assumption~\ref{assumption_1} into $A_1$, $A_2$, and $A_3$, respectively. By recursively applying the inequality for each global iteration, we have:
\begin{align}
\left\| \bm{w}^t - \bm{w}^* \right\| 
\le
\left( 1- \rho \right)^t \left\| \bm{w}^0 - \bm{w}^* \right\| +  12\alpha\Delta_1/ \rho,
\end{align}
where $\rho = 1-  \left( \sqrt {1 - {\mu^2}/(4L^2)} + 24\alpha\Delta_2 + 2\alpha L  \right)$. Thus, we conclude the proof.
}
}


\end{document}